\documentclass[a4paper,UKenglish,cleveref, autoref, thm-restate]{lipics-v2019}
%This is a template for producing LIPIcs articles.
%See lipics-manual.pdf for further information.
%for A4 paper format use option "a4paper", for US-letter use option "letterpaper"
%for british hyphenation rules use option "UKenglish", for american hyphenation rules use option "USenglish"
%for section-numbered lemmas etc., use "numberwithinsect"
%for enabling cleveref support, use "cleveref"
%for enabling autoref support, use "autoref"
%for anonymousing the authors (e.g. for double-blind review), add "anonymous"
%for enabling thm-restate support, use "thm-restate"

\graphicspath{{./figures/}}%helpful if your graphic files are in another directory

\bibliographystyle{plainurl}% the mandatory bibstyle

\hideLIPIcs

\usepackage{microtype} % if unwanted, comment out
\usepackage{amsmath}
\usepackage{amssymb}
\usepackage{subcaption}
\usepackage{xspace}
\usepackage{tabu}
\usepackage{amsmath}
\usepackage{booktabs}
\usepackage{wrapfig}
\usepackage{amsbsy}
\usepackage{placeins}
\usepackage{xcolor}

\modulolinenumbers[5]

\newcommand{\crnum}{\textnormal{cr}}
\newcommand{\Gell}{\ensuremath{G_\ell}}
\newcommand{\edge}[2]{\ensuremath{#1#2}}
\newcommand{\kplanarsub}[1]{{#1\mathrm{-pl}}}
\newcommand{\quasisub}{{\mathrm{quasi}}}

\newcommand{\fanplsub}{{\mathrm{fan}}}
\newcommand{\fanfreesub}{{\mathrm{fan-free}}}
\newcommand{\kfanfreesub}[1]{{#1\mathrm{-fan-free}}}
\newcommand{\plconsub}{{\mathrm{pl-con}}}
\newcommand{\kskewsub}[1]{{\mathrm{skew-}#1}}
\newcommand{\kapexsub}[1]{{#1\mathrm{-apex}}}
\newcommand{\kgapsub}[1]{{#1\mathrm{-gap}}}
\newcommand{\klgridsub}[2]{{#1,#2\mathrm{-grid-free}}}
\newcommand{\racsub}{{\mathrm{RAC}}}
\newcommand{\etal}{\textit{et al.}\xspace}
\newcommand{\mypar}[1]{\smallskip\noindent{\sffamily\bfseries\boldmath#1}}

\title{Crossing Numbers of Beyond-Planar Graphs Revisited}

\titlerunning{Crossing Numbers of Beyond-Planar Graphs Revisited}

\author{Nathan van Beusekom, Irene Parada, and Bettina Speckmann}{TU Eindhoven, the Netherlands}{[n.a.c.v.beusekom|i.m.de.parada.munoz|b.speckmann]@tue.nl}{}{}
% \affil[1]{TU Eindhoven, the Netherlands\\ \texttt{n.a.c.v.beusekom@tue.nl}}

%\author{Irene Parada}{TU Eindhoven, the Netherlands}{i.m.de.parada.munoz@tue.nl}{}{}
% \affil[2]{TU Eindhoven, the Netherlands\\ \texttt{i.m.de.parada.munoz@tue.nl}}

%\author{Bettina Speckmann}{TU Eindhoven, the Netherlands}{b.speckmann@tue.nl}{}{}
% \affil[3]{TU Eindhoven, the Netherlands\\ \texttt{b.speckmann@tue.nl}}

\authorrunning{Nathan van Beusekom, Irene Parada, and  Bettina Speckmann}

\Copyright{John Q. Public and Joan R. Public} %TODO mandatory, please use full first names. LIPIcs license is "CC-BY";  http://creativecommons.org/licenses/by/3.0/

%\ccsdesc[100]{\textcolor{red}{Replace ccsdesc macro with valid one}} %TODO mandatory: Please choose ACM 2012 classifications from https://dl.acm.org/ccs/ccs_flat.cfm

\keywords{Crossing Number, Beyond Planar, Crossing Ratio} %TODO mandatory; please add comma-separated list of keywords

\category{} %optional, e.g. invited paper

\relatedversion{} %optional, e.g. full version hosted on arXiv, HAL, or other respository/website
%\relatedversion{A full version of the paper is available at \url{...}.}

\supplement{}%optional, e.g. related research data, source code, ... hosted on a repository like zenodo, figshare, GitHub, ...

%\funding{(Optional) general funding statement \dots}%optional, to capture a funding statement, which applies to all authors. Please enter author specific funding statements as fifth argument of the \author macro.

%\acknowledgements{I want to thank \dots}%optional

\nolinenumbers %uncomment to disable line numbering

%\hideLIPIcs  %uncomment to remove references to LIPIcs series (logo, DOI, ...), e.g. when preparing a pre-final version to be uploaded to arXiv or another public repository

%Editor-only macros:: begin (do not touch as author)%%%%%%%%%%%%%%%%%%%%%%%%%%%%%%%%%%
\EventEditors{John Q. Open and Joan R. Access}
\EventNoEds{2}
\EventLongTitle{42nd Conference on Very Important Topics (CVIT 2016)}
\EventShortTitle{CVIT 2016}
\EventAcronym{CVIT}
\EventYear{2016}
\EventDate{December 24--27, 2016}
\EventLocation{Little Whinging, United Kingdom}
\EventLogo{}
\SeriesVolume{42}
\ArticleNo{23}
%%%%%%%%%%%%%%%%%%%%%%%%%%%%%%%%%%%%%%%%%%%%%%%%%%%%%%

\begin{document}

\maketitle

\begin{abstract}
Graph drawing beyond planarity focuses on drawings of high visual quality for non-planar graphs which are characterized by certain forbidden edge configurations. A natural criterion for the quality of a drawing is the number of edge crossings. The question then arises whether beyond-planar drawings have a significantly larger crossing number than unrestricted drawings. Chimani et al. [GD'19] gave bounds for the ratio between the crossing number of three classes of beyond-planar graphs and the unrestricted crossing number. In this paper we extend their results to the main currently known classes of beyond-planar graphs characterized by forbidden edge configurations and answer several of their open questions.
\end{abstract}

\section{Introduction}
\label{sec:introduction}

A central topic in graph drawing are high quality drawings which are not necessarily planar.
A natural criterion for the quality of a drawing is the number of edge crossings.
While empirical studies suggest that the number of crossings is not the only factor that influences human understanding of a drawing, it is nevertheless one of the most relevant aesthetic indicators~\cite{aestheticsGD14,aestheticGD97,aesthetic97,aesthetics02}.
However, it is NP-hard to compute the minimum number of crossings of a graph~$G$ across all possible drawings of $G$~\cite{gj_cnnp1983}.
This minimum number of crossings is also referred to as the \emph{crossing number} of $G$, denoted by $\crnum(G)$. Meaningful upper and lower bounds, as well as heuristic, approximation, and parameterized algorithms, have hence been a major focus~\cite{crossingsGDhandbook13}.

Drawing edges as straight-line segments arguably simplifies any drawing.
One classic variant is hence the \emph{rectilinear crossing number} $\overline{\crnum}(G)$ of a graph $G$, that is, the minimum number of crossings across all straight-line drawings of~$G$.
Clearly $\crnum(G) \leq \overline{\crnum}(G)$ for any graph~$G$.
The straight-line restriction increases the crossing number arbitrarily: Bienstock and Dean~\cite{cn-rcn-far-93} showed that
for every $k$ there exists a graph $G$ such that $\crnum(G) = 4$
and  $\overline{\crnum}(G) = k$.
On the other hand, if $G=(V,E)$ is a graph with maximum degree $\Delta$ then
$\mathrm{\overline{cr}}(G) = O(\Delta\cdot \mathrm{cr}^2(G))$~\cite{cn-rcn-92,cn-rcn-95} and when $|E|\ge 4|V|$ this bound can be improved to  $\mathrm{\overline{cr}}(G) = O(\Delta \mathrm{cr}(G) \log \mathrm{cr}(G))$; see Schaefer's book~\cite{bookcn} on crossing numbers.
Computing $\mathrm{\overline{cr}}(G)$ is also NP-hard (actually, it is $\exists \mathbb{R}$-complete~\cite{ERcn91,ER_GD09})
but in polynomial time it can be approximated to $\mathrm{\overline{cr}}(G)+ o(n^4)$~\cite{approx-rcn-19}.

In recent years there has been particular interest in drawings of \emph{beyond-planar} graphs, which are characterized by certain forbidden crossing configurations of the edges; see the recent survey by Didimo~\etal~\cite{Didimo_2019}.
In this paper we study the relation between the
crossing number restricted to beyond-planar drawings,
and the (unrestricted) crossing number.
The \emph{restricted crossing number} $\crnum_{\mathcal{F}}(G)$ of a graph $G$ is the minimum number of crossings required to draw $G$ such that the drawing belongs to the beyond-planar family~$\mathcal{F}$.
For a beyond-planar family $\mathcal{F}$, the
\emph{crossing ratio}~$\varrho_\mathcal{F}$ is
defined as $\varrho_\mathcal{F}= \sup_{G\in \mathcal{F}} \crnum_\mathcal{F}(G)/\crnum(G)$, that is,
the supremum over all graphs in $\mathcal{F}$ of the ratio between the restricted crossing number and the (unrestricted) crossing number.

Very recently, Urschel and Wellens~\cite{Urschel2021} studied a related concept specifically for $k$-planar graphs. The \emph{local crossing number} $\textnormal{lcr}(\Gamma)$ of a drawing $\Gamma$ of graph $G$ is the maximum number of crossings per edge. Correspondingly, the local crossing number $\textnormal{lcr}(G)$ of $G$ is the minimum over all possible drawings of $G$. Hence a graph $G$ is $k$-planar if its local crossing number is $k$. Urschel and Wellens aim to find drawings that simultaneously minimize the crossing number and the local crossing number, that is, they aim to minimize
$\frac{\crnum(\Gamma)}{\crnum(G)} \cdot \frac{\textnormal{lcr}(\Gamma)}{\textnormal{lcr}(G)}$.

\begin{table}[b]
\caption{Bounds for the supremum $\varrho$ of the ratio between crossing numbers, for our upper bounds we assume the drawings to be \emph{simple}, that is, two edges share at most one point.}
\label{tab:results}
\small
\begin{tabu} to \textwidth {X[1,l,m]|X[1.67,c,m]X[1,c,m]|X[1,c,m]|X[1,c,m]}
        \toprule
        {\bfseries Family} &
        \multicolumn{2}{c|}{\bfseries Forbidden Configurations} &
        {\bfseries Lower} &
        {\bfseries Upper} \\
        \midrule
        $k$-planar & An edge crossed more than $k$ times & \includegraphics[page=1]{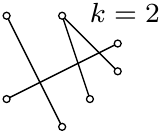} & $\boldsymbol{\Omega (n/k)}$ \newline (Section~\ref{sec:k-planar})& $\boldsymbol{O(k\sqrt{k}n)}$ \newline (Section~\ref{sec:k-planar}) \\
        \midrule
        $k$-quasi-planar & $k$ pairwise crossing edges & \includegraphics[page=2]{forbidden-configs} & $\Omega(n/k^3)$ \newline \cite{chimani} & $f(k)n^2\log ^2 n$ \newline \cite{chimani} \\
        \midrule
        Fan-planar & Two independent edges crossing a third or two adjacent edges crossing another edge from different "sides" & \includegraphics[page=7]{forbidden-configs} & $\Omega(n)$ \cite{chimani} & $O(n^2)$ \cite{chimani} \\
        \midrule
        $(k,l)$-grid-free & Set of $k$ edges such that each edge crosses each edge from a set of $l$ edges. & \includegraphics[page=5]{forbidden-configs} & $\Omega\Big(\dfrac{n}{kl(k+l)}\Big)$  \newline (Section~\ref{sec:grid-free})& $g(k,l)n^2$ \newline (Section~\ref{sec:grid-free}) \\
        \midrule
        $k$-gap-planar & More than $k$ crossings mapped to an edge in an optimal mapping &  \includegraphics[page=8]{forbidden-configs} & $\boldsymbol{\Omega(n/k^3)}$ \newline (Section~\ref{sec:gap-planar})& $\boldsymbol{O(k\sqrt{k}n)}$ \newline (Section~\ref{sec:gap-planar}) \\
        \midrule
        Skewness-$k$ & Set of crossings not covered by at most $k$ edges & \includegraphics[page=3]{forbidden-configs} & $\boldsymbol{\Omega(n/k)}$ \newline (Section~\ref{sec:skewness-k})& $\boldsymbol{O(kn + k^2)}$ \newline (Section~\ref{sec:skewness-k}) \\
        \midrule
        $k$-apex & Set of crossings not covered by at most $k$ vertices & \includegraphics[page=4]{forbidden-configs} & $\Omega(n/k)$ \newline (Section~\ref{sec:k-apex})& $O(k^2n^2 + k^4)$ \newline (Section~\ref{sec:k-apex}) \\
        \midrule
        Planarly connected & Two crossing edges that do not have two of their endpoint connected by a crossing-free edge & \includegraphics[page=9]{forbidden-configs} & \hspace{.8em}$\boldsymbol{\Omega(n^2)}$ \newline (Section~\ref{sec:plan-con})& \hspace{.8em}$\boldsymbol{O(n^2)}$ \newline (Section~\ref{sec:plan-con}) \\
        \midrule
        $k$-fan-crossing-free & An edge that crosses $k$ adjacent edges & \includegraphics[page=6]{forbidden-configs} & $\boldsymbol{\Omega(n^2/k^3)}$ \newline (Section~\ref{sec:fan-free})& $\boldsymbol{O(k^2n^2)}$ \newline (Section~\ref{sec:fan-free}) \\
        \midrule
        Straight-line~RAC & Two edges crossing at an angle $<\frac{\pi}{2}$ & \includegraphics[page=10]{forbidden-configs} & \hspace{.8em}$\boldsymbol{\Omega(n^2)}$ \newline (Section~\ref{sec:rac})& \hspace{.8em}$\boldsymbol{O(n^2)}$ \newline (Section~\ref{sec:rac}) \\
        \bottomrule
\end{tabu}
\end{table}

\mypar{Results.}
Chimani~\etal~\cite{chimani} gave bounds on the crossing ratio for the 1-planar, quasi-planar, and fan-planar families.
These results show that there exist graphs that have significantly larger crossing numbers when drawn with the beyond-planar restrictions. Their 1-planarity bound also applies to $k$-planarity when allowing parallel edges. We show that parallel edges are not needed when extending their proof to $k$-planarity. To do so, we introduce the concept of \emph{$k$-planar compound edges}, which exhibit essentially the same behavior as $k$ parallel edges. See Section~\ref{sec:k-planar} for details.
In Section~\ref{sec:grid-and-gap} we show how to extend the proof constructions of Chimani~\etal for quasi-planar and fan-planar graphs to two additional classes of beyond-planar graphs. These constructions use the concept of $\ell$-compound edges; in Section~\ref{sec:skew-apex} we show how to use such $\ell$-compound edges to prove lower bounds for two further families of beyond-planar graphs.
Finally, in Section~\ref{sec:plan-con-fan-free} we introduce the concept of \emph{$\ell$-bundles} which allow us to prove tight bounds on the crossing ratio for families of graphs where these bounds are quadratic in the number $n$ of vertices.
All bounds are summarized in Table~\ref{tab:results}. Bounds that are tight for a fixed~$k$ are indicated in boldface.
Last but not least, in Section~\ref{sec:straight-lines} we show that all bounds also apply to straight-line drawings.

\section{\boldmath $k$-Planar Graphs}
\label{sec:k-planar}

In a $k$-planar drawing no edge can be involved in more than $k$ crossings. Chimani~\etal~\cite{chimani} prove a tight bound on the crossing ratio for 1-planarity: $\varrho_\kplanarsub{1} = n/2-1$.
%This is the largest possible crossing ratio for a 1-planar graph.
They also noted that the same arguments hold for $k$-planarity, if one allows parallel edges (multiple edges between two vertices).
They achieve a $k(n-2)/2$ lower bound on $\varrho_\kplanarsub{k}$ for graphs with $n$ vertices
by replacing all edges except for one in their construction for $1$-planarity by~$k$~parallel edges.
We show that parallel edges are not needed to extend their proof to $k$-planarity.
However, the dependence on $k$ in the lower bound that we achieve is worse than the lower bound  by Chimani~\etal using parallel edges.

To extend their construction  to $k$-planarity we introduce \emph{$k$-planar compound edges} (see Figure~\ref{fig:k-planar-compound-edges-appendix}), which exhibit essentially the same behavior as $k$ parallel edges.
One $k$-planar compound edge consists of $k^2$ parallel edges, each subdivided $k-1$ times (so it consists of a total of $k^3$ edges).
We replace each set of parallel edges in their construction with a $k$-planar compound edge. In a $k$-planar setting, each $k$-planar compound edge can cross one other $k$-planar compound edge such that each edge has exactly $k$ crossings (see Figure~\ref{fig:k-planar-compound-edges-appendix}).

\begin{figure}[b]
    \centering
    \includegraphics[page=4]{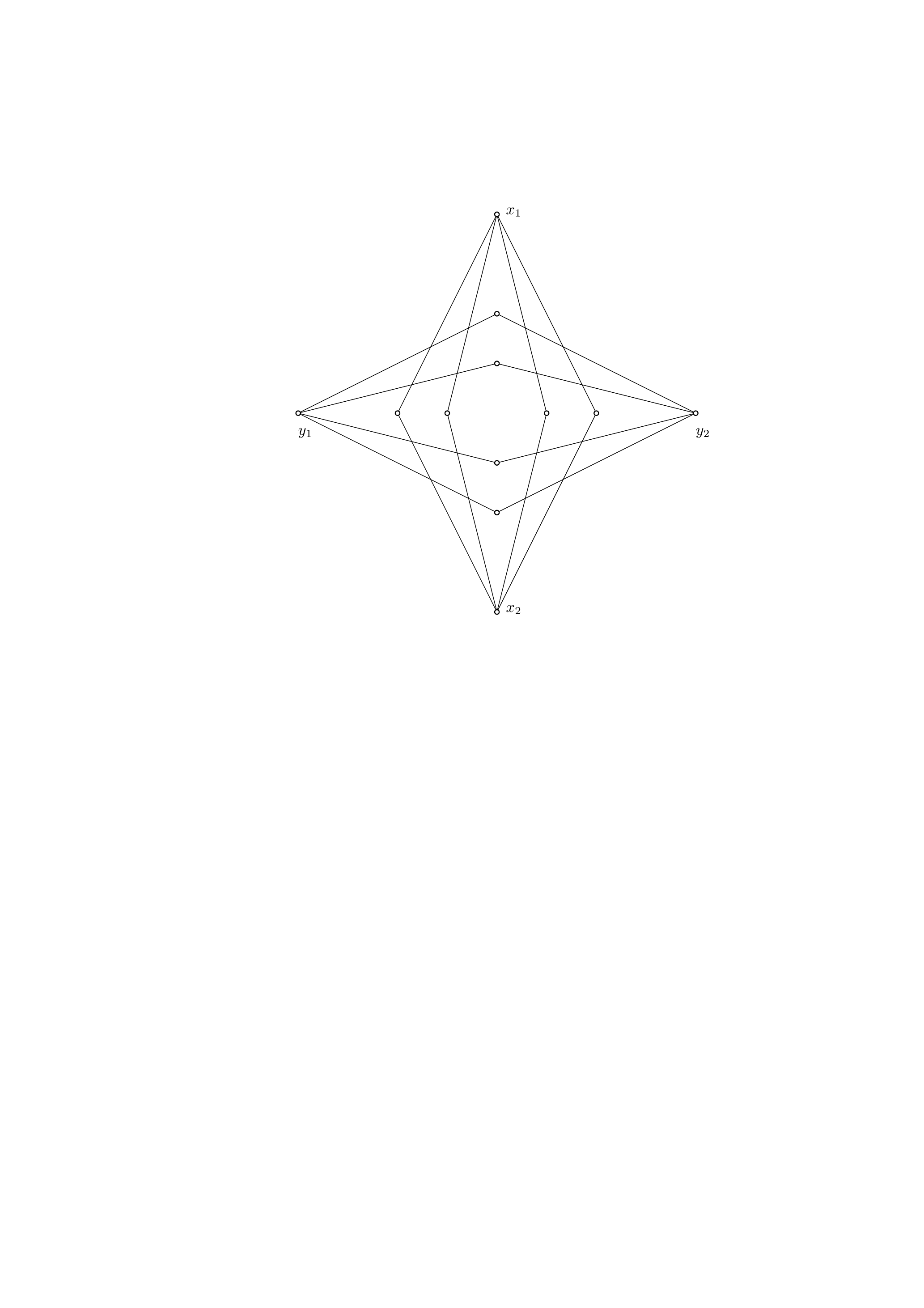}
    \hskip 2cm
    \includegraphics[page=5]{k-planar-compound-edges}
    \caption{Left: 2-planar compound edges crossing; Right: 3-planar compound edges crossing}
    \label{fig:k-planar-compound-edges-appendix}
\end{figure}
\begin{theorem}\label{theorem:k-planar}
For every $\ell \geq 7$ there exists a $k$-planar graph $G_\ell^k$
with $n= 23\ell (k-1)k^2 + 11\ell + 2$ vertices
such that
$\crnum(\Gell^k) \leq 2k^2$ and $\crnum_\kplanarsub{k}(\Gell^k) \geq k^4(11\ell)$, thus $\varrho_\kplanarsub{k} \in \Omega(n/k)$.
\end{theorem}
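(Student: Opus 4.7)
The plan is to adapt Chimani et al.'s~\cite{chimani} 1-planar lower-bound proof to $k$-planarity by substituting the $k$-planar compound edges introduced above for the parallel edges used in their multigraph argument. Let $\Gell$ denote their 1-planar gadget graph on $11\ell + 2$ vertices and $23\ell$ edges, satisfying $\crnum(\Gell) \leq 2$ and $\crnum_\kplanarsub{1}(\Gell) \geq 11\ell$. I construct $\Gell^k$ by replacing each of the $23\ell$ edges of $\Gell$ with a $k$-planar compound edge. Since each compound edge contributes $k^2(k-1)$ subdivision vertices, the total vertex count is $11\ell + 2 + 23\ell \cdot k^2(k-1) = 23\ell(k-1)k^2 + 11\ell + 2$, matching the claim.

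For the upper bound $\crnum(\Gell^k) \leq 2k^2$, I start from the optimal 2-crossing drawing of $\Gell$ and substitute each edge by its compound edge. Compound edges corresponding to the $23\ell - 4$ non-crossing edges are drawn as planar bundles along the original edges. For each of the 2 crossings in $\Gell$, the two crossing compound edges $A$ and $B$ are drawn so that $k^2 - 1$ of $A$'s parallel paths are routed around one endpoint of $B$ (contributing no crossings with $B$), while the single remaining path of $A$ traverses $B$'s bundle of $k^2$ parallel paths, contributing $k^2$ sub-edge crossings. Summing over the two crossings yields $2k^2$.

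For the lower bound $\crnum_\kplanarsub{k}(\Gell^k) \geq 11\ell \cdot k^4$, I proceed in two parts. Part (i): in any $k$-planar drawing of $\Gell^k$, every compound-edge crossing contributes at least $k^4$ sub-edge crossings. The $k$-planar cap of $k$ crossings per sub-edge rules out the ``one-path-absorbs-all'' strategy used in the upper bound, so the crossings must be distributed across all $k^3$ sub-edges of each compound edge according to the braided pattern of Figure~\ref{fig:k-planar-compound-edges-appendix}, which realizes exactly $k^3 \cdot k = k^4$ sub-edge crossings per compound-edge pair. Part (ii): any $k$-planar drawing of $\Gell^k$ contains at least $11\ell$ compound-edge crossings. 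This adapts Chimani et al.'s structural counting argument for $\Gell$ under 1-planarity; the key analogy is that 1-planarity forces each edge of $\Gell$ to participate in at most one crossing, and part (i) forces each compound edge of $\Gell^k$ to participate in at most one compound-edge crossing, since a single such crossing already saturates the $k^4$ sub-edge crossing budget of a compound edge. Combining (i) and (ii) gives at least $11\ell \cdot k^4$ sub-edge crossings.

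The ratio is therefore at least $11\ell \cdot k^4 / (2k^2) = (11/2)\ell k^2$, and since $n = \Theta(\ell k^3)$, this is $\Omega(n/k)$, yielding $\varrho_\kplanarsub{k} \in \Omega(n/k)$. The main obstacle is part (ii) of the lower bound: faithfully translating Chimani et al.'s counting argument from the 1-planar edge setting to the $k$-planar compound-edge setting. The ``at most one crossing per (compound) edge'' structural parallelism between the two settings should allow the argument to transfer with only notational changes, but verifying that every combinatorial step carries over requires careful attention to how compound edges interact with the gadgets of $\Gell$.
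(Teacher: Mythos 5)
There is a genuine gap, and it sits exactly where you flag it. Your part~(i) is not true as stated: the mere fact that two $k$-planar compound edges cross (i.e., some sub-edge of one crosses some sub-edge of the other) does not force $k^4$ sub-edge crossings --- a single path of one bundle can cross a single path of the other once, contributing one crossing. The $k^4$-per-forced-crossing count in this construction comes from the \emph{global} structure of $\Gell$ (endpoints separated by closed curves built from the rigid graph, its dual, and the binding edges), not from any local property of a compound-edge crossing. Consequently part~(ii), which you justify by ``a single compound-edge crossing saturates the budget,'' inherits the problem, and the ``structural counting argument transfers with notational changes'' step is precisely the content you would need to prove. You have essentially restated the theorem as parts (i)+(ii) without establishing either.

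The paper avoids redoing any counting. Given a crossing-minimal $k$-planar drawing $\Gamma$ of $\Gell^k$, it suppresses the subdivision vertices of every compound edge; each path of length $k$ becomes a single edge with at most $k\cdot k = k^2$ crossings, so the result is a $k^2$-planar drawing, with the same number of crossings, of the multigraph $G_{\ell,k^2}$ in which every non-special edge of $\Gell$ is replaced by $k^2$ parallel edges. Corollary~3 of Chimani~\etal~\cite{chimani} applies verbatim to that multigraph and gives $\crnum_{\kplanarsub{k^2}}(G_{\ell,k^2}) = k^4(11\ell)$, which is the entire lower bound. Two further points to fix: $\Gell$ has $23\ell+1$ edges and the special (orange) edge must \emph{not} be replaced by a compound edge --- it has to stay single both for Chimani \etal's argument and for the upper bound; and the upper bound $\crnum(\Gell^k)\le 2k^2$ is obtained because in the $2$-crossing drawing of $\Gell$ both crossings involve the special edge, so each becomes the single special edge crossing the $k^2$ parallel paths of a compound edge ($k^2$ crossings each), not a compound--compound crossing with a ``one path absorbs all'' routing, whose feasibility you have not justified.
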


\begin{figure}[h]
    \centering
    \includegraphics[page=6]{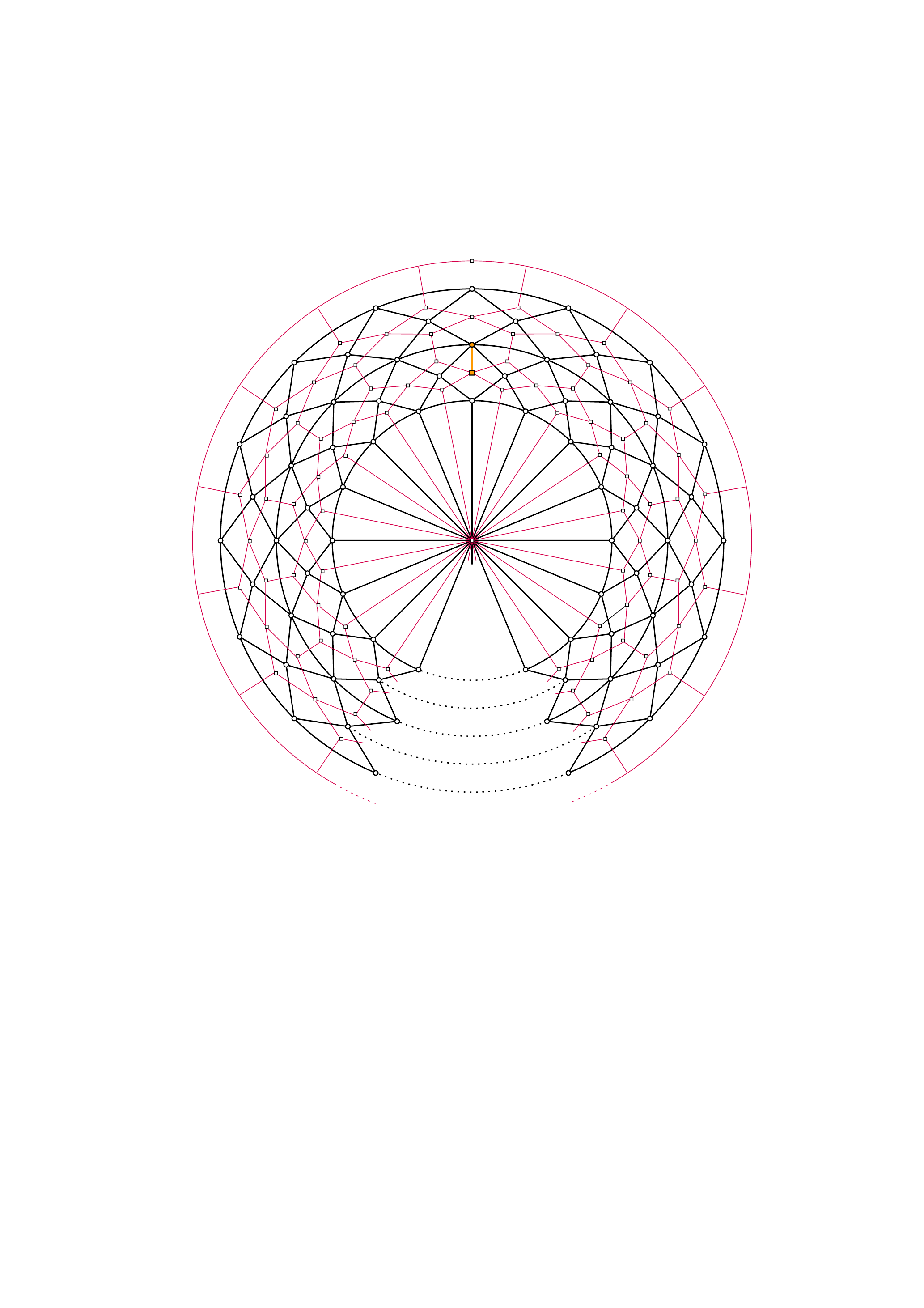}
    \hskip 2cm
    \includegraphics[page=7]{1-planar.pdf}
    \caption{Construction in~\cite{chimani}. Left: drawing of $\Gell$ with $\crnum(\Gell) = 2$; Right: 1-planar drawing of $\Gell$ with $\crnum_\kplanarsub{1}(\Gell) = 11\ell$.}
    \label{fig:1-planar-chimani}
\end{figure}

\begin{proof}
We construct the graph $\Gell$ as described by Chimani \etal~\cite{chimani}, Section 2 (see Figure~\ref{fig:1-planar-chimani}). The construction of $\Gell$ consists of three parts: a rigid graph $P$ (round vertices, black edges); its dual $P^*$ (square vertices, purple); a set of \emph{binding edges} which connects $P$ to $P^*$ (grey);  and one \emph{special edge} (orange).
%The authors showed that the drawing on the right in Figure~\ref{fig:1-planar-chimani} is the only way to draw $\Gell$ 1-plane.
The graph $\Gell$ has $11\ell + 2$ vertices and $23\ell + 1$ edges.

Subsequently, we replace each edge in $\Gell$ except the special edge with a $k$-planar compound edge. That is, we replace every black, grey, and purple edge by $k^2$ parallel edges, all split by $k-1$ vertices. The resulting graph $\Gell^k$ has $n= 23\ell (k-1)k^2 + 11\ell + 2$ vertices.
Two $k$-planar compound edges may cross while maintaining $k$-planarity (see Figure~\ref{fig:k-planar-compound-edges-appendix}). The graph~$\Gell^k$ admits a drawing with $2k^2$ crossings (see Figure~\ref{fig:1-planar-chimani}, left) where $P^*$ is drawn entirely in one face (shaded purple). Thus, $\crnum(\Gell^k) \leq 2k^2$. Note that this drawing is not $k$-planar.

We show that $\crnum_\kplanarsub{k}(\Gell^k) \geq k^4(11\ell)$.
Suppose we have a $k$-planar, crossing minimal drawing~$\Gamma$ of~$\Gell^k$.
We remove the interior vertices of each $k$-planar compound edge in~$\Gamma$ and hence each path of length $k$ becomes one edge. Thus, each $k$-planar compound edge becomes $k^2$ parallel edges, which results in a drawing $\Gamma'$ of a multigraph~$G_{\ell,k^2}$.
Since we removed only subdivisions of edges, %all edges remain in the same position and
it follows that the number of crossings in~$\Gamma'$ is equal to the number of crossings in~$\Gamma$. Observe that $G_{\ell,k^2}$ is a modification of $\Gell$ where each edge except the special orange edge is replaced with $k^2$ parallel edges.
This multigraph $G_{\ell,k^2}$ with $11\ell + 2$ vertices is the same graph as the one used in Corollary~3~\cite{chimani} which proves
that $\crnum_\kplanarsub{k^2}(G_{\ell,k^2}) = k^4(11\ell)$.

In $\Gamma$ each edge in each path of length $k$ had at most $k$ crossings.
Thus, in $\Gamma'$ each edge has at most~$k^2$ crossings (the special edge has at most $k$ crossings). Therefore, $\Gamma'$ is a $k^2$-plane drawing of~$G_{\ell,k^2}$.
Thus, there are at least $k^4(11\ell)$ crossings in $\Gamma'$
and at least $k^4(11\ell)$ crossings in $\Gamma$.
Since $\Gamma$ is a crossing minimal drawing of $\Gell^k$ (by definition), it follows that  $\crnum_\kplanarsub{k}(\Gell^k) \geq k^4(11\ell)$.
Recall that $\crnum(\Gell^k) \leq 2k^2$ and $n= 23\ell (k-1)k^2 + 11\ell + 2$.
We can conclude that $\varrho_\kplanarsub{k} \in \Omega(n/k)$.
\end{proof}
A $k$-planar graph with $n$ vertices can have at most $3.81\sqrt{k}n$ edges~\cite{maxedgenr_k-pl}. Each edge can have at most $k$ crossings. Thus, a crossing-minimal $k$-plane drawing cannot have more than $k \cdot 3.81\sqrt{k}n$ crossings and therefore $\varrho_\kplanarsub{k} \in O(k\sqrt{k}n)$.

\section{\boldmath $(k,l)$-Grid-Free and $k$-Gap-Planar}
\label{sec:grid-and-gap}

\begin{wrapfigure}[5]{r}{0.18\linewidth}
  \vspace{-3\baselineskip}
  \raggedleft
  \includegraphics[page=1]{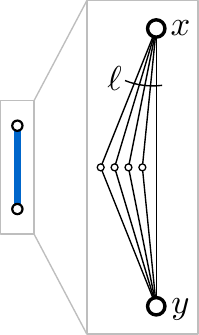}
  %\caption{An $\ell$-compound edge.}
  \label{fig:ellcompound-edge}
\end{wrapfigure}
Chimani~\etal~\cite{chimani} prove bounds on the crossing ratio for $k$-quasi-planarity and
fan-planarity using the notion of extended edges, which we refer to as \emph{$\ell$-compound edges}: an $\ell$-compound edge $\edge{x}{y}$ is a combination of the edge $\edge{x}{y}$ and a set $\Pi_{xy}$ of $\ell-1$ paths of length two connecting $x$ and $y$ (see figure on the right). We indicate $\ell$-compound edges with thick blue lines.

In their Lemma~5 (restated as Lemma~\ref{lemma:ellcompound} below), Chimani~\etal show that if two $\ell$-compound edges cross in a drawing $\Gamma$, then $\Gamma$ contains at least $\ell$ crossings. Hence, for a drawing to have less than $\ell$ crossings no two $\ell$-compound edges can cross. Chimani~\etal then construct graphs where the endpoints of an $\ell$-compound edge are separated by a closed curve which forces $\ell$ crossings. They conjectured that similar constructions could be used on additional families of beyond-planar graphs. Here we show that this is indeed the case by using similar constructions to prove lower bounds for $(k,l)$-grid-free graphs and $k$-gap-planar graphs.

\vspace{1.25\baselineskip}
\begin{lemma}[Lemma~5~in~Chimani~\etal~2019~\cite{chimani}]
\label{lemma:ellcompound}
Let $G$ be a graph containing two independent edges $\edge{u}{v}$ and $\edge{w}{z}$. Suppose that $u$ and $v$ ($w$ and $z$, resp.) are connected by a set $\Pi_{uv}$ ($\Pi_{wz}$, resp.) of $\ell-1$ paths of length two. Let $\Gamma$ be a drawing of $G$. If $\edge{u}{v}$ and $\edge{w}{z}$ cross in $\Gamma$, then $\Gamma$ contains at least $\ell$ crossings.
\end{lemma}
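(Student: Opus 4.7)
My plan is to apply the Jordan curve theorem to the $\ell-1$ closed walks obtained by gluing $\edge{u}{v}$ with each length-two $u$-$v$ path in $\Pi_{uv}$. Denote the paths of $\Pi_{uv}$ by $P_1, \dots, P_{\ell-1}$ with internal vertices $a_1, \dots, a_{\ell-1}$, and those of $\Pi_{wz}$ by $Q_1, \dots, Q_{\ell-1}$ with internal vertices $b_1, \dots, b_{\ell-1}$. Assuming $\Gamma$ is a simple drawing (adjacent edges do not cross, and each pair of edges shares at most one point) we have $|\edge{u}{v} \cap \edge{w}{z}| = 1$. For each $i \in \{1,\dots,\ell-1\}$ the three pairwise-adjacent edges $\edge{u}{v}$, $\edge{u}{a_i}$, $\edge{a_i}{v}$ of the closed walk $C_i := \edge{u}{v} \cup P_i$ are non-crossing, so $C_i$ is drawn as a simple Jordan curve splitting $\mathbb{R}^2$ into two open regions. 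The edge $\edge{w}{z}$ together with $Q_1, \dots, Q_{\ell-1}$ supplies $\ell$ internally disjoint $w$-to-$z$ arcs, which will be counted against the $C_i$.

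I would then split the analysis into two cases. In the first case, suppose there is some index $i^\ast$ for which $w$ and $z$ lie in distinct components of $\mathbb{R}^2 \setminus C_{i^\ast}$. The Jordan curve theorem then forces every arc from $w$ to $z$ to meet $C_{i^\ast}$ at least once, and in particular each of the $\ell$ internally disjoint $w$-to-$z$ arcs listed above must do so. Because those arcs are edge-disjoint from $C_{i^\ast}$ (all vertex labels are distinct by the independence assumption) and share only their endpoints $w,z$ (which do not lie on $C_{i^\ast}$), the $\ell$ resulting intersection points are pairwise distinct crossings of $\Gamma$.

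In the complementary case, $w$ and $z$ lie on the same side of $C_i$ for every $i$. Then $\edge{w}{z}$ meets $C_i$ an even number of times, and since $|\edge{w}{z} \cap C_i| = |\edge{w}{z} \cap \edge{u}{v}| + |\edge{w}{z} \cap P_i| = 1 + |\edge{w}{z} \cap P_i|$, parity forces $|\edge{w}{z} \cap P_i| \geq 1$ for every $i$. Summing over $i = 1,\dots,\ell-1$ and adding the single crossing with $\edge{u}{v}$ gives $1 + (\ell-1) = \ell$ pairwise distinct crossings, all lying on the single edge $\edge{w}{z}$.

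The main obstacle I expect is keeping the Jordan-style bookkeeping honest: one must verify that each $C_i$ is really a simple closed curve (which needs the simple-drawing hypothesis together with the distinctness of vertex labels enforced by independence), and that in both cases the $\ell$ counted intersection points are truly pairwise distinct crossings of $\Gamma$, which relies on general position together with the edge-disjointness of the $w$-to-$z$ arcs from the edges of $C_{i^\ast}$.
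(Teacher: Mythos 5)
The paper does not actually prove this lemma: it is restated verbatim from Chimani~\etal\ and used as a black box, so there is no in-paper proof to compare against. Judged on its own, your argument is correct and is the natural proof (and, as far as the cited source is concerned, essentially the standard one): the dichotomy between ``some $C_{i^\ast}$ separates $w$ from $z$, so each of the $\ell$ pairwise edge-disjoint $w$--$z$ arcs must cross it'' and ``no $C_i$ separates them, so parity forces $\edge{w}{z}$ to cross every $P_i$ in addition to $\edge{u}{v}$'' is exactly the right decomposition, and your bookkeeping (edge-disjointness of the arcs from $C_{i^\ast}$, no triple points in general position, vertices not lying on foreign edges) covers the points that need covering.

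The one item to flag is the simplicity assumption, which you introduce but which is absent from the statement, and which is not cosmetic: both of your cases lean on $|\edge{u}{v}\cap\edge{w}{z}|$ being exactly one (Case~2 uses oddness for the parity flip, and Case~1 uses non-crossing of adjacent edges to make $C_i$ a genuine Jordan curve). If $\edge{u}{v}$ and $\edge{w}{z}$ were allowed to cross twice, the whole graph could be drawn with exactly those two crossings, and the conclusion would fail for $\ell>2$; so the lemma as literally stated needs simple (``good'') drawings. Since the paper applies the lemma to crossing-minimal drawings and assumes simplicity elsewhere, your reading is the intended one, but you should say explicitly that you prove the lemma for simple drawings --- or make Case~2 more robust by arguing with the mod-$2$ intersection number of the closed walk $C_i$, which only requires that $\edge{u}{v}$ and $\edge{w}{z}$ cross an odd number of times.
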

For both classes we prove lower bounds on the crossing ratio which are linear in the number of vertices~$n$. Specifically, in Section~\ref{sec:grid-free}, we prove a general lower bound for the $(k,l)$-grid-free crossing ratio for all $k>1$ and $l>1$. Then, in Section~\ref{sec:gap-planar}, we first prove a tight lower bound for 1-gap-planarity and then extend it to $k$-gap-planarity.

\subsection{\boldmath $(k,l)$-Grid-Free Graphs}
\label{sec:grid-free}

In a $(k, l)$-grid-free drawing it is forbidden to have a $k \times l$ grid, that is, a set of $k$ edges that cross a set of $l$ edges. We create a cycle of length $2k+2l$ of $\ell$-compound edges and connect each vertex of the cycle to an additional vertex $x$ using a further $2k+2l$ $\ell$-compound edges. By Lemma~\ref{lemma:ellcompound} none of the $\ell$-compound edges can cross or the drawing has $\ell$ crossings. Note that $k$ and $l$ are constants, while $\ell$ can become arbitrarily large with respect to $k$ and $l$. We add a $k \times l$ grid connecting the vertices of the cycle (see Figure~\ref{fig:kl-grid-free-appendix}, left; $\ell$-compound edges are indicated in blue, grid edges in black). In an $(k, l)$-grid-free drawing, one grid edge has to be drawn in the same face (with respect to the cycle) as the vertex $x$ and must hence cross an $\ell$-compound edge, incurring $\ell$ crossings.

\begin{figure}[b]
    \centering
    \includegraphics[page=4]{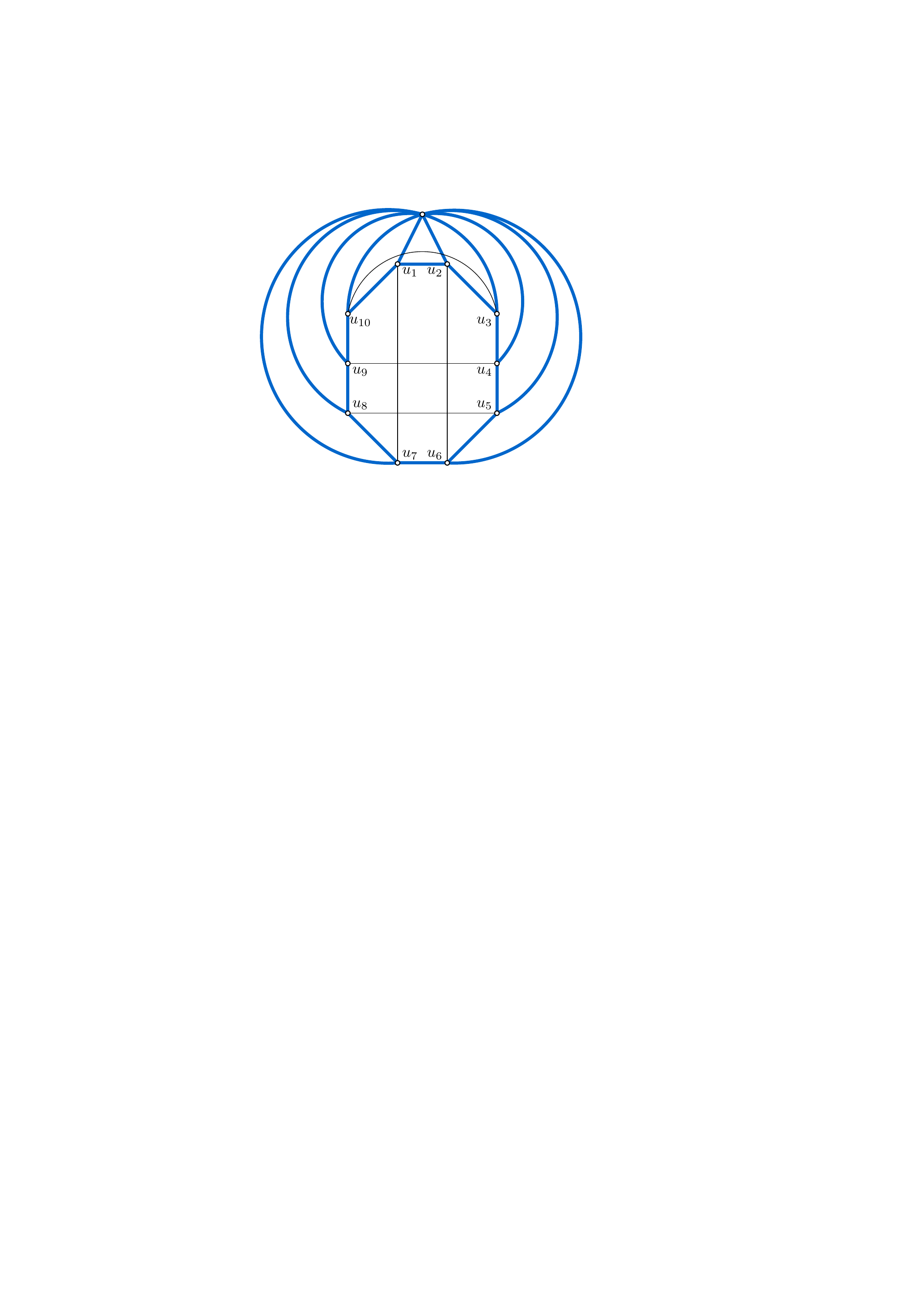}
    \hskip 2cm
    \includegraphics[page=5]{kl-grid-free}
    \caption{Left: drawing of $\Gell$ with $\crnum(G_\ell) \leq 6$; Right: $(2,3)$-grid-free drawing of $\Gell$ with $\crnum_\klgridsub{2}{3}(\Gell) \geq \ell$, $\ell$-compound edges drawn with thick blue lines.}
    \label{fig:kl-grid-free-appendix}
\end{figure}

\begin{theorem}\label{theorem:kl}
For every $\ell \geq 2, k > 1, l > 1$ there exists a $(k,l)$-grid free graph $G_\ell$ with $n = 4(k+l)\ell - 2(k+l) + 1$ vertices such that $\crnum_\klgridsub{k}{l}(G_\ell) \geq \ell$ and $\crnum(G_\ell) \leq kl$.
\newline Thus, $\varrho_\klgridsub{k}{l} \in \Omega\Big(\dfrac{n}{kl(k+l)}\Big)$.
\end{theorem}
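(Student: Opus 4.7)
The plan is to realise the construction sketched informally just before the theorem and then verify, in turn, the vertex count, the unrestricted upper bound, and the restricted lower bound.

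For the construction, I would take a cycle $C$ on vertices $v_1,\ldots,v_{2(k+l)}$ whose $2(k+l)$ edges are all $\ell$-compound, add a central vertex $x$ joined to every $v_i$ by an $\ell$-compound spoke, and finally add $k+l$ ordinary ``grid'' chords on $C$ arranged in the standard $k\times l$ convex grid pattern: the $k$ chords of one colour are pairwise non-crossing, the $l$ chords of the other colour are pairwise non-crossing, every $k$-chord and every $l$-chord have interleaving endpoints, and each grid chord is ``diameter-like'' in the sense that both arcs of $C$ between its endpoints contain at least one further cycle vertex. Each $\ell$-compound edge introduces $\ell-1$ subdivision vertices, so the vertex total is $2(k+l)+1+4(k+l)(\ell-1)=4(k+l)\ell-2(k+l)+1=n$, matching the theorem. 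For the unrestricted upper bound I would draw the $\ell$-compound wheel planarly (each compound edge realised as a thin bundle of $\ell$ parallel arcs) and route all $k+l$ grid chords in the outer face of $C$ according to their combinatorial pattern, obtaining exactly $kl$ crossings, all between grid chords, so $\crnum(G_\ell)\le kl$.

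For the restricted lower bound I would argue by contradiction. Suppose $\Gamma$ is a $(k,l)$-grid-free drawing of $G_\ell$ with strictly fewer than $\ell$ crossings. By Lemma~\ref{lemma:ellcompound} no two $\ell$-compound edges can cross in $\Gamma$, so the $\ell$-compound wheel is drawn planarly; in particular $C$ is a simple closed curve bounding a disk $D$ that contains $x$, each spoke $xv_i$ is realised as $\ell$ pairwise non-crossing arcs inside $D$, and together these spokes subdivide $D$ into $2(k+l)$ wedge regions whose pairwise common boundaries are the $\ell$-arc spoke bundles. Since the grid chords have interleaving endpoints by construction, routing all of them in the outer face of $C$ would reproduce the forbidden $k\times l$ grid there; hence some grid chord $e$ must be routed so as to genuinely swap sides with respect to at least one conjugate chord, which forces a portion of $e$ to enter $D$ and to cross from one wedge into a non-adjacent wedge. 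Crossing a single spoke means crossing each of its $\ell$ parallel arcs, so $e$ incurs at least $\ell$ crossings, contradicting the hypothesis.

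I expect the main obstacle to be the topological step that rules out a cheap detour of $e$ through $D$. Specifically, one has to show that a sub-arc of $e$ which enters and exits $D$ inside a single wedge is homotopic, relative to the planar wheel, to an arc lying entirely in the outer face, so such a detour cannot help $e$ avoid an interleaved crossing in the outer face; only a genuine wedge-to-wedge traversal can, and that traversal crosses a spoke and therefore pays the $\ell$-crossing charge. Keeping careful track of which spoke is actually crossed and verifying that each of its $\ell$ arcs is hit (using the non-crossing planar structure of the spoke bundle) is the only delicate part of the argument. Combining the two bounds then yields $\varrho_{\klgridsub{k}{l}}(G_\ell)\ge \ell/(kl)$, and substituting $\ell=\Theta(n/(k+l))$ from the vertex-count formula rearranges to the claimed $\Omega\bigl(n/(kl(k+l))\bigr)$.
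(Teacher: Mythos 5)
Your construction is exactly the paper's: the wheel on $C=\langle u_1,\ldots,u_{2(k+l)}\rangle$ with hub $x$, all $4(k+l)$ wheel edges replaced by $\ell$-compound edges, plus $k+l$ interleaved grid chords; the vertex count and the $kl$-crossing unrestricted drawing also match. The overall lower-bound strategy (Lemma~\ref{lemma:ellcompound} to dispose of crossings among the wheel edges themselves, then the dichotomy ``all chords in one face of the wheel $\Rightarrow$ forbidden $k\times l$ grid'' versus ``some chord crosses the wheel $\Rightarrow$ at least $\ell$ crossings'') is likewise the paper's.

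The gap is in the final step, and you have correctly located it but not closed it. There are two problems. First, your ``wedge'' picture assumes the compound wheel is drawn as a planar wheel of nested $\ell$-arc bundles, so that the spokes partition the disk $D$ into wedges bounded by full bundles. But Lemma~\ref{lemma:ellcompound} only controls crossings between the \emph{base} edges of two compound edges; it says nothing about the $\ell-1$ auxiliary length-two paths, which in a crossing-minimal restricted drawing could a priori cross one another or fail to run parallel to their base edge, so this structural picture is not justified. Second, even granting it, the claim that a chord entering $D$ must fully traverse some single spoke bundle (and hence cross all $\ell$ of its arcs) requires ruling out cheap partial detours that cross only a few arcs of several different bundles; you flag this homotopy argument as the delicate point and leave it open. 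The paper sidesteps both issues by inverting the count: once a chord $e=\edge{u_i}{u_j}$ crosses a base edge $\edge{a}{x}$ of the wheel (and no two base edges cross, by the first case), the closed curve $\gamma$ formed by $e$ together with the base edges of the path in $C$ from $u_i$ to $u_j$ avoiding $a$ separates $a$ from $x$; each of the $\ell$ edge-disjoint $a$--$x$ paths of that compound spoke must therefore cross $\gamma$, and each such crossing is a crossing of the drawing, yielding $\ell$ crossings without analysing how $e$ or the bundles are routed. Replacing your final step with this separation argument closes the gap.
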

\begin{proof}
We first construct a graph $G'$ which consists of a cycle $C$ of length $2k + 2l$, $C = \langle u_1,\ldots,u_{2k+2l}\rangle$ and a vertex $x$ connected to each vertex of~$C$. Now we construct a graph $\Gell$ by replacing each edge in $G'$ by an $\ell$-compound edge. Finally, we add the grid edges $\edge{u_i}{u_{2k+l-i+1}}$ for $i=1,\ldots,k$ and $\edge{u_{k+i}}{u_{2k+2l-i+1}}$ for $i=1,\ldots,l$ (see Figure~\ref{fig:kl-grid-free-appendix} with $k=2$ and $l=3$).
$\Gell$ has $n = 2(k+l)\ell + 2(k+l) - 19$ vertices. Note that $G'$ is a subgraph of $\Gell$.
If two edges of $G'$ cross each other, then the claim follows from Lemma~\ref{lemma:ellcompound}.

If no grid edge in $\Gell$ crosses the subgraph $G'$, then all grid edges must be drawn within the unique face of size $2k+2l$ in~$G'$ and hence the drawing is not $(k,l)$-grid-free. Therefore at least one grid edge $e = \edge{u_i}{u_{j}}$ must cross an edge $\edge{a}{b}$ of $G'$. More specifically, $e$ must cross an edge $\edge{a}{x}$, as $x$ connects to all vertices on the cycle $C$.

The edge $e$ connects two vertices $u_i$ and $u_{j}$ of $C$. There exist two paths in $C$ which connect $u_i$ and $u_{j}$. One of these path does not contain $a$. Now consider the closed curve $\gamma$ formed by this path and $e$. This curve partitions the plane into two or more regions. Since the edge $e$ crosses the edge $\edge{a}{x}$, $a$ and $x$ lie in different regions of $\gamma$. Hence the $\ell$-compound edge between $a$ and $x$ in $\Gell$ must cross $\gamma$ resulting in $\ell$ crossings.
\end{proof}
The maximum number of of edges in $(k,l)$-grid-free graphs with $n$ vertices is $c_{k,l}n$, where $c_{k,l}$ depends only on $k$ and $l$~\cite{maxedgenr_grid}.
Thus, for simple drawings, $\varrho_\klgridsub{k}{l} \le g(k,l)\cdot n^2$ for a computable function~$g$ depending only on $k$ and $l$.

\subsection{\boldmath $k$-Gap-Planar Graphs}
\label{sec:gap-planar}

In a $k$-gap-planar drawing every crossing is assigned to one of the two edges involved. No more than $k$ crossings can be assigned to a single edge. We first describe our lower bound construction for $k=1$ and then show how to extend it to larger $k$. As before we create a cycle $C$ of length $8$ of $\ell$-compound edges and connect each vertex of the cycle to an additional vertex $x$ using a further $8$ $\ell$-compound edges. Again by Lemma~\ref{lemma:ellcompound} none of the $\ell$-compound edges can cross. We add four single edges which connect diametrically opposite vertices of~$C$ (see Figure~\ref{fig:1gap-appendix}, left; $\ell$-compound edges are indicated in blue, single edges in black). In a $1$-gap-planar drawing one single edge has to be drawn in the same face (with respect to the cycle) as the vertex $x$ and must hence cross an $\ell$-compound edge, incurring $\ell$ crossings.

\begin{figure}[b]
    \centering
    \includegraphics[page=4]{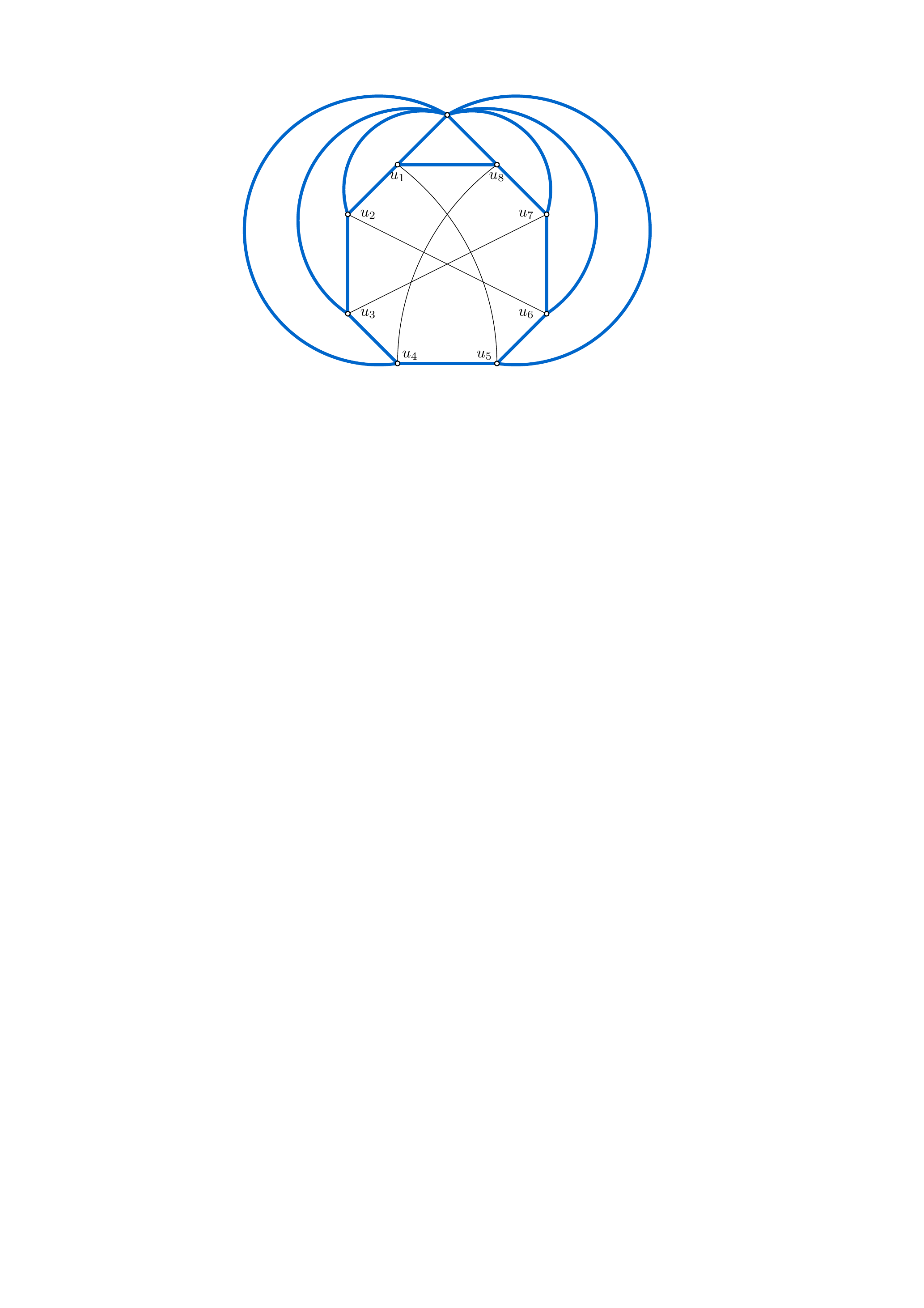}
    \hskip 2cm
    \includegraphics[page=5]{1-gap-appendix.pdf}
    \caption{Left: drawing of $\Gell$ with $\crnum(G_\ell) \leq 6$; Right: 1-gap-planar drawing of $\Gell$ with $\crnum_\kgapsub{1}(\Gell) \geq  \ell$, $\ell$-compound edges are drawn with thick blue lines.}
    \label{fig:1gap-appendix}
\end{figure}

\begin{theorem}\label{theorem:1gap}
For every $\ell \geq 2$ there exists a 1-gap-planar graph $G_\ell$ with $n = 16\ell - 7$ vertices such that $\crnum_\kgapsub{1}(G_\ell) \geq \ell$ and $\crnum(G_\ell) \leq 6$. Thus, $\varrho_\kgapsub{1} \in \Theta(n).$
\end{theorem}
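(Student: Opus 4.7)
The plan is to follow the construction sketched immediately before the statement. I start from the wheel $W_8$ consisting of a cycle $C = u_1 u_2 \ldots u_8 u_1$ together with a central vertex $x$ adjacent to every $u_i$. I replace each of these $16$ edges by an $\ell$-compound edge, and finally add the four ``diameter'' edges $u_i u_{i+4}$, $i = 1, \ldots, 4$, as ordinary single edges. Counting the $8$ cycle vertices, the apex $x$, and the $16(\ell-1)$ subdivision vertices introduced by the compound edges yields $n = 16\ell - 7$. For the upper bound $\crnum(G_\ell) \leq 6$, I draw the wheel in its standard planar embedding with $x$ in the centre and $C$ as a regular octagon, route each compound edge as a thin planar bundle along the corresponding wheel edge, and send the four diameter edges as straight chords through the centre. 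The only crossings are then the $\binom{4}{2} = 6$ mutual crossings of the diameter edges.

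The heart of the argument is the lower bound $\crnum_\kgapsub{1}(G_\ell) \geq \ell$. I will argue by contradiction: assume a $1$-gap-planar drawing $\Gamma$ of $G_\ell$ with fewer than $\ell$ crossings. By Lemma~\ref{lemma:ellcompound}, no two $\ell$-compound edges can cross in $\Gamma$, so the subdrawing induced by the wheel subgraph is crossing-free. Since $W_8$ has a unique planar embedding up to reflection, I may assume that $x$ lies in the bounded region enclosed by $C$; and since no compound edge is crossed, every diameter edge is drawn entirely inside $C$ or entirely outside $C$.

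I then split into two cases. \emph{Case A (all diameters outside $C$):} In the outer disk-like region bounded by $C$ the four chords $u_i u_{i+4}$ pairwise interleave on the boundary, hence they pairwise cross, producing $\binom{4}{2} = 6$ crossings each between two of the four diameter edges. In a $1$-gap-planar assignment each edge is charged with at most one crossing, so these four edges can absorb at most $4 < 6$ crossings---a contradiction. \emph{Case B (some diameter $e = u_i u_{i+4}$ lies inside $C$):} Together with one of the two cycle arcs of $C$ between $u_i$ and $u_{i+4}$, the edge $e$ bounds a simple closed curve $\gamma$; I choose the arc so that the complementary cycle arc contains a vertex $u_j$ which lies in the region of $\gamma$ opposite to $x$. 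Each of the $\ell$ arcs forming the compound edge $(u_j,x)$ is a continuous curve from $u_j$ to $x$, and therefore crosses $\gamma$ an odd number of times. Since crossing the cycle portion of $\gamma$ would be a forbidden compound-compound crossing, every such arc must cross $e$ itself, producing $\ell$ crossings and contradicting the assumption. In either case the lower bound follows.

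Combining $\crnum_\kgapsub{1}(G_\ell)/\crnum(G_\ell) \geq \ell/6$ with $n = 16\ell - 7$ yields $\varrho_\kgapsub{1} \in \Omega(n)$; the matching $O(n)$ upper bound comes from the known linear edge-density of $1$-gap-planar graphs together with the fact that each edge is charged with at most one crossing, giving $\varrho_\kgapsub{1} \in \Theta(n)$. I expect Case B to be the main obstacle: one has to pin down the closed curve $\gamma$ so that $x$ and some specific vertex $u_j$ are genuinely separated, and then leverage the no-compound-compound-crossing property to force each of the $\ell$ arcs of the compound edge $(u_j,x)$ to cross $e$ itself rather than slipping across the cycle portion of $\gamma$.
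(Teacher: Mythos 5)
You follow the paper's construction, vertex count, and upper-bound arguments exactly, and your closed-curve separation argument in Case~B is the paper's key step. The gap is in how you set up the case distinction. You justify ``every diameter edge is drawn entirely inside $C$ or entirely outside $C$'' by writing ``since no compound edge is crossed,'' but Lemma~\ref{lemma:ellcompound} only excludes crossings between \emph{two} compound edges; a single diameter edge crossing a compound edge contributes just one crossing and is not excluded by the assumption of fewer than $\ell$ crossings. Hence a diameter may cross a cycle compound edge and run partly inside and partly outside $C$, a configuration covered by neither of your cases. Moreover, Case~B as literally stated is vacuous: with $x$ inside $C$, a diameter drawn inside $C$ must cross the spoke compound edges, contradicting the very premise ``no compound edge is crossed'' on which your dichotomy rests.

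The paper's split repairs this: either no single edge crosses an edge of the wheel subgraph $G'$ --- then all four diameters lie in the octagonal face and force six crossings onto four edges, which is your Case~A --- or some single edge $e$ crosses an edge of $G'$, and one reduces (since $x$ is joined to every cycle vertex and $e$ cannot terminate inside a triangular face) to $e$ crossing a spoke $ax$. Taking $\gamma$ to be $e$ together with the arc of $C$ avoiding $a$, the vertices $a$ and $x$ are separated by $\gamma$, so each of the $\ell$ strands of the compound edge $ax$ must cross $\gamma$ somewhere; each such crossing is a crossing of the drawing, giving $\ell$ in total. This also spares you the step where you force every strand onto $e$ via a ``forbidden compound-compound crossing,'' which Lemma~\ref{lemma:ellcompound} does not literally cover (it speaks only of the two core edges crossing). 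With the case split corrected in this way, your argument goes through.
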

\begin{proof}
The upper bound follows directly from the fact that a $1$-gap-planar graph with $n$ vertices can have at most $5n-10$ edges~\cite{maxedgenr_gapplanar}.
Since each edge can have at most one crossing assigned to it, the number of crossings cannot exceed the number of edges. Thus, a $1$-gap-planar graph can have no more than $5n-10$ crossings. Clearly, if $\crnum_\kgapsub{1}(G) > \crnum(G)$ it must hold that $\crnum(G) \geq 2$ and therefore $\varrho_\kgapsub{1} \in O(n)$.

For the lower bound we first construct a graph $G'$ which consists of a cycle $C$ of length eight, $C = \langle u_1,u_2, \dots ,u_8 \rangle$ and a vertex $x$ connected to each vertex of $C$. Now we construct a graph $\Gell$ by replacing each edge in $G'$ by an $\ell$-compound edge. Finally, we add the \emph{single} edges $\edge{u_i}{u_{i+4}}, i=1,2,3,4$ (see Figure~\ref{fig:1gap-appendix}). $\Gell$ has $n = 16(\ell-1) + 9 = 16\ell - 7$ vertices. Note that $G'$ is a subgraph of $\Gell$. If two edges of $G'$ cross each other, then the claim follows from Lemma~\ref{lemma:ellcompound}.

If no single edge crosses the subgraph~$G'$, then all single edges must be drawn within the unique face of size eight in~$G'$. All single edges cross each other, causing six crossings, which need to be mapped to the four single edges. This is not possible and hence the drawing is not 1-gap-planar.
Therefore, at least one single edge $e = \edge{u_i}{u_{j}}$ must cross an edge $\edge{a}{b}$ of~$G'$. More specifically, $e$ must cross an edge $\edge{a}{x}$, as $x$ connects to all vertices on the cycle $C$.

The edge $e$ connects two vertices $u_i$ and $u_{j}$ of $C$. There exist two paths in $C$ which connect $u_i$ and $u_{j}$. One of these path does not contain $a$. Now consider the closed curve $\gamma$ formed by this path and $e$. This curve partitions the plane into two or more regions. Since the edge $e$ crosses the edge $\edge{a}{x}$, $a$ and $x$ lie in different regions of $\gamma$. Hence the $\ell$-compound edge between $a$ and $x$ in $\Gell$ must cross $\gamma$ resulting in $\ell$ crossings.
\end{proof}

In general, a $k$-gap-planar graph with $n$ vertices can have at most $O(\sqrt{k}n)$ edges~\cite{maxedgenr_gapplanar}.
Since each edge can have at most $k$ crossings assigned to it, the number of crossings cannot exceed the number of edges times $k$. Thus, a $k$-gap-planar graph can have no more than $O(k\sqrt{k}n)$ crossings and therefore $\varrho_\kgapsub{k} \in O(k\sqrt{k}n)$.
To prove a lower bound for $k$-gap-planarity, we can use exactly the same construction as above starting with a cycle of length~$8k$. Our lower bound is cubic in $1/k$ and hence does not match the upper bound.

\begin{corollary}\label{theorem:kgap}
For every $\ell \geq 2$ there exists a $k$-gap-planar graph $G_\ell^k$ with $n = 16k\ell - 8k+1$ vertices such that $\crnum_\kgapsub{k}(G_\ell^k) \geq \ell$ and $\crnum(G_\ell^k) \leq 8k^2 - 2k$. Thus, $\varrho_\kgapsub{k} \in \Omega(n/k^3).$
\end{corollary}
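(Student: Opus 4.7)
The plan is to generalize the construction of Theorem~\ref{theorem:1gap} by scaling the base cycle from length $8$ to length $8k$. Concretely, I first build $G'$ as a cycle $C=\langle u_1,\dots,u_{8k}\rangle$ together with an extra vertex $x$ joined to every $u_i$; I then replace each of the $16k$ edges of $G'$ by an $\ell$-compound edge; finally I insert the $4k$ antipodal single edges $\edge{u_i}{u_{i+4k}}$ for $i=1,\dots,4k$ to obtain $G_\ell^k$. A direct vertex count ($8k+1$ vertices of $G'$ plus $\ell-1$ subdivision vertices on each of the $16k$ compound edges) confirms $n = 16k\ell-8k+1$.

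For the upper bound on $\crnum(G_\ell^k)$, I would draw $G'$ planarly with $x$ placed inside the ``inner'' cycle-face, and route each of the $4k$ antipodal single edges as a chord through the complementary face. Any two antipodal chords cross exactly once, while no chord crosses any $\ell$-compound edge, so the drawing has exactly $\binom{4k}{2} = 8k^2-2k$ crossings.

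For the lower bound, suppose toward contradiction that a $k$-gap-planar drawing of $G_\ell^k$ has strictly fewer than $\ell$ crossings. Lemma~\ref{lemma:ellcompound} forces that no two $\ell$-compound edges cross, so $G'$ is drawn planarly. If no single edge leaves the unique face of size $8k$, then all $4k$ antipodal single edges lie inside it and pairwise cross, producing $8k^2-2k$ crossings concentrated on just $4k$ edges; since every edge carries at most $k$ assigned crossings in a $k$-gap-planar drawing, the available budget is $k\cdot 4k = 4k^2$, and $8k^2-2k > 4k^2$ for all $k\geq 1$, a contradiction. Hence some single edge $e=\edge{u_i}{u_j}$ must cross a spoke $\ell$-compound edge incident to~$x$, say $\edge{a}{x}$. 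Then the closed-curve argument from the proof of Theorem~\ref{theorem:1gap} applies verbatim: the arc of $C$ between $u_i$ and $u_j$ not containing $a$, together with $e$, bounds a closed curve separating $a$ from $x$, so the $\ell$-compound edge $\edge{a}{x}$ must cross it and thus contributes $\ell$ crossings.

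Combining the bounds yields $\varrho_\kgapsub{k} \geq \ell/(8k^2-2k) = \Omega(n/k^3)$. The only genuinely new ingredient beyond the $k=1$ case is the counting step that shows the in-face pairwise crossings of the $4k$ antipodal edges exceed the $k$-gap-planarity budget; this is what dictates scaling the cycle length precisely to~$8k$, so that $\binom{4k}{2}$ beats $k\cdot 4k$ by a comfortable margin.
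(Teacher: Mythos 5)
Your proposal is correct and follows essentially the same route as the paper: the same $8k$-cycle construction with $4k$ antipodal single edges, the same counting argument that $\binom{4k}{2}=8k^2-2k$ in-face crossings exceed the budget of $k\cdot 4k=4k^2$ assignable crossings, and the same closed-curve separation argument inherited from the $k=1$ case. The only difference is that you spell out the vertex count and the final ratio computation, which the paper leaves implicit.
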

\begin{proof}
We first construct the graph $G^k$ starting with a cycle $C = \langle u_1, u_2,\ldots, u_{8k} \rangle$ of length $8k$ and a vertex $x$ connected to each vertex of $C$. Now we construct a graph $\Gell^k$ by replacing each edge in $G^k$ by an $\ell$-compound edges. Finally, we add the single edges $\edge{u_i}{u_{i+4k}}, 1\leq i \leq 4k$.

If all $4k$ single edges are drawn inside the cycle $C$ they all mutually cross, resulting in
${4k\choose 2} = \frac{1}{2} (4k) (4k - 1) = 8k^2 - 2k$ crossings. Since $8k^2 - 2k > 4k^2$
for $k > \frac{1}{2}$ the crossings cannot be mapped to the $4k$ edges without exceeding a load of $k$ crossings per edge. Hence at least one single edge must cross an edge of $G^k$ and the argument follows as above.
\end{proof}

\section{\boldmath Skewness-$k$ and $k$-Apex Graphs}
\label{sec:skew-apex}

In this section we further explore the use of $\ell$-compound edges for the construction of lower bound examples. In particular, we consider the two families of \mbox{skewness-$k$} graphs and $k$-apex graphs.
As in the previous section, we exploit the fact that crossings of $\ell$-compound edges result in $\ell$ crossings, which allows us to enforce a structure on crossing-minimal drawings.

\subsection{\boldmath Skewness-$k$ Graphs}
\label{sec:skewness-k}

In a skewness-$k$ drawing all crossings must be \emph{covered} by a specific set of most $k$ edges.
A crossing is covered by an edge when the edge is part of the crossing.
In particular, in a skewness-$1$ drawing, there is one \emph{special} edge that covers all crossings. Skewness-$1$ drawings are also known as almost planar or near planar drawings.
We first describe our lower bound construction for $k= 1$ and then show how to extend it to larger $k$.

\begin{theorem}\label{theorem:skew}
For every $\ell \geq 3$ there exists a skewness-1 graph $\Gell$ with $n = 11\ell - 4$ vertices such that $\crnum_\kskewsub{1}(\Gell) \geq \ell$ and $\crnum(\Gell) \leq 3$. Thus, $\varrho_\kskewsub{1} \in \Theta(n)$.
\end{theorem}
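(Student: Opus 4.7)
The plan is to follow the $\ell$-compound-edge strategy from the proofs of Theorems~\ref{theorem:kl} and~\ref{theorem:1gap}. I would build $\Gell$ by taking a planar skeleton of $\ell$-compound edges---a natural choice is a hexagonal cycle $C=\langle v_1,\dots,v_6\rangle$ together with a central vertex $x$ joined by $\ell$-compound spokes to five of the six cycle vertices, giving $11$ compound edges on $7$ original vertices and thus $n=7+11(\ell-1)=11\ell-4$---and then adding a small set of single ``chord'' edges designed so that (i)~$\Gell$ is skewness-$1$, (ii)~there is a drawing in which the chords contribute at most three mutual crossings and no compound crossings, and (iii)~in every skewness-$1$ drawing the chords cannot be jointly accommodated in a single face of the skeleton without violating the single-covering-edge requirement.

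For the upper bound $\crnum(\Gell)\le 3$, I would draw the skeleton planarly with $x$ in the outer face of $C$ so that the interior of $C$ is a single face; three ``diameter'' chords routed inside this face contribute the three crossings, while no compound edge is crossed.

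For the lower bound, I would assume a skewness-$1$ drawing $\Gamma$ of $\Gell$ with strictly fewer than $\ell$ crossings and derive a contradiction. Lemma~\ref{lemma:ellcompound} forbids any two compound edges from crossing, so the skeleton is drawn as a plane graph. Exactly as in the closed-curve step of Theorems~\ref{theorem:kl} and~\ref{theorem:1gap}, I would then rule out any crossing between a chord and a compound edge: if a chord $e$ crossed the direct edge of some compound $\edge{a}{b}$, form $\gamma=e\cup P$ with $P$ the arc of $C$ between the endpoints of $e$ avoiding $a$; since no two compound edges cross in $\Gamma$, $\edge{a}{b}$ meets $\gamma$ only at its crossing with $e$, so $a$ and $b$ lie on opposite sides of $\gamma$, and each of the $\ell$ routes of $\edge{a}{b}$ must cross $\gamma$, producing at least $\ell$ crossings---a contradiction. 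Hence all chords lie in a single face of the skeleton whose boundary carries every cycle vertex. A Jordan-curve argument then forces chords with interleaving endpoints to pairwise cross inside this face. Choosing the three ``main diameters'' $\edge{v_1}{v_4}, \edge{v_2}{v_5}, \edge{v_3}{v_6}$, which pairwise interleave, yields three forced chord-chord crossings; each diameter lies in only two of the three, so no single edge can cover all of them, which is the final contradiction.

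The $O(n)$ half of $\varrho_\kskewsub{1}\in\Theta(n)$ follows because a skewness-$1$ graph has at most $3n-5$ edges, so its special edge participates in at most $3n-6$ crossings in a simple drawing, while $\crnum(G)\ge 1$ for every non-planar graph. The main obstacle I foresee is a careful choice of the chord set ensuring that $\Gell$ itself remains skewness-$1$ (so that the restricted crossing number is defined at all) together with the case analysis in the closed-curve step across all possible positions of $x$ in $\Gamma$ and all kinds of compound edge that a chord might cross.
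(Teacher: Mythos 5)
Your overall strategy---a planar skeleton of $\ell$-compound edges plus a few single chords that are forced either to cross the skeleton (costing $\ell$ crossings by Lemma~\ref{lemma:ellcompound}) or to pairwise cross in a pattern no single edge can cover---is exactly the paper's, and your vertex count, upper bound, and $O(n)$ argument all match. However, your concrete construction has a fatal flaw, which you flagged as an ``obstacle'' but did not resolve: the graph you propose is \emph{not} skewness-1, so it does not belong to the family and cannot witness a lower bound on $\varrho_\kskewsub{1}$ (the supremum ranges over skewness-1 graphs; moreover, your graph admits no skewness-1 drawing at all, so the contradiction you derive is vacuous). Concretely, take the $6$-cycle $\langle v_1,\dots,v_6\rangle$, spokes from $x$ to $v_1,\dots,v_5$, and diagonals $\edge{v_1}{v_4},\edge{v_2}{v_5},\edge{v_3}{v_6}$. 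Removing $\edge{v_2}{v_5}$ leaves a $K_{3,3}$ subdivision with parts $\{v_1,v_3,v_5\}$ and $\{v_2,v_4,v_6\}$: all nine connections are single cycle or diagonal edges except $v_5$ to $v_2$, which is realized by the path $v_5 x v_2$. Removing $\edge{v_1}{v_4}$ leaves the analogous subdivision using the path $v_4 x v_1$, and removing $\edge{v_3}{v_6}$ leaves the $K_{3,3}$ subgraph with parts $\{x,v_2,v_4\}$ and $\{v_1,v_3,v_5\}$. Deleting a constituent edge of a compound edge does not help either, since the remaining $\ell-1$ length-two paths preserve the relevant adjacency. So no single edge deletion planarizes your graph.

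The paper avoids exactly this trap by using a \emph{fan} skeleton rather than a wheel: a path $\langle u_1,\dots,u_6\rangle$ with $x$ joined to all six of its vertices (again $11$ compound edges and $n=11\ell-4$), plus the chords $\edge{u_1}{u_4},\edge{u_2}{u_5},\edge{u_3}{u_6}$. Because the path does not close up, deleting $\edge{u_2}{u_5}$ leaves a planar graph: embed the $5$-cycle $x u_1 u_4 u_3 u_6$, draw $\edge{x}{u_4}$ and place $u_5$ inside the face $(x,u_4,u_3,u_6)$, and draw $\edge{x}{u_3}$ and place $u_2$ in the outer face $(x,u_1,u_4,u_3)$. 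Two further places where your argument is thinner than the paper's: (i) the paper first proves that crossing-minimal skewness-1 drawings are simple, a fact you implicitly rely on both when claiming interleaving chords must cross exactly once and when arguing that the separated endpoints of a crossed compound edge force $\ell$ crossings; and (ii) your closed-curve step only treats a chord crossing a spoke $\edge{a}{x}$ with $a$ on the cycle---you would also need to handle a chord crossing a cycle compound edge (both endpoints on $C$) and crossings with the length-two paths of a compound edge rather than its direct edge. With the fan skeleton and these repairs, the argument goes through as in the paper.
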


\begin{figure}[t]
    \centering
    \includegraphics[page=4]{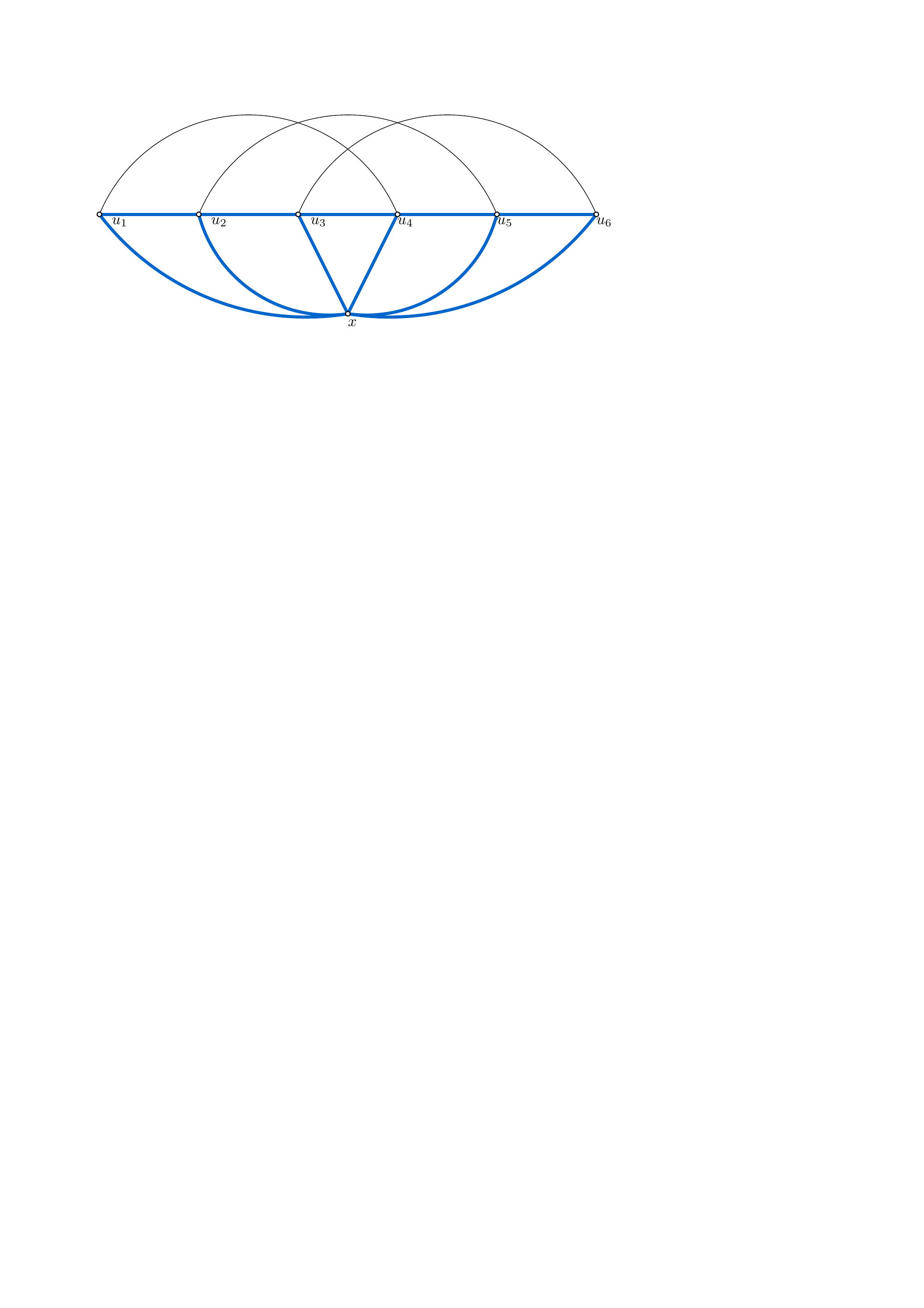}
    \hskip 2cm
    \includegraphics[page=5]{1-skewness.pdf}
    \caption{Left: drawing of $\Gell$ with $\crnum(\Gell) \leq 3$; Right: skewness-1 drawing of $\Gell$ with $\crnum_\kskewsub{1}(\Gell) \geq \ell$, $\ell$-compound edges are drawn with thick blue lines.}
    \label{fig:1-skewness-appendix}
\end{figure}

\begin{proof}
The upper bound directly follows from the fact that a skewness-$1$ graph with $n$ vertices can have at most $3n-5$ edges~\cite{Didimo_2019}. Since one edge must cover all crossings, there can not be more crossings than the number of edges.
Thus, a crossing-minimal skewness-$1$ drawing  cannot have more than $3n - 4$ crossings and therefore $\varrho_\kskewsub{1} \in O(n)$.

For the lower bound we first construct a graph $G'$ which consists of
a path $P =  \langle u_1 \ldots u_6 \rangle$ of length six and a vertex $x$ connected to each vertex of $P$. We construct a graph $\Gell$ by replacing each edge in $G'$ by an $\ell$-compound edge. Finally, we add the single edges $\edge{u_1}{u_4}$, $\edge{u_2}{u_5}$ and $\edge{u_3}{u_6}$ (see Figure~\ref{fig:1-skewness-appendix}). $\Gell$ has $n = 11\ell - 11 + 7 = 11\ell - 4$ vertices. Note that $G'$ is a subgraph of $\Gell$. If two edges of $G'$ cross each other, then the claim follows from Lemma~\ref{lemma:ellcompound}.

Crossing-minimal skewness-1 drawings must be simple.
Assume for contradiction that a crossing-minimal skewness-1 drawing is not simple. Then there must be two edges $e$ and $e'$ which have more than one point in common. One of these edges, say $e$, must be the special edge. Hence $e'$ cannot cross any other edges than $e$. We can reroute $e$ along $e'$ between two of their share points, thereby reducing the number of crossings.

Assume that each edge $\edge{u_i}{u_{i+1}}$ of $P$ defines a 3-cycle with $x$ and its incident edges such that all other vertices lie in the same face bounded by the 3-cycle.
We argue that in this case there is no possible simple skewness-$1$ drawing.
We start with the crossing-free drawing $\Gamma$ of the edges connecting the endpoints of $\ell$-compound edges
and we add the edge $\edge{u_1}{u_4}$.
Since we can assume that the drawing produced is simple, $\edge{u_1}{u_4}$ can only be drawn with 0 crossing or with at least two crossings. If it is drawn with at least two crossings, then $\edge{u_1}{u_4}$ is the special edge. Symmetrically, the same holds for $\edge{u_3}{u_6}$. Since at least one of them is not the special edge, at least one of them is drawn with 0 crossings with respect to the $\ell$-compound edges, without loss of generality we assume that $\edge{u_1}{u_4}$ is drawn without crossings. Now, there is a 3-cycle $(u_1,u_4,x)$ that divides both vertex pairs $\edge{u_2}{u_5}$ and $\edge{u_3}{u_6}$. Both edges will have to cross the 3-cycle, thus one of the edges of the 3-cycle has to be the special edge.
In particular, this implies that neither  $\edge{u_2}{u_5}$ nor $\edge{u_3}{u_6}$ are the special edge.

By the argument above, $\edge{u_3}{u_6}$ cannot cross $\Gamma$. Therefore, $\edge{u_3}{u_6}$, has to cross $\edge{u_1}{u_4}$, making $\edge{u_1}{u_4}$ the special edge.
Now $u_2$ and $u_5$ are separated by the 3-cycle $(u_3,u_6,x)$. Since the special edge is not part of this cycle, it is not possible for $\Gamma$ to be a skewness-1 drawing.
An illustration can be found in Figure~\ref{fig:1-skewness-appendix} on the left.

If there is a skewness-1 drawing where all crossings are between single edges, then all vertices must lie in the same face of $G'$. By the above argument, such a drawing is not skewness-1.
By Lemma~\ref{lemma:ellcompound} if two edges of $G'$ cross, the drawing has $\ell$ crossings.
Thus, in a crossing-minimal skewness-1 drawing $\Gamma'$, at least one single edge $e$ has to cross an edge~$\edge{a}{b}$ of~$G'$. One of these edges is the special edge.
Since $\Gamma'$ is simple, $a$ and $b$ must lie in different faces of $\Gamma'$ and the $\ell-1$ paths connecting $a$ and $b$ must cross $e$, resulting in at least $\ell$ crossings.
\end{proof}
We extend the proof for the skewness-1 crossing ratio to the skewness-$k$ crossing ratio.

\begin{theorem}
\label{theorem:kskew}
For every $\ell \geq 3$ there exists a skewness-$k$ graph $\Gell^k$ with $n = 11\ell k - 5k + 1$ vertices such that $\crnum_\kskewsub{k}(\Gell^k) \geq \ell k$ and $\crnum(\Gell^k) \leq 3k$. Thus, $\varrho_\kskewsub{k} \in \Omega(n/k)$.
\end{theorem}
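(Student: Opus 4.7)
The plan is to build $\Gell^k$ as the union of $k$ copies of the skewness-1 gadget of Theorem~\ref{theorem:skew}, all glued together at the apex vertex~$x$. For $i \in \{1,\dots,k\}$ let $\Gell^{(i)}$ be a fresh copy of $\Gell$ with path $P_i = \langle u_1^{(i)},\ldots,u_6^{(i)}\rangle$, $\ell$-compound edges along $P_i$ and from $x$ to each $u_j^{(i)}$, and the three single diagonals $u_1^{(i)}u_4^{(i)}$, $u_2^{(i)}u_5^{(i)}$, $u_3^{(i)}u_6^{(i)}$; the copies share only the apex $x$. Since each copy contributes $6 + 11(\ell-1) = 11\ell - 5$ new vertices on top of $x$, we get $n = k(11\ell - 5) + 1 = 11\ell k - 5k + 1$. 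The upper bound $\crnum(\Gell^k) \le 3k$ is obtained by placing the $k$ copies in $k$ disjoint angular sectors around $x$ and drawing each copy using the $3$-crossing drawing of Figure~\ref{fig:1-skewness-appendix}~(left), so no crossings arise between different copies.

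For the lower bound, consider an arbitrary skewness-$k$ drawing $\Gamma$ of $\Gell^k$ with special-edge set $S$, $|S| \le k$. Set $S_i = S \cap E(\Gell^{(i)})$ and let $X_i$ be the number of crossings in $\Gamma$ whose two edges both lie in copy~$i$. The key observation is that a crossing between two edges of copy $i$ can be covered only by an edge that is itself one of the two edges of that crossing, hence by an edge in $S_i$. Consequently, the restriction of $\Gamma$ to copy $i$ is a valid skewness-$|S_i|$ drawing of the single-copy gadget $\Gell^{(i)}$, and the $X_i$'s account for disjoint portions of the total crossing count since the copies share no edges.

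The single-copy gadget is non-planar: collapsing every $\ell$-compound edge to a single edge yields a graph on $\{x,u_1,\dots,u_6\}$ in which $A=\{u_1,u_3,u_5\}$ and $B=\{u_2,u_4,u_6\}$ form a $K_{3,3}$-subdivision (eight of the nine pairs are directly adjacent through the path or a diagonal; the pair $u_1u_6$ uses $x$ as an internal vertex). Subdividing edges preserves non-planarity, so $\Gell^{(i)}$ is non-planar and every drawing of it has $X_i \ge 1$. This forces $|S_i| \ge 1$ for every $i$, and combined with $\sum_i |S_i| \le |S| \le k$ we deduce $|S_i| = 1$ for every copy. Applying Theorem~\ref{theorem:skew} to the induced skewness-1 drawing of each copy yields $X_i \ge \ell$, and therefore $\crnum_\kskewsub{k}(\Gell^k) \ge \sum_i X_i \ge \ell k$. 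Since $n = \Theta(\ell k)$, this gives $\varrho_\kskewsub{k} \in \Omega(n/k)$.

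The main obstacle I expect is making the ``one special edge per copy'' pigeonhole fully rigorous, and in particular justifying that an induced within-copy drawing really is a skewness-$|S_i|$ drawing of the gadget to which Theorem~\ref{theorem:skew} can be applied verbatim; this in turn relies on the fact that a crossing-minimal skewness-$k$ drawing may be assumed simple (as shown for the skewness-1 case), so that simplicity descends to each induced copy. The non-planarity claim and the independence of the $k$ copies at the edge level are the other checkable ingredients, but both are straightforward once the sharing structure (only at $x$) is fixed.
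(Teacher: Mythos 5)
Your proposal is correct and follows essentially the same route as the paper: $k$ copies of the skewness-1 gadget identified at the apex $x$, with the lower bound obtained by forcing each copy to be drawn skewness-1. The paper simply asserts that a skewness-$k$ drawing of $\Gell^k$ restricts to a skewness-1 drawing on each copy, whereas you justify this via the non-planarity of each copy and a pigeonhole count of the special edges---a detail the paper glosses over, and which you handle correctly.
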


\begin{proof}
We construct a graph $\Gell^k$ by identifying the vertex $x$ of $k$ copies of $\Gell$ (see Figure~\ref{fig:k-skewness-appendix}).
A drawing of $\Gell^k$ is skewness-$k$ if and only if all $k$ instances of $\Gell$ are skewness-1.
Therefore, applying the above proof on all instances gives $\crnum_\kskewsub{k}(\Gell^k) = k\ell$. However, if skewness-$k$ is not enforced, there exists a drawing with $\crnum(\Gell^k) \leq 3k$ where each $\Gell$ is drawn with three crossings (see Figure~\ref{fig:k-skewness-appendix}, left).
\end{proof}
A skewness-$k$ graph with $n$ vertices can have at most $3n-6 + k$ edges~\cite{Didimo_2019}.
Since at most $k$ edges are involved in all crossings, there can not be more crossings than the number of edges times~$k$.
Thus, a crossing-minimal skewness-$k$ drawing  cannot have more than $3kn - 6k + k^2$ crossings and therefore $\varrho_\kskewsub{k} \in O(kn+k^2)$.

\subsection{\boldmath $k$-Apex Graphs}
\label{sec:k-apex}
In a $k$-apex drawing all crossings must be covered by at most $k$ vertices. A crossing is covered by a vertex when an edge incident to that vertex is part of the crossing.
Every skewness-$k$ drawing is also $k$-apex. Hence we know that there is a graph~$\Gell$ with $n = 11\ell - 4$ vertices which admits a drawing with three crossings (see Figure~\ref{fig:1-skewness-appendix}, left) and admits a 1-apex drawing with $\ell$ crossings (see Figure~\ref{fig:1-skewness-appendix}, right).

Moreover, the proof of Theorem~\ref{theorem:skew} of the lower bound on the skewness-1 crossing ratio translates to 1-apex drawings.
In the case in which each edge $\edge{u_i}{u_{i+1}}$ of $P$ defines a 3-cycle with $x$ and its incident edges such that all other vertices lie in the same face bounded by the 3-cycle, the drawing cannot be 1-apex, as removing any vertex could still not produce a plane drawing (see Figure~\ref{fig:1-skewness-appendix}, left).
The rest of the proof directly translates to this setting.

\begin{figure}[t]
    \centering
    \includegraphics[page=6]{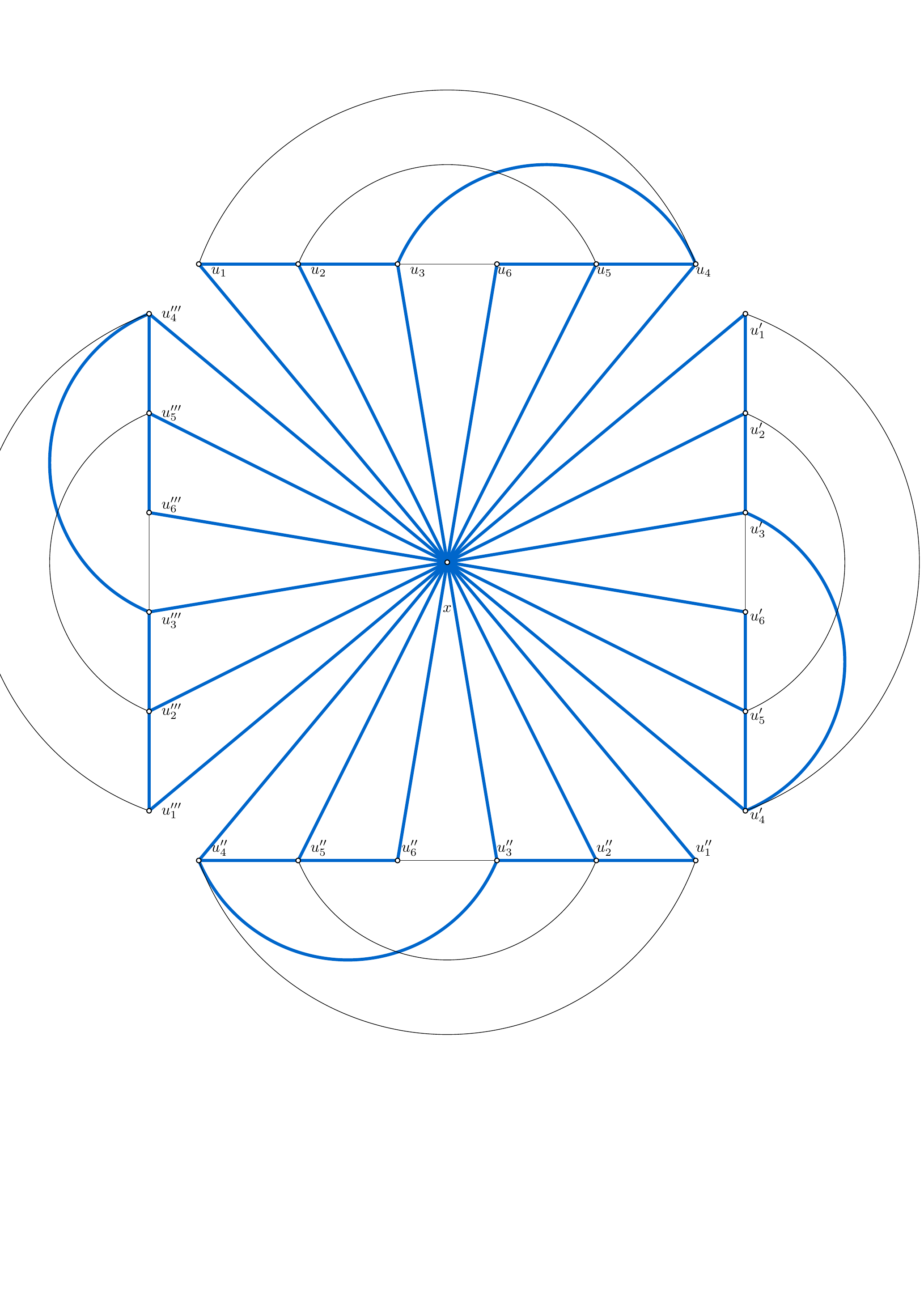}
    \hskip 2cm
    \includegraphics[page=7]{k-skewness-short.pdf}
    \caption{Left: drawing of $\Gell$ with $\crnum(\Gell) \leq 3k$; Right: skewness-$k$ drawing of $\Gell$ with $\crnum_\kskewsub{k}(\Gell) \geq k\ell$,
    each of the $k$ petals contains a copy of $\Gell$, resulting in a graph $\Gell^k$, $\ell$-compound edges are drawn with thick blue lines.}
    \label{fig:k-skewness-appendix}
\end{figure}

\begin{corollary}
\label{corollary:1apex}
For every $\ell \geq 3$ there exists a 1-apex graph $\Gell$ with $n = 11\ell - 4$ vertices such that $\crnum_\kapexsub{1}(\Gell) \geq \ell$ and $\crnum(\Gell) \leq 3$. Thus, $\varrho_\kapexsub{1} \in \Omega(n)$.
\end{corollary}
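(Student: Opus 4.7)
The plan is to reuse the graph $\Gell$ constructed in Theorem~\ref{theorem:skew} without modification. The vertex count $n = 11\ell - 4$ is unchanged, and the drawing in Figure~\ref{fig:1-skewness-appendix} (left) already certifies $\crnum(\Gell) \leq 3$ verbatim, so only the lower bound $\crnum_\kapexsub{1}(\Gell) \geq \ell$ needs a fresh argument. Since every skewness-$1$ drawing is $1$-apex but not conversely, the skewness-$1$ lower bound cannot be invoked directly, but it should translate under a routine substitution.

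To prove $\crnum_\kapexsub{1}(\Gell) \geq \ell$ I would rerun the proof of Theorem~\ref{theorem:skew} replacing each reference to the ``special edge'' by a reference to the apex vertex and its incident edges. A crossing-minimal $1$-apex drawing may be assumed simple by the same rerouting argument: any two edges sharing two points must include at least one edge incident to the apex vertex, and such a shared pair can be uncrossed locally to strictly reduce the crossing count. Lemma~\ref{lemma:ellcompound} then restricts the analysis to drawings in which the $\ell$-compound skeleton $G'$ is drawn plane, since any crossing within $G'$ already produces $\ell$ crossings.

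The only step that requires real adaptation is the degenerate case in which every edge $\edge{u_i}{u_{i+1}}$ of $P$ together with $x$ bounds a $3$-cycle whose complementary face carries all remaining vertices. In the skewness-$1$ proof this case is discarded via a contradiction about which edge could be special; in the $1$-apex setting the substitute is more direct: the three single edges $\edge{u_1}{u_4}$, $\edge{u_2}{u_5}$, $\edge{u_3}{u_6}$ are forced into the same face of $G'$ and pairwise cross, and the endpoints of the three resulting crossings span the six distinct vertices $u_1,\ldots,u_6$, none of them $x$. No single vertex of $\Gell$ is incident to all three crossings, so such a drawing cannot be $1$-apex. Hence some single edge must cross an edge $\edge{a}{x}$ of $G'$, at which point the closed-curve argument of Theorem~\ref{theorem:skew} transfers verbatim to force $\ell$ crossings of the $\ell$-compound edge between $a$ and $x$. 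The main obstacle is checking that no new configuration is opened up by the $1$-apex relaxation; this is ensured because the remaining case analysis is driven purely by the topology of the plane drawing of $G'$ and the separating power of the three single edges, which are insensitive to whether the covering object is a single edge or a single vertex.
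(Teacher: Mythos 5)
Your proposal is correct and follows essentially the same route as the paper: reuse $\Gell$ from Theorem~\ref{theorem:skew}, keep the $\crnum(\Gell)\leq 3$ drawing, and translate the skewness-$1$ lower-bound proof, with the only real adaptation being the case where all three single edges lie in one face of $G'$. Your explicit argument there (three pairwise crossings among $\edge{u_1}{u_4}$, $\edge{u_2}{u_5}$, $\edge{u_3}{u_6}$ whose incident vertex sets have empty intersection) is just a concrete instantiation of the paper's statement that no single vertex can cover all crossings in that configuration.
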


This bound is not tight, as the upper bound is quadratic. A $1$-apex graph with $n$ vertices can have at most $4n-10$ edges~\cite{Didimo_2019}. Thus, a $1$-apex crossing-minimal drawing has at most $(4n-10)^2$ crossings and therefore $\varrho_\kapexsub{1} \in O(n^2)$.

Similar to the 1-apex example, we know that for each $k$ there is a graph $\Gell^k$ with $n = 11\ell k - 5k + 1$ vertices which admits a drawing with $3k$ crossings and admits a $k$-apex drawing with $k\ell$ crossings.
As before, the proof for skewness-$k$ translates to $k$-apex, because the drawing is only $k$-apex when every copy of $\Gell$ is drawn 1-apex.

\begin{corollary}
\label{corollary:kapex}
For every $\ell \geq 3$ there exists a $k$-apex graph $\Gell^k$ with $n = 11\ell k - 5k + 1$ vertices such that $\crnum_\kapexsub{k}(\Gell^k) \geq \ell k$ and $\crnum(\Gell^k) \leq 3k$. Thus, $\varrho_\kapexsub{k} \in \Omega(n/k)$.
\end{corollary}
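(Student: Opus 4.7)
The plan is to lift the skewness-$k$ construction from Theorem~\ref{theorem:kskew} to the $k$-apex setting. I would take $k$ copies of the graph $\Gell$ from Corollary~\ref{corollary:1apex} and identify their shared vertex $x$ to form $\Gell^k$, yielding $n = k(11\ell - 4) - (k-1) = 11\ell k - 5k + 1$ vertices, matching the claimed count. As a subgraph, $\Gell^k$ is a union of $k$ copies of $\Gell$ glued at $x$, so its copies are pairwise vertex-disjoint outside of $\{x\}$.

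For the upper bound $\crnum(\Gell^k) \leq 3k$, I would draw each copy independently using the 3-crossing drawing from Figure~\ref{fig:1-skewness-appendix} (left), arranging the $k$ copies as disjoint petals around the shared vertex $x$ so that distinct copies do not interact. This contributes at most $3k$ crossings in total and mirrors the construction shown in Figure~\ref{fig:k-skewness-appendix} (left).

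The heart of the argument is the lower bound $\crnum_\kapexsub{k}(\Gell^k) \geq k\ell$. My approach is to show that in any $k$-apex drawing $\Gamma$ of $\Gell^k$ with apex set $S$ of size at most $k$, each copy $\Gell_i$ must be drawn as a 1-apex drawing of $\Gell$. Once this is in hand, Corollary~\ref{corollary:1apex} applied to each copy gives $\geq \ell$ crossings per copy, and since the crossings internal to distinct copies are disjoint (copies share only the vertex $x$), the total is at least $k\ell$. The restriction $\Gamma_i$ of $\Gamma$ to $\Gell_i$ is itself an $|S \cap V(\Gell_i)|$-apex drawing of $\Gell_i$; since each copy needs at least one apex vertex (as $\Gell$ is non-planar), a pigeonhole count on $S \setminus \{x\}$ distributed across the $k$ copies forces a 1-apex structure on each copy.

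The main obstacle is the case $x \in S$: then $x$ serves as an apex for every copy simultaneously, and the remaining $k-1$ apex vertices must be distributed among $k$ copies. I would need to argue that equipping a copy with a second apex vertex besides $x$ still cannot avoid the $\ell$ crossings forced by the 1-apex analysis, by appealing to the geometric obstruction identified in the proof of Theorem~\ref{theorem:skew} (the configuration where each edge $\edge{u_i}{u_{i+1}}$ of $P$ bounds a 3-cycle with $x$ containing all other vertices in one face, which makes $\ell$ crossings unavoidable regardless of which single vertex is removed). Once that is settled, the ratio is $\crnum_\kapexsub{k}(\Gell^k)/\crnum(\Gell^k) \geq k\ell / (3k) = \ell/3$, and since $\ell = \Theta(n/k)$, we conclude $\varrho_\kapexsub{k} \in \Omega(n/k)$.
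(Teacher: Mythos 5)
Your proposal follows the paper's proof exactly: the same graph ($k$ copies of $\Gell$ glued at the shared vertex $x$), the same $3k$-crossing petal drawing for the upper bound, and the same per-copy reduction to the 1-apex lower bound of Corollary~\ref{corollary:1apex}. The obstacle you flag when $x$ belongs to the apex set is genuine --- the paper silently asserts that every copy must be drawn 1-apex --- but it resolves along the lines you sketch: $x$ is incident to none of the single edges of a copy, so an apex set $\{x,v\}$ restricted to that copy covers the three pairwise crossings among the interleaved single edges no better than $\{v\}$ alone, and the remaining cases of the Theorem~\ref{theorem:skew} analysis (compound edges crossing each other, or a single edge crossing a compound edge) force $\ell$ crossings independently of the apex condition.
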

A $k$-apex graph with $n$ vertices can have at most $3(n-k)-6 + \sum^k_{i=1}(n-i) = 3(n-k)-6 + kn - \frac{k(k+1)}{2}$ edges~\cite{Didimo_2019}. Thus, a $k$-apex crossing-minimal drawing has at most $\left(3(n-k)-6 + kn - \frac{k(k+1)}{2}\right)^2$ crossings and therefore $\varrho_\kapexsub{k} \in O(k^2n^2 - k^4)$.
% By the definition of a $k$-apex drawing,  $\crnum_\kapexsub{k}(\Gell) > \crnum(G)$ implies that $n > k$.

\section{\boldmath Planarly Connected Graphs, $k$-Fan-Crossing-Free Graphs and Straight-line RAC-Graphs}
\label{sec:plan-con-fan-free}

In this section we introduce the concept of \emph{$\ell_2$-bundles} and \emph{$\ell_3$-bundles}, which are sets of $\ell$ paths of length two or three, respectively, all connecting the same two vertices (see Figure~\ref{fig:ellx-bundle}).
We indicate $\ell_2$-bundles and $\ell_3$-bundles with thick red and green lines, respectively.
The vertices that are interior to a bundle have degree %exactly
two.
Two crossing bundles result in
%$\Omega(\ell^2)$
a quadratic number of
crossings. We show how to enforce such crossings to prove tight bounds on the crossing ratio. Lemma~\ref{lemma:bundle_plcon} argues that there is a planarly connected drawing that is crossing minimal in which $\ell_2$-bundles
do not cross themselves and
no vertex lies between two consecutive paths of the same $\ell_2$-bundle.

\begin{figure}[h]
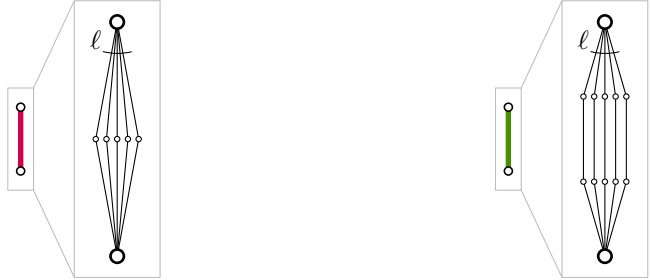

    \centering
    \begin{subfigure}{.45\textwidth}
        \centering
        \includegraphics[page=4]{compound-edges.pdf}
        %\caption{An $\ell_2$-bundle.}
        %\label{fig:ell2-bundle}
    \end{subfigure}
    \begin{subfigure}{.45\textwidth}
        \centering
        \includegraphics[page=5]{compound-edges.pdf}
        %\caption{An $\ell_3$-bundle.}
        %\label{fig:ell3-bundle}
    \end{subfigure}
    \caption{Left: an $\ell_2$-bundle; Right: an $\ell_3$-bundle; drawn as a thick red or green line, respecitvely.}
    \label{fig:ellx-bundle}
\end{figure}

\subsection{Planarly Connected Graphs}
\label{sec:plan-con}

In a planarly connected drawing two edges can cross only if their endpoints are connected by another edge which has no crossings.
We show that $\varrho_\plconsub \in \Theta(n^2)$
 by constructing a graph in which a crossing occurs between two edges whose endpoints are not connected when two vertices are drawn in the same face of a cycle composed of $\ell_2$-bundles. When drawing the vertices in separate faces, two $\ell_2$-bundles have to cross, resulting in a quadratic number of crossings.
\begin{restatable}{lemma}{plconbundle}
\label{lemma:bundle_plcon}
If a planarly connected graph $G$ contains $\ell_2$-bundles, then there is a planarly connected crossing-minimal drawing $\Gamma$ of $G$ in which paths from the same $\ell_2$-bundle do not cross each other, all paths of the same $\ell_2$-bundle cross the same edges, and
no vertex lies between two consecutive paths of the same $\ell_2$-bundle.
\end{restatable}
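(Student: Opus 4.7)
The plan is to modify any crossing-minimal planarly connected drawing $\Gamma_0$ of $G$ into one satisfying the three claimed properties via a \emph{parallel bundling} operation applied to each $\ell_2$-bundle in turn. For a bundle $B$ between vertices $u$ and $v$ with paths $p_1,\ldots,p_\ell$ and intermediate vertices $w_1,\ldots,w_\ell$, I first pick the path $p^*=p_{i^*}$ minimizing the number of crossings with edges outside $B$. I then delete the other paths from the drawing and redraw them inside a thin tubular neighborhood of $p^*$: place each $w_i$ close to $w_{i^*}$ and route $uw_i$ and $w_iv$ parallel to $uw_{i^*}$ and $w_{i^*}v$ respectively, with the tube chosen thin enough that it contains no vertex or edge of $G$ other than $u$, $v$, the intermediate vertices of $B$, and the pieces of edges of $G\setminus B$ that already crossed $p^*$. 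Iterating over all bundles produces the candidate drawing $\Gamma$.

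By construction $\Gamma$ satisfies the three claimed properties for each bundle: parallel paths inside a thin tube cannot cross one another, every edge crossing $p^*$ crosses each of the $\ell$ parallel paths the same number of times while other edges stay outside the tube, and no foreign vertex lies in the narrow region between consecutive paths. To see that crossing-minimality is preserved, let $X_i$ denote the number of crossings of $p_i$ with edges outside $B$ and $S$ the number of self-crossings within $B$. The bundling changes the total crossings involving $B$ from $S+\sum_i X_i$ to $\ell X_{i^*}$, and $\ell X_{i^*}\le \sum_i X_i$ since $X_{i^*}$ is the minimum of the $X_i$, so the total does not increase. Because $\Gamma_0$ was already crossing-minimal the inequality is in fact an equality, so the operation can be performed bundle by bundle without ever exceeding the optimum.

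The hard part is verifying that planar connectivity survives. Every crossing in $\Gamma$ is either inherited from $\Gamma_0$ unchanged or is a translate of an old crossing between an edge of $p^*$ and an outside edge $e$, now realized as a crossing of $e$ with a new parallel bundle edge, say the new $uw_i$. The old crossing carried a certifier $c$: a crossing-free edge of $G$ in $\Gamma_0$ connecting an endpoint of $e$ to $u$ or $w_{i^*}$. Since $w_{i^*}$ has degree two with both incident edges inside $B$, the certifier $c$ is either (a) an edge incident to $u$, which is still an endpoint of the new $uw_i$ and therefore continues to certify, or (b) the bundle edge $w_{i^*}v$ (forcing $v$ to be an endpoint of $e$), in which case the parallel bundle edge $w_iv$ runs tightly alongside $w_{i^*}v$ inside the tube, inherits its crossing-freeness, and serves as the new certifier. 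A symmetric analysis handles crossings of $e$ with $w_iv$, and iterating the whole construction over all $\ell_2$-bundles yields the desired $\Gamma$.
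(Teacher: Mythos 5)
Your proof is correct and takes essentially the same route as the paper's: for each $\ell_2$-bundle, select the path with the fewest external crossings and reroute the remaining paths in a thin parallel tube around it, using crossing-minimality of the original drawing to conclude that the operation cannot increase the count. Your explicit certifier analysis for why the planarly-connected condition survives the rerouting (splitting on whether the certifying crossing-free edge is incident to $u$ or, via the degree-two property, must be the bundle edge $w_{i^*}v$) is in fact more careful than the paper's, which only asserts that the degree-two interior vertices make the parallel redrawing possible.
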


\begin{proof}
We first prove that in every planarly connected crossing-minimal drawing of $G$, paths from the same $\ell_2$-bundle do not cross each other.

Suppose there is a drawing where two paths $p_1$ and $p_2$ of an $\ell_2$-bundle cross. One of the paths, say $p_1$, has less or equal crossings than the other path, say $p_2$. If we draw $p_2$ parallel and sufficiently close to $p_1$, such that there are no vertices between $p_1$ and $p_2$ and they both cross the same edges, then the number of crossings does not increase. However, since $p_1$ and $p_2$ do not cross anymore, the number of crossings decreases by at least one.
Recall that, by definition, in a graph containing $\ell_2$-bundles the vertices interior to a bundle  have degree exactly two.
Thus, in a planarly connected drawing it is always possible to redraw a path $p$ in an $\ell_2$-bundle parallel and sufficiently close to another path $p'$ in that bundle.
Therefore, in a planarly connected crossing minimal drawing edges of the same $\ell_2$-bundle do not cross.

The redrawing argument can also be used to
prove that there is a planarly connected crossing-minimal drawing $\Gamma$ of $G$ in which
all paths from the same $\ell_2$-bundle cross the same edges and no vertex is drawn in between two consecutive paths of the same $\ell_2$-bundle.
Given a planarly connected crossing-minimal drawing of $G$,
for each $\ell_2$-bundle
we can select a path with the minimum number of crossings and redraw all the other paths close to it
such that they all cross the same edges and no vertex lies in the small gap between one path and the next.
\end{proof}
\begin{figure}[b]
    \centering
    \begin{subfigure}{.45\textwidth}
        \centering
        \includegraphics[page=7]{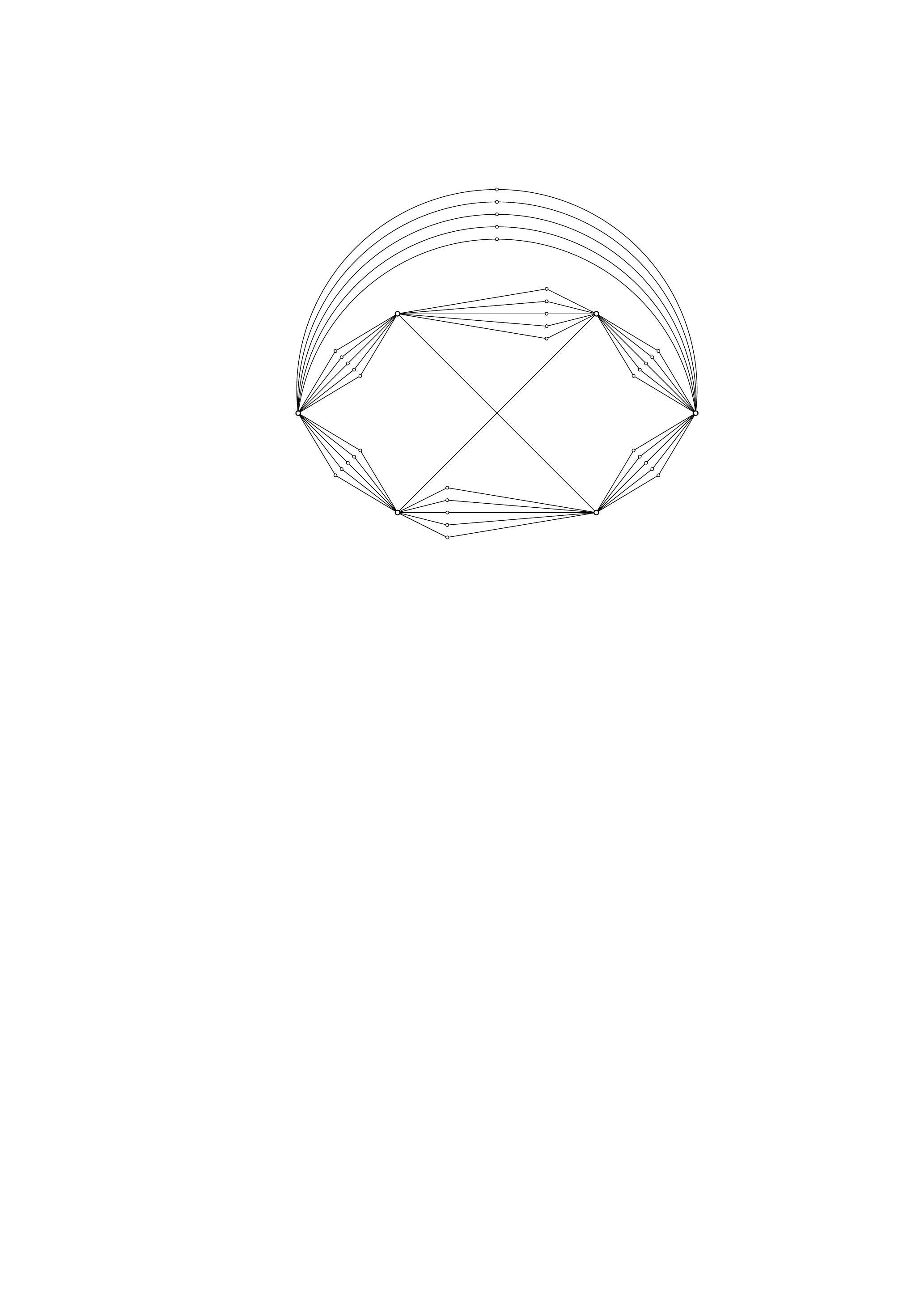}
        % \caption{$\crnum(G_\ell) = 1$}
        \label{fig:planarly-connected-appendix-left}
    \end{subfigure}
    \begin{subfigure}{.45\textwidth}
        \centering
        \includegraphics[page=8]{new_plcon.pdf}
        % \caption{$\crnum_\plconsub(\Gell) = \ell^2$}
        \label{fig:planarly-connected-appendix-right}
    \end{subfigure}
    \caption{Left: drawing of $\Gell$ with $\crnum(G_\ell) = 1$; Right: planarly connected drawing of $\Gell$ with $\crnum_\plconsub(\Gell) = \ell^2$, $\ell_2$-bundles are drawn with thick red lines.}
    \label{fig:planarly-connected-appendix}
\end{figure}
\begin{theorem}\label{theorem:planarcon}
For every $\ell \geq 2$ there exists a planarly connected graph $G_\ell$ with $n = 7\ell + 6$ vertices such that $\crnum_\plconsub(G_\ell) \geq \ell^2$ and $\crnum(G_\ell) \leq 1$. Thus, $\varrho_\plconsub \in \Theta(n^2).$
\end{theorem}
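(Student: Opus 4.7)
The plan is to leverage Lemma~\ref{lemma:bundle_plcon}, which lets me assume the $\ell$ paths of each $\ell_2$-bundle are drawn as a tight pencil of parallel arcs, so that a single crossing between two $\ell_2$-bundles immediately blows up to $\ell\cdot\ell = \ell^2$ crossings. I therefore design $G_\ell$ so that the $\crnum(G_\ell) \leq 1$ drawing uses only a single crossing between ordinary edges (no bundle--bundle crossing), while every planarly connected drawing of $G_\ell$ forces two $\ell_2$-bundles to cross.

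Concretely, I take a small base graph $G'$ consisting of a short cycle $C$ together with two additional vertices $x$ and $y$ and a few extra edges arranged so that (i) there is a drawing of $G'$ with exactly one crossing in which $x$ and $y$ lie in a common face of $C$, and (ii) any drawing in which $x$ and $y$ lie in distinct faces of $C$ requires an edge incident to $x$ or $y$ to cross an edge of $C$. Replacing exactly seven edges of $G'$ by $\ell_2$-bundles of $\ell$ paths each yields a graph $G_\ell$ on $n = 6 + 7\ell$ vertices. The upper bound $\crnum(G_\ell) \leq 1$ is witnessed directly by drawing (i), as in Figure~\ref{fig:planarly-connected-appendix}~left, with each bundle routed as a pencil of parallel paths.

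For the lower bound, by Lemma~\ref{lemma:bundle_plcon} I may restrict attention to planarly connected crossing-minimal drawings in which the paths of each $\ell_2$-bundle are pairwise non-crossing, cross the same edges of $G_\ell$, and enclose no vertex of $G_\ell$ between two consecutive paths. Under these assumptions, a single crossing between two distinct bundles forces all $\ell^2$ pairs of paths to cross exactly once, yielding at least $\ell^2$ crossings in $\Gamma$. It therefore suffices to rule out planarly connected drawings of $G_\ell$ in which no two $\ell_2$-bundles cross. In any such drawing the subgraph consisting of the bundles must itself be planar, so the cycle $C$ together with the $\ell_2$-bundles incident to $x$ and $y$ fixes a planar embedding up to choice of outer face; the remaining simple edges then produce at least one crossing whose two endpoints are not joined by any crossing-free edge, contradicting planar connectedness.

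The main obstacle is this final topological case analysis: I must verify that in every planar embedding of the bundle subgraph no simple-edge crossing can be ``rescued" by an alternative crossing-free edge connecting its endpoints, so every bundle-crossing-free placement violates planar connectedness. Once this case analysis is in place, $\ell^2/1 = \Theta(n^2)$ gives the $\Omega(n^2)$ lower bound. The matching $O(n^2)$ upper bound is routine: a planarly connected graph on $n$ vertices has $O(n)$ edges, so any planarly connected drawing has at most $O(n^2)$ crossings, and if $\crnum_\plconsub(G) > \crnum(G)$ then $\crnum(G) \geq 1$, yielding $\varrho_\plconsub \in O(n^2)$.
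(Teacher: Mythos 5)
Your overall strategy is the paper's strategy: replace the edges of a small base graph by $\ell_2$-bundles, invoke Lemma~\ref{lemma:bundle_plcon} to reduce to drawings where a single bundle--bundle crossing costs $\ell^2$, and then show that any bundle-crossing-free planarly connected drawing is impossible because the simple edges are forced into a crossing whose endpoints are not joined by a crossing-free edge. The vertex count $n=7\ell+6$ and the $O(n^2)$ upper bound are also handled as in the paper. However, there is a genuine gap: you never actually exhibit the graph. You posit ``a short cycle $C$ together with two additional vertices $x$ and $y$ and a few extra edges arranged so that (i) \dots and (ii) \dots,'' and you yourself flag that the ``final topological case analysis'' remains to be verified. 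That case analysis is not a routine afterthought --- it is the entire content of the lower bound, and whether a graph with properties (i) and (ii) exists at all is precisely what the theorem asserts. Moreover, your intermediate claim that the bundle subgraph ``fixes a planar embedding up to choice of outer face'' is not something you can get for free: the paper's bundle subgraph (a $6$-cycle $\langle u_1,\dots,u_6\rangle$ plus a bundle chord $\edge{u_1}{u_4}$) is a theta-like graph that is not $3$-connected, so the paper does not argue via unique embeddability but via an explicit dichotomy on whether $u_2$ and $u_3$ lie in the same face of the bundle cycle $\langle u_1,u_4,u_5,u_6\rangle$.

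For the record, the paper's concrete choice is: $6$-cycle plus a chord $\edge{u_1}{u_4}$, all seven edges bundled, plus three \emph{simple} edges $\edge{u_1}{u_4}$, $\edge{u_2}{u_5}$, $\edge{u_3}{u_6}$. With that graph in hand, the missing step becomes a finite check: if $u_2$ and $u_3$ lie in the same face of $C=\langle u_1,u_4,u_5,u_6\rangle$, one of the four pairs $\{\edge{u_1}{u_2},\edge{u_3}{u_4}\}$, $\{\edge{u_1}{u_2},\edge{u_3}{u_6}\}$, $\{\edge{u_2}{u_5},\edge{u_3}{u_6}\}$, $\{\edge{u_2}{u_5},\edge{u_3}{u_4}\}$ must cross, and none of these pairs has endpoints joined by an edge of $G_\ell$; if they lie in different faces, the bundle $\edge{u_2}{u_3}$ crosses the $\ell$ edge-disjoint cycles of $C$, giving $\ell^2$ crossings (this Jordan-curve count is how the paper gets $\ell^2$, rather than your ``all $\ell^2$ pairs cross exactly once'' reading of the lemma, which also works but needs the two-sided application of the ``same edges'' property spelled out). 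Without specifying the graph and performing this check, the proposal is an outline of the paper's proof rather than a proof.
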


\begin{proof}
The upper bound $\varrho_\plconsub \in O(n^2)$ follows directly from the fact that a planarly connected graph with $n$ vertices has at most $cn$ edges, where $c$ is a  constant~\cite{maxedgenr_pl-con}.

For the lower bound, we construct a graph $\Gell$.
%, where the red curves represent $\ell_2$-bundles consisting of $\ell$ disjoint paths of length two.
We start with an 6-cycle $\langle u_1,u_2, \dots ,u_6 \rangle$, and an edge $\edge{u_1}{u_4}$, resulting in the graph $G'$. Then we replace each edge in $G'$ with an $\ell_2$-bundle.
% by adding $\ell$ disjoint paths of length two between its endpoints.
Finally, add regular edges $\edge{u_1}{u_4}$, $\edge{u_2}{u_5}$ and $\edge{u_3}{u_6}$.
The graph $\Gell$ has $n= 7\ell + 6$ vertices and admits a drawing with only one crossing (see Figure~\ref{fig:planarly-connected-appendix}, left).
%We show that a planarly connected drawing of $\Gell$ has at least $\ell^2$ crossings.

By Lemma~\ref{lemma:bundle_plcon} there is a planarly connected crossing-minimal drawing $\Gamma$ of $\Gell$
%where vertices are not drawn in the faces between consecutive paths of $\ell_2$-bundles, and
in which for every $\ell_2$-bundle
its edges do not cross each other,
all paths cross the same edges, and no vertex lies between two consecutive paths.
Let $C$ be the cycle of $\ell_2$-bundles $C = \langle u_1,u_4,u_5,u_6 \rangle$.
If $C$ crosses itself in $\Gamma$, then there are $\ell^2$ crossings.
Otherwise $C$ is drawn without crossings in $\Gamma$ and defines two faces of size eight.
The other faces do not contain vertices of $\Gamma$ in their interior.
Thus,~$u_2$ and~$u_3$ can either be drawn in the same face defined by $C$, or in different faces.
If they are drawn in the same face, $\Gamma$ cannot be planarly connected: a simple case distinction shows that there has to be a crossing between two edges whose endpoints are not connected by any other edge.
Specifically, there is a crossing between either $\edge{u_1}{u_2}$ and $\edge{u_3}{u_4}$, $\edge{u_1}{u_2}$ and $\edge{u_3}{u_6}$, $\edge{u_2}{u_5}$ and $\edge{u_3}{u_6}$ or $\edge{u_2}{u_5}$ and $\edge{u_3}{u_4}$. These are all not allowed, as each time there is no edge connecting the two edges.
If $u_2$ and~$u_3$ are drawn in different faces defined by $C$,
then the $\ell_2$-bundle $\edge{u_2}{u_3}$ has to cross $C$. As~$C$ is composed of $\ell_2$-bundles,
there are $\ell$ cycles, distinct in edges, all separating $u_3$ and $u_4$. All the $\ell$ distinct paths in the $\ell_2$-bundle $\edge{u_2}{u_3}$ have to cross the $\ell$ distinct (in edges) cycles in~$C$. Therefore,
$\Gamma$ has at least $\ell^2$ crossings.
\end{proof}

\subsection{\boldmath $k$-Fan-Crossing-Free Graphs}\label{sec:fan-free}
% \hphantom{Empty}
\begin{wrapfigure}[9]{r}{0.1\linewidth}
  \vspace{-.4cm}
  \raggedleft
  \includegraphics[page=7]{compound-edges.pdf}
\end{wrapfigure}
In a $k$-fan-crossing-free drawing it is forbidden for an edge to cross $k$ edges which are adjacent to the same vertex.
While it is possible for an edge to ``weave'' through an $\ell_2$-bundle with $\ell \leq 2k-2$ (see figure on the right), no edge can cross an $\ell_2$-bundle with $\ell \geq 2k-1$.
In contrast, $\ell_3$ bundles can be crossed for any $\ell$.
% Two $\ell_2$-bundles with $\ell \geq 2k-1$ cannot cross in a $k$-fan-crossing-free drawing, while $\ell_3$-bundles can cross for any $k$ and $\ell$.
Lemma~\ref{lemma:bundle_fanfree} argues that, given a sufficiently large $\ell$, there is a fan-crossing-free drawing that is crossing minimal in which $\ell_2$-bundles
do not cross themselves and
no vertex lies between two consecutive paths of the same $\ell_2$-bundle.
We show that $\varrho_\fanfreesub \in \Omega(n^2)$ and $\varrho_\kfanfreesub{k} \in \Omega(n^2/k^3)$ using $\ell_2$-bundles to force $\ell_3$-bundles to cross.

\begin{restatable}{lemma}{fanfreebundle}
\label{lemma:bundle_fanfree}
Let $G$ be a $k$-fan-crossing-free graph and let $p$ be the length of the longest simple path in $G$.
If $G$ contains $\ell_2$-bundles with $\ell \geq %4kp+1$
2p(k-1) +1$, then in any crossing-minimal $k$-fan-crossing-free drawing of $G$
the edges in the $\ell_2$-bundles have no crossings.
\end{restatable}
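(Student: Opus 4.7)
The plan is the same contradiction-by-rerouting strategy used in Lemma~\ref{lemma:bundle_plcon}. I would suppose that $\Gamma$ is a crossing-minimal $k$-fan-crossing-free drawing of $G$ in which some path of an $\ell_2$-bundle $B$ between vertices $u$ and $v$ has a crossing, and try to derive a contradiction. The whole argument reduces to exhibiting a single path $q^\star$ of $B$ that is crossing-free in $\Gamma$: once $q^\star$ is available, I would redraw every other path of $B$ parallel and arbitrarily close to $q^\star$. Because the internal vertices of paths in $B$ have degree two, this rerouting is always geometrically feasible; every path of $B$ becomes crossing-free; crossings that do not involve $B$ are unaffected; and at every vertex $z$ of $G$ the number of edges incident to $z$ that cross any given edge can only decrease, so $k$-fan-crossing-freeness is preserved. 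The rerouted drawing then has strictly fewer crossings than $\Gamma$, contradicting minimality.

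To produce $q^\star$ I would bound the number of paths of $B$ that can be crossed in $\Gamma$. Two ingredients are needed. First, a per-edge bound: for any edge $e \in G \setminus B$, each crossing of $e$ with $B$ is at a $u$-side edge or at a $v$-side edge of some path of $B$, and the $k$-fan-crossing-free property applied to $e$ allows at most $k-1$ crossings with edges incident to $u$ and at most $k-1$ with edges incident to $v$. Hence $e$ meets at most $2(k-1)$ distinct paths of $B$. Second, a global bound: the number of edges of $G \setminus B$ that cross $B$ in $\Gamma$ is at most $p$, the length of the longest simple path in $G$. Multiplying the two gives at most $2p(k-1)$ paths of $B$ with crossings, so the hypothesis $\ell \geq 2p(k-1)+1$ forces at least one path of $B$ to be crossing-free, which is $q^\star$.

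The main obstacle is the global bound; the per-edge bound is a one-line application of fan-freeness, but ruling out the possibility that many pairwise-independent edges of $G$ cross $B$ needs more care. My plan here is a minimality argument: for any two edges $e_1, e_2 \in G \setminus B$ that both cross $B$, if they are ``independent enough'' (no shared endpoint and not linked through the drawing) then one of them admits a local reroute away from $B$ that saves at least one crossing while still respecting $k$-fan-crossing-freeness, contradicting the minimality of $\Gamma$. What remains must lie along a single simple path in $G$, whose length is at most $p$. Formalising this via the cyclic order in which the crossing edges enter and leave the faces carved out by the drawing of $B$ looks like the cleanest route, though maintaining the fan-freeness invariant under the local reroutes will be the delicate part.
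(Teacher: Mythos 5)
Your overall strategy---reroute the bundle to contradict crossing-minimality---is in the right spirit, and your per-edge bound (an edge crossing $B$ meets at most $k-1$ edges of $B$ incident to $u$ and at most $k-1$ incident to $v$, hence at most $2(k-1)$ paths of $B$) is exactly the local fact the paper uses. But the way you combine it with $p$ is wrong. Your ``global bound,'' that at most $p$ edges of $G\setminus B$ cross $B$ in a crossing-minimal drawing, is false: nothing in $k$-fan-crossing-freeness or in minimality limits the \emph{number} of distinct edges crossing a bundle; for instance, $\ell$ pairwise independent edges could each cross one path of $B$ once, which violates no fan condition, and $\ell\geq 2p(k-1)+1>p$. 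The minimality argument you sketch to rescue this (``one of two independent crossing edges can be rerouted away from $B$'') does not go through, because rerouting an edge out of the bundle's region can force arbitrarily many new crossings elsewhere---indeed the lower-bound constructions in this very paper rely on single edges being \emph{forced} to cross bundles in every crossing-minimal drawing. Since the product $2p(k-1)$ is not a valid bound on the number of crossed paths, your crossing-free path $q^\star$ need not exist, and the whole argument collapses at its crux. (A secondary gap: your count only concerns edges of $G\setminus B$, so even if it held, paths of $B$ crossing \emph{each other} could leave no crossing-free $q^\star$.)

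The role of $p$ in the intended argument is different. Let $R$ be the region bounded by the two outer curves $\gamma_1,\gamma_2$ of the drawing of $B$, and consider the connected components of the part of the drawing lying strictly inside $R$. A component touching both $\gamma_1$ and $\gamma_2$ would have to contain the drawing of a single simple path of $G$, of length at most $p$, that crosses all $\ell$ paths of $B$; by the per-edge bound such a path crosses at most $2p(k-1)<\ell$ edges of $B$, a contradiction. So every component is attached to at most one side of the bundle, and one can redraw $B$ crossing-free in the interior of $R$ while pushing each component into an $\varepsilon$-strip along the side of $R$ it touches, without creating new crossings. Here $p$ bounds the reach of one connected component across the bundle, not the number of edges that cross it; that is the idea your proposal is missing.
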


\begin{proof}

\begin{figure}[b]
    \centering
    \begin{subfigure}{.48\textwidth}
        \centering
        \includegraphics[page=1]{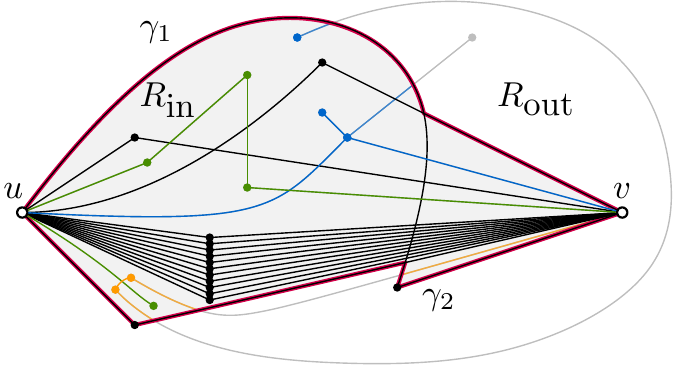}
        %\caption{An $\ell_2$-bundle crossed by itself and other edges.}
        \label{fig:l2bundles-kfanfree-left}
    \end{subfigure}\hfill
    \begin{subfigure}{.48\textwidth}
        \centering
        \includegraphics[page=2]{figures/lemma_l2bundles.pdf}
        %\caption{Redrawing the edges in $R_{\text{in}}$ that are not part of the $\ell_2$-bundle.}
        \label{fig:l2bundles-kfanfree-right}
    \end{subfigure}
    \caption{
    Left: An $\ell_2$-bundle $\mathcal{B}$ crossed by itself and by other edges.
    %The subsets $\mathcal{C}^-_0$, $\mathcal{C}^-_1$, and $\mathcal{C}^-_2$ are drawn in green, blue, and yellow, respectively.
    The four parts in which $\Gamma_R$ is partitioned are drawn in different colors:
    black, green, blue, and yellow.
    % Right: Redrawing the (parts of) edges in $R_{\text{in}}$ that are not in $\mathcal{B}$. This removes the crossings of $\mathcal{B}$ with other edges in the drawing.
    Right: Redrawing the different (parts of) edges in $\Gamma_R$. This removes the crossings of $\mathcal{B}$.
    }
    \label{fig:l2bundles-kfanfree}
\end{figure}

Let $\Gamma$ be a crossing-minimal $k$-fan-crossing-free drawing of $G$ and let $\Lambda$ be the drawing of an $\ell_2$-bundle $B$ in~$\Gamma$.
We assume for the sake of contradiction that $\Lambda$ has crossings in~$\Gamma$. These crossings can involve solely edges of $B$ and also other edges of $G$.
Let $u$ and $v$ denote the two endpoints of the bundle $B$
The drawing $\Lambda$ is bounded by two curves $\gamma_1$ and $\gamma_2$ which connect $u$ and $v$. These curves are formed by the drawings of (parts of) edges of $B$ (see Figure~\ref{fig:l2bundles-kfanfree}) and they bound a region $R$ of the plane, which is exactly the region in which $B$ is drawn. Our goal is to redraw $G$ without increasing the number of crossings such that the drawing remains $k$-fan-crossing-free and $(i)$ all parts of $G$ which are not $B$ are drawn in a $\varepsilon$-strip along the boundary of $R$
%outside or on the boundary of $R$,
and $(ii)$ the bundle $B$ is drawn crossing-free in the interior of $R$. Once the former is achieved the latter is straightforward.

Consider the \emph{restriction} $\Gamma_R$ of the drawing $\Gamma$ to $R$, that is, only those parts of $\Gamma$ that lie strictly in the interior of $R$. Specifically, neither the vertices $u$ and $v$ nor the curves $\gamma_1$ and $\gamma_2$ are part of $\Gamma_R$. We say that two vertices of $G$ are $R$-connected if they are connected by a path which is drawn fully in the interior of $R$. We can partition $\Gamma_R$
into four parts: edges and vertices of $B$, $R$-connected components that lie full inside $R$, $R$-connected components that touch the interior of $\gamma_1$, and $R$-connected components that touch the interior of $\gamma_2$. For example, the green path in Figure~\ref{fig:l2bundles-kfanfree} lies fully in the interior of $R$, the blue $R$-connected components touch the interior of $\gamma_1$, and the yellow $R$-connected components touch the interior of $\gamma_2$.

The drawing $\Gamma$ is $k$-fan-crossing-free and $\ell > 2p(k-1)$, that is, $\ell$-bundles are more than $2(k-1)$ wider than the longest path in $G$. Any $R$-connected component that touches both $\gamma_1$ and $\gamma_2$ must contain parts of the drawing of a path in $G$ that cross all paths in $B$.
Since in a $k$-fan-crossing-free drawing an edge can cross at most $k-1$ edges incident to a vertex, in particular $u$ or $v$,
in total the edges of a path can cross at most $2p(k-1)$ edges of $B$.
Hence no $R$-connected component can touch both $\gamma_1$ and $\gamma_2$.

We can independently and locally redraw the edges of $G$ that correspond to each part, $B$ in the interior of $R$, and the other three parts within a $\varepsilon$-strip along the boundary of $R$. Note that independent fragments of a single edge of $G$ can appear in two parts and will be redrawn independently.
\end{proof}

\begin{figure}[b]
    \centering
    \begin{subfigure}{.45\textwidth}
        \centering
        \includegraphics[page=4]{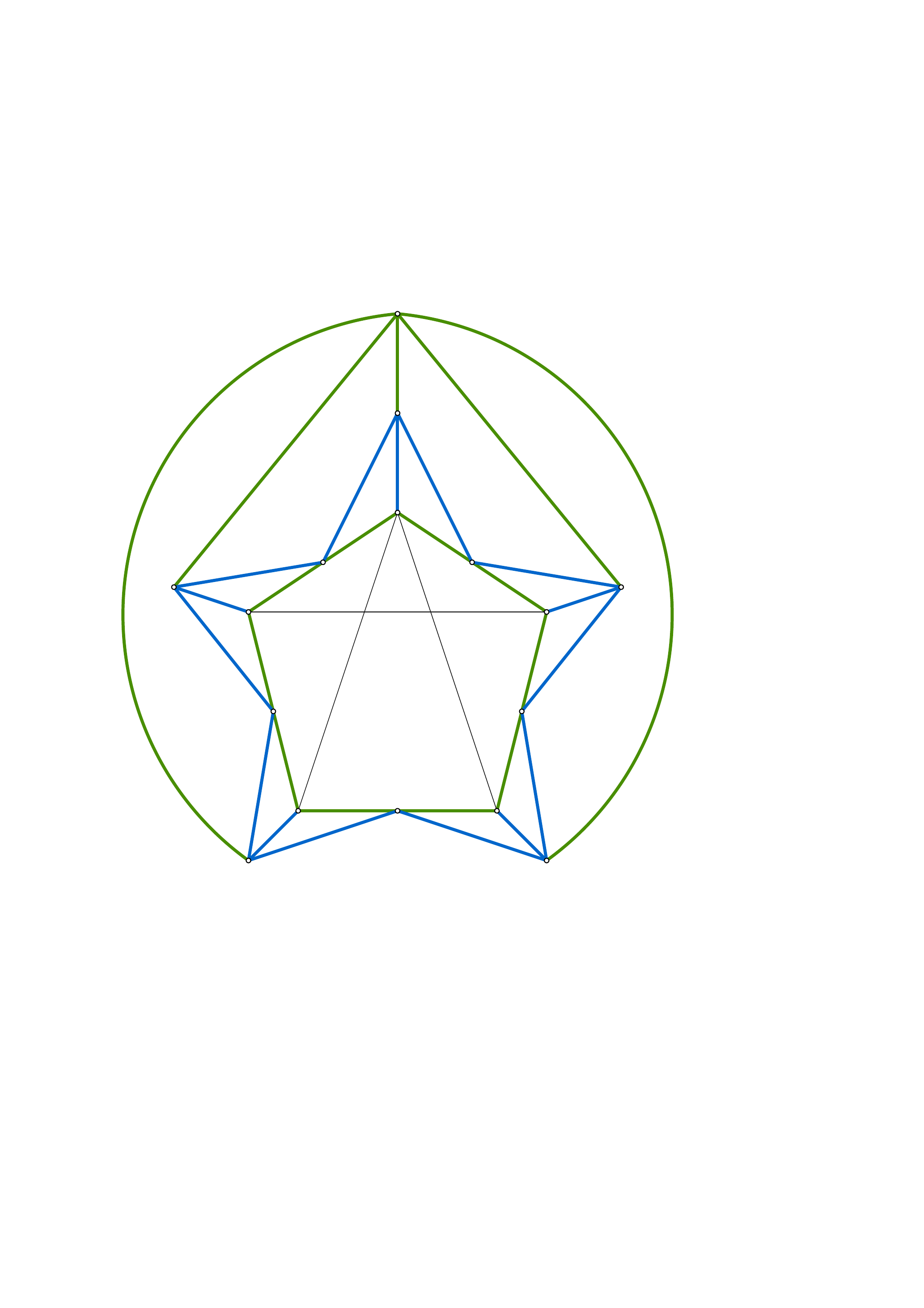}
        % \caption{$\crnum(\Gell) \leq 2$}
        \label{fig:fan-crossing-free-appendix-left}
    \end{subfigure}\hfill
    \begin{subfigure}{.45\textwidth}
        \centering
        \includegraphics[page=6]{fan-crossing-free.pdf}
        % \caption{$ \crnum_\fanfreesub(\Gell) \geq \ell^2$}
        \label{fig:fan-crossing-free-appendix-right}
    \end{subfigure}
    \caption{
    Left: drawing of $\Gell$ with $\crnum(G_\ell) \leq 2$; Right: fan-crossing-free drawing of $\Gell$ with $\crnum_\fanfreesub(\Gell) \geq \ell^2$.
    Thick red lines represent
    $\ell_2$-bundles and thick green lines $\ell_3$-bundles.}
    \label{fig:fan-crossing-free-appendix}
\end{figure}

\begin{theorem}\label{theorem:2fan}
For every $\ell \geq 95$ there exists a fan-crossing-free graph $G_\ell$ with $n = 45\ell + 16$ vertices such that $\crnum_\fanfreesub(G_\ell) \geq \ell^2/2$ and $\crnum(G_\ell) \leq 2$. Thus, $\varrho_\fanfreesub \in \Theta(n^2)$.
\end{theorem}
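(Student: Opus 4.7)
The overall strategy mirrors the planarly connected construction in Theorem~\ref{theorem:planarcon}, adapted to the fan-crossing-free setting by mixing both $\ell_2$-bundles and $\ell_3$-bundles. The $\ell_2$-bundles will act as rigid, uncrossable barriers (thanks to Lemma~\ref{lemma:bundle_fanfree}), while the $\ell_3$-bundles play the role of ``chord'' edges that, when forced to cross a barrier, contribute quadratically many crossings. For the upper bound $\varrho_\fanfreesub \in O(n^2)$, I would invoke the known fact that a fan-crossing-free graph on $n$ vertices has $O(n)$ edges, so any crossing-minimal drawing has at most $O(n^2)$ crossings.

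For the lower bound I would start from a small skeleton $G'$ analogous to the one in Theorem~\ref{theorem:planarcon}: a short cycle $\langle u_1,\dots,u_6\rangle$ together with three ``diametral'' chords $\edge{u_1}{u_4}$, $\edge{u_2}{u_5}$, $\edge{u_3}{u_6}$, possibly augmented by a central apex vertex $x$ connected to the cycle vertices in order to rule out planarly inequivalent embeddings. I would then form $G_\ell$ by replacing every edge of the cycle (and every $x$-edge, if present) by an $\ell_2$-bundle, and every chord by an $\ell_3$-bundle. The vertex count $n=45\ell+16$ should work out from the number of interior vertices contributed by each bundle ($\ell$ per $\ell_2$-bundle, $2\ell$ per $\ell_3$-bundle). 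The drawing witnessing $\crnum(G_\ell)\le 2$ follows by laying out the cycle convexly, placing $x$ in the inner face, and routing the three $\ell_3$-bundles so that only two of them cross in a single pair of crossings.

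For the crossing lower bound on fan-crossing-free drawings, I would argue in four steps. First, verify that the longest simple path $p$ in $G_\ell$ satisfies $p\le 47$, so that for $\ell\ge 95$ the hypothesis $\ell\ge 2p(k-1)+1$ of Lemma~\ref{lemma:bundle_fanfree} (with $k=2$) holds; hence in any crossing-minimal fan-crossing-free drawing $\Gamma$ every $\ell_2$-bundle is drawn crossing-free. Second, conclude that the cycle formed by the $\ell_2$-bundles is drawn planarly and splits the plane into two faces, and that all interior bundle vertices sit in the $\varepsilon$-strips of the bundles so no $\ell_3$-bundle can weave between them. Third, perform a case analysis on how the endpoints of the three $\ell_3$-chords can be placed relative to the two faces of this cycle; as in the planarly connected proof, every placement forces at least one $\ell_3$-bundle to cross one of the $\ell_2$-bundles of the cycle. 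Fourth, count crossings: an $\ell_3$-bundle crossing an $\ell_2$-bundle must traverse each of the $\ell$ parallel paths of the $\ell_2$-bundle; each of the $\ell$ paths of the $\ell_3$-bundle contributes to this traversal, yielding at least $\ell^2/2$ crossings (the factor $1/2$ accounts for the possibility that a length-three path may cross one paired path of the barrier with a single edge, halving the naive $\ell^2$ count).

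\textbf{Main obstacle.} The delicate step is the last one. Because fan-crossing-free is considerably more permissive than planarly connected, a single edge can indeed ``weave'' through a thin bundle, and an $\ell_3$-bundle is chosen precisely so that its length-three paths remain legal while crossing barriers. I must therefore argue carefully that, even after all permissible weaving, each traversal of an $\ell_2$-bundle by an $\ell_3$-bundle still incurs $\Omega(\ell^2)$ crossings, and that no clever re-routing of the $\ell_3$-bundle (e.g., sending different paths around different sides of the cycle) can avoid this cost while respecting the fan-crossing-free constraint. Pinning down exactly where the factor $1/2$ comes from, and confirming the vertex accounting $n=45\ell+16$ and the path-length bound $p\le 47$ underlying $\ell\ge 95$, are the technical bookkeeping tasks the formal proof must dispatch.
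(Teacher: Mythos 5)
There are two genuine gaps that make your plan fail as stated. The first is that your crossing-count mechanism is self-defeating. You propose to force an $\ell_3$-bundle to cross an $\ell_2$-bundle and to harvest the $\ell^2/2$ crossings from that event. But, as you yourself observe (and as Lemma~\ref{lemma:bundle_fanfree} formalizes), once $\ell \geq 2p(k-1)+1$ no edge of any path can traverse an $\ell_2$-bundle in a fan-crossing-free drawing: every crossing-minimal fan-crossing-free drawing has \emph{all} $\ell_2$-bundles crossing-free. So if your construction really forced an $\ell_3$-bundle across an $\ell_2$-bundle, you would have shown that $G_\ell$ admits no fan-crossing-free drawing at all, contradicting the requirement that $G_\ell$ be a fan-crossing-free graph. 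The expensive crossings must instead be forced between two $\ell_3$-bundles, whose middle edges are pairwise independent, so arbitrarily many such crossings are legal. This is exactly what the paper engineers: a 10-cycle $\langle u_1,\dots,u_{10}\rangle$ whose edges are $\ell_3$-bundles, five vertices $f_1,\dots,f_5$ joined to consecutive triples of the $u_i$ by $\ell_2$-bundles, an apex $x$ joined to the $f_i$ by $\ell_3$-bundles, and three \emph{single} edges $\edge{u_1}{u_5}$, $\edge{u_1}{u_7}$, $\edge{u_3}{u_9}$. The rigid plane cycle $F=\langle f_1,u_2,f_2,\dots,f_5,u_{10}\rangle$ of $\ell_2$-bundles plus the single edges force some odd $u_i$ into the face of $F$ containing $x$, and then some $\ell_3$-bundle $\edge{x}{f_j}$ must cross at least $\ell/2$ of the $\ell$ paths of a mixed cycle $Z$ whose only crossable part is an $\ell_3$-bundle, giving $\ell\cdot\ell/2$ crossings. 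The factor $1/2$ comes from a majority argument over the $\ell$ sub-cycles of $Z$, not from single edges weaving through a barrier.

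The second gap is that your construction breaks the upper bound $\crnum(G_\ell)\leq 2$. You replace \emph{every} chord by an $\ell_3$-bundle and keep no single edges. Choosing one path from each bundle yields a subdivision of your non-planar skeleton (the 6-cycle with its three diameters is $K_{3,3}$), and there are $\ell^{|E(G')|}$ such copies; a drawing with only two crossings involves at most four bundle paths, so for $\ell\geq 2$ some copy avoids all of them and would be drawn plane, which is impossible. Hence $\crnum$ of your graph is far larger than $2$. The edges that are supposed to cross in the cheap drawing must remain single edges (in the paper the two crossings of the cheap drawing occur among the three regular edges), and only the edges meant to act as barriers or as crossing multipliers are bundled. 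Finally, the bookkeeping you defer does not come out: a 6-cycle and six spokes as $\ell_2$-bundles plus three chords as $\ell_3$-bundles gives $18\ell+7$ vertices, not $45\ell+16$, and the bound $p\leq 47$ is tied to the paper's 16 non-interior vertices, not to your 7.
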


\begin{proof}
The upper bound $\varrho_\plconsub \in O(n^2)$ follows directly from the fact that a fan-crossing-free  graph  with $n$ vertices  can  have  at  most  $4n-8$  edges~\cite{maxedgenr_fanfree}.
%Thus,  a  crossing-minimal fan-crossing-free drawing cannot have more than $(4n-8)^2$ crossings and therefore $\varrho_\fanfreesub \in O(n^2)$.

For the lower bound we construct a graph $\Gell$ (see Figure~\ref{fig:fan-crossing-free-appendix}).
We start with a 10-cycle $C = \langle u_1,u_2 \dots, u_{10} \rangle$. We add five vertices $f_1,f_2,\dots, f_5$ and add the edges $\edge{f_i}{u_{2i-2}}, \edge{f_i}{u_{2i-1}}, \edge{f_i}{u_{2i}}$ for $i = 1,2,\dots,5$. We add a new vertex $x$ and connect it to each $f_i$ for $i = 1,2,\dots,5$.
We then replace each edge $\edge{f_i}{u_j}$
by an $\ell_2$-bundle and replace each edge $\edge{x}{f_i}$ and $\edge{u_i}{u_j}$ %in~$G'$
by an $\ell_3$-bundle. Finally, we add the  regular edges $\edge{u_1}{u_{5}}, \edge{u_1}{u_{7}}$, and $\edge{u_3}{u_{9}}$.
%(see Figure~\ref{fig:fan-crossing-free-appendix}).
%
The resulting graph $\Gell$ has $n= 45\ell + 16$ vertices and admits a drawing with only two crossings (see Figure~\ref{fig:fan-crossing-free-appendix}, left).
%We now show that a crossing-minimal fan-crossing-free drawing of $\Gell$ has at least $\ell^2$ crossings.

In $\Gell$ there is a cycle $F = \langle f_1,u_2,f_2,u_4,f_3,u_6,f_4,u_8,f_5,u_{10} \rangle$
of $\ell_2$-bundles.
There are $10 + 5 + 1 = 16$ vertices which are not interior to bundles. The longest path can use at most two vertices inside each bundle and has length at most $48-1 = 47$.
Since $\ell \geq 95 = 2 \cdot 47 \cdot(2-1) + 1$, by Lemma~\ref{lemma:bundle_fanfree}, all $\ell_2$-bundles must be drawn without crossings and $F$ must be drawn plane.
If $u_1$, $u_3$, $u_5$, $u_7$ and $u_9$ all lie in the same face of $F$ then we are effectively in the situation depicted in Figure~\ref{fig:fan-crossing-free-appendix} (left) and the drawing is not fan-crossing-free. Hence $u_1$, $u_3$, $u_5$, $u_7$, and $u_9$ must distribute over the two faces of $F$. Specifically, we can assume that there a vertex $u_i$, $i \in \{ 1, 3, 5, 7, 9\}$, such that $u_i$ lies in the same face of $F$ as $x$.

We consider the cycle $Z$ consisting of the two $\ell_2$-bundles between $u_i$ and $f_{(i+1)/2}$ and between $f_{(i+1)/2}$ and $u_{i-1}$, and the $\ell_3$-bundle between $u_i$ and $u_{i-1}$. Symmetrically, we can consider the cycle $Z'$ involving $u_{i+1}$ instead of $u_{i-1}$. The vertex $x$ can lie within any face of the $\ell_3$-bundle. Consider the cycles formed by each path of the $\ell_3$-bundle together with the two $\ell_2$-bundles. These cycles divide the plan into two faces. We say that the \emph{exterior} face of each cycle is the face that contains $F$.
We say that $x$ lies in the \emph{exterior} of $Z$ ($Z'$, respectively), if it lies in the exterior face of $\ell/2$ of these cycles.
Otherwise, $x$ lies in the \emph{interior} of $Z$ ($Z'$, respectively). Observe that $x$ lies either in the interior of $Z$, or in the interior of $Z'$, or in the exterior of both.

If $x$ lies in the exterior face of both $Z$ and $Z'$, then the $\ell_3$-bundle connecting $x$ and $f_{(i+1)/2}$ crosses at least $\ell/2$ paths in either $Z$ or $Z'$ (see Figure~\ref{fig:fan-crossing-free-appendix}, right; $u_i = u_3$ and $f_{(i+1)/2} = f_2$).
If $x$ lies in the interior face of $Z$, then (by definition of the faces of $Z$) there is an $f_j \neq f_{(i+1)/2}$ which lies in the exterior face of $Z$. The $\ell_3$-bundle connecting $x$ and $f_j$ crosses at least $\ell/2$ paths in $Z$. The argument for $Z'$ is symmetric.
\end{proof}
The proof can be extended to $k$-fan-crossing-free graphs by creating a cycle~$C$ of length $6+2k$ which results in a graph with $3/2 \cdot (6+2k)+1 = 10 + 3k$ vertices not interior to bundles and a longest path of length at most $29 + 9k$.
Since no path should be able to cross the $\ell_2$-bundles, we require that $\ell \geq 2 \cdot (29 + 9k) \cdot(k-1) + 1 = 18k^2 + 40k -57$.

\begin{corollary}\label{corollary:kfan}
For every $\ell \geq 18k^2 + 40k -57$ there exists a $k$-fan-crossing-free graph $G_\ell^k$ with $n = 9k\ell + 27\ell  + 3k + 10$ vertices such that $\crnum_\kfanfreesub{k}(G_\ell^k) \geq \ell^2$ and $\crnum(G_\ell^k) \leq k$.
Thus, $\varrho_\kfanfreesub{k} \in \Omega(n^2/k^3)$.
%Thus, $\varrho_\kfanfreesub{k} \in \Omega(n^2/k)$.
\end{corollary}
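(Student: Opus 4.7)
The plan is to mirror the proof of Theorem~\ref{theorem:2fan} with all parameters scaled for general~$k$. I construct $G_\ell^k$ starting from a cycle $C=\langle u_1,u_2,\ldots,u_{6+2k}\rangle$, adding $k+3$ vertices $f_1,\ldots,f_{k+3}$ with edges $\edge{f_i}{u_{2i-2}}$, $\edge{f_i}{u_{2i-1}}$, $\edge{f_i}{u_{2i}}$ (indices modulo $6+2k$), an apex $x$ connected to every $f_i$, and $k+1$ additional single edges: a fan of $k$ chords of the form $\edge{u_1}{u_{2j+1}}$ for $j=2,\ldots,k+1$ together with one extra chord (generalising $\edge{u_3}{u_9}$) that crosses every edge of the fan in the natural drawing. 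Each $\edge{f_i}{u_j}$ is then replaced by an $\ell_2$-bundle and each edge of $C$ and each spoke $\edge{x}{f_i}$ by an $\ell_3$-bundle; a direct vertex count gives $n=9k\ell+27\ell+3k+10$. A drawing analogous to Figure~\ref{fig:fan-crossing-free-appendix}~(left), with the cycle $F=\langle f_1,u_2,f_2,u_4,\ldots,f_{k+3},u_{6+2k}\rangle$ drawn planarly and the odd-indexed $u_i$'s together with $x$ placed inside $F$, has only the $k$ crossings between the extra chord and the fan edges, yielding $\crnum(G_\ell^k)\leq k$.

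For the lower bound, one verifies that the longest simple path in $G_\ell^k$ has length at most $p=29+9k$: such a path visits at most $10+3k$ non-interior vertices, passes through at most one interior vertex per $\ell_2$-bundle traversal and at most two interior vertices per $\ell_3$-bundle traversal, and may additionally start or end inside a bundle. Because $\ell\geq 2p(k-1)+1=18k^2+40k-57$, Lemma~\ref{lemma:bundle_fanfree} applies and every $\ell_2$-bundle is drawn without crossings in any crossing-minimal $k$-fan-crossing-free drawing~$\Gamma$ of~$G_\ell^k$. In particular, $F$ is plane in $\Gamma$ and bounds exactly two faces.

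The final step reuses the dichotomy from Theorem~\ref{theorem:2fan}. If all $k+3$ odd-indexed vertices $u_1,u_3,\ldots,u_{5+2k}$ lie in the same face of~$F$, the extra single edges are forced into the configuration of Figure~\ref{fig:fan-crossing-free-appendix}~(left), in which the extra chord crosses every one of the $k$ fan edges incident to~$u_1$; this is a $k$-fan violation at~$u_1$, contradicting $k$-fan-crossing-freeness. Otherwise some odd-indexed $u_i$ lies in the same face of~$F$ as~$x$, and for $j=(i+1)/2$ the triangular cycles $Z$ and $Z'$ built from the two $\ell_2$-bundles $\edge{f_j}{u_i}, \edge{f_j}{u_{i\pm 1}}$ and the $\ell_3$-bundle $\edge{u_i}{u_{i\pm 1}}$ are well-defined; the exterior/interior analysis of Theorem~\ref{theorem:2fan} then forces the $\ell_3$-bundle $\edge{x}{f_j}$ (or a spoke $\edge{x}{f_{j'}}$ to some other $f_{j'}$) to cross the $\ell_3$-bundle side of $Z$ or $Z'$ fully, contributing $\ell\cdot\ell=\ell^2$ crossings. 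I expect the main obstacle to lie in the first branch: one must select the $k+1$ extra single edges so that the all-in-one-face configuration genuinely forces a $k$-fan at~$u_1$ rather than a collection of crossings avoidable by local rerouting, while still admitting an upper-bound drawing with exactly $k$ crossings.
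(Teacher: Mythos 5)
Your proposal matches the paper's (very brief) argument for this corollary essentially exactly: same cycle length $6+2k$, same count of $10+3k$ non-bundle-interior vertices, same longest-path bound $29+9k$ yielding the threshold $\ell \geq 2(29+9k)(k-1)+1$, and the same reuse of the dichotomy from Theorem~\ref{theorem:2fan} with a $k$-edge fan at $u_1$ crossed by one extra chord. The paper gives only a one-paragraph sketch here, so your more explicit write-up (including flagging the all-in-one-face branch as the delicate step) is a faithful elaboration of the intended proof rather than a different route.
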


This bound is not tight in $k$. A $k$-fan-crossing-free graph with $n$ vertices can have at most $3(k-1)(n-2)$ edges~\cite{maxedgenr_fanfree}. Thus, a crossing-minimal $k$-fan-crossing-free drawing cannot have more than $3(k-1)(n-2)^2$ crossings. This yields $k_{\kfanfreesub{k}} \in O(k^2n^2)$.

\subsection{Straight-line RAC Graphs}
\label{sec:rac}
A straight-line RAC drawing (Right Angle Crossings)  is a straight-line drawing of a graph in which  any two crossing edges form a $\frac{\pi}{2}$ angle at their crossing point. In a straight-line RAC drawing, it is not possible for two adjacent edges to be crossed by the same edge. Thus, we observe that a straight-line RAC drawing is also fan-crossing-free.

The proof and construction provided for the bound on the fan-crossing-free crossing ratio, also holds for the straight-line RAC crossing ratio. Every time the proof states that the drawing cannot be fan-crossing-free, the drawing is also not straight-line RAC. Thus, we can conclude that a crossing minimal straight-line RAC drawing $\Gamma$ of $\Gell$ in Theorem~\ref{theorem:2fan}  contains at least $\ell^2$ crossings. In Figure~\ref{fig:rac} we show such a drawing $\Gamma$ with $\ell^2$ crossings exists. To show that the crossings do form angles of $\frac{\pi}{2}$, in Figure~\ref{fig:rac} we draw the $\ell_3$-bundles $\edge{x}{f_1}$, $\edge{x}{f_2}$, $\edge{x}{f_5}$, $\edge{u_2}{u_3}$ and $\edge{u_9}{u_{10}}$ in detail for $\ell=3$.

A straight-line RAC graphs can have at most $4n-10$ edges~\cite{racdrawings11}. Thus, straight-line drawings of RAC graphs have at most $(4n-10)^2$ crossings, giving $\varrho_\racsub \in O(n^2)$.

\begin{corollary}
\label{corollary:rac}
For every $\ell \geq 95$ there exists a straight-line RAC graph $G_\ell$ with $n = 45\ell + 16$ vertices such that $\crnum_\racsub(G_\ell) \geq \ell^2$ and $\crnum(G_\ell) \leq 2$. Thus, $\varrho_\racsub \in \Theta(n^2)$.
\end{corollary}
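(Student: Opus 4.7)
The plan is to reuse the same graph $G_\ell$ constructed in the proof of Theorem~\ref{theorem:2fan}, and observe that every straight-line RAC drawing is automatically fan-crossing-free. The key geometric fact is that if two edges are adjacent at a vertex $v$, they meet at some angle $\alpha<\pi$; any third edge crossing both of them at a right angle would have to be perpendicular to two distinct lines through $v$, which is impossible. Hence no edge can cross two edges sharing a common endpoint, so every straight-line RAC drawing is in particular fan-crossing-free (indeed $2$-fan-crossing-free).

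With this observation in hand, the lower bound is immediate: any straight-line RAC drawing of $G_\ell$ is a fan-crossing-free drawing of $G_\ell$, and the proof of Theorem~\ref{theorem:2fan} shows that every such drawing has at least $\ell^2$ crossings (since it invokes Lemma~\ref{lemma:bundle_fanfree} and the ensuing case analysis on the placement of the odd-indexed vertices with respect to the cycle $F$). The upper bound $\crnum(G_\ell)\leq 2$ is inherited directly from the drawing shown in Figure~\ref{fig:fan-crossing-free-appendix} (left), which uses only two crossings. What still needs to be verified, however, is that $G_\ell$ actually \emph{admits} a straight-line RAC drawing at all — otherwise $\crnum_\racsub(G_\ell)$ would be undefined or infinite and the claim $\crnum_\racsub(G_\ell)\geq \ell^2$ would be vacuous (or trivial) rather than meaningful.

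The main obstacle is therefore this realizability step: exhibiting an explicit straight-line RAC drawing of $G_\ell$ whose crossing count is $\ell^2$. I would follow the schematic of Figure~\ref{fig:fan-crossing-free-appendix} (right), and route the $\ell_2$-bundles and $\ell_3$-bundles as narrow pencils of nearly parallel straight segments. The $\ell^2$ crossings arise between two crossing $\ell_3$-bundles; I would orient these two bundles so that their mean directions are orthogonal, and then perturb the interior (degree-two) subdivision vertices of each path within its bundle so that every individual segment of one bundle meets every individual segment of the other at exactly a right angle. Because the interior vertices of bundle paths have degree two, each such vertex can be placed freely on a short perpendicular from the opposite bundle, so this perturbation can be carried out independently for each crossing. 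A detailed small case (e.g.\ $\ell=3$) would be drawn in Figure~\ref{fig:rac} to make the right-angle property explicit for the $\ell_3$-bundles $\edge{x}{f_1}$, $\edge{x}{f_2}$, $\edge{x}{f_5}$, $\edge{u_2}{u_3}$, and $\edge{u_9}{u_{10}}$, and the same local construction generalizes to arbitrary $\ell$.

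Finally, for the matching upper bound $\varrho_\racsub\in O(n^2)$, I would cite the edge bound $4n-10$ for straight-line RAC graphs~\cite{racdrawings11}: since the number of crossings in any simple drawing is at most the square of the number of edges, any crossing-minimal RAC drawing has $O(n^2)$ crossings, while $\crnum(G)\geq 1$ whenever $\crnum_\racsub(G)>\crnum(G)$. Combining the $\Omega(n^2)$ lower bound from the construction with the $O(n^2)$ upper bound yields $\varrho_\racsub\in \Theta(n^2)$, completing the corollary.
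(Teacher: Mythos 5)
Your proposal follows essentially the same route as the paper: observe that every straight-line RAC drawing is fan-crossing-free, reuse the graph $G_\ell$ and the lower-bound argument of Theorem~\ref{theorem:2fan}, certify realizability by exhibiting an explicit straight-line RAC drawing with right-angle crossings between the $\ell_3$-bundles (the paper's Figure~\ref{fig:rac}), and derive the $O(n^2)$ upper bound from the $4n-10$ edge bound of~\cite{racdrawings11}. The only caveat, which the paper itself shares, is the mismatch between the $\ell^2/2$ lower bound actually stated in Theorem~\ref{theorem:2fan} and the $\ell^2$ claimed for the RAC setting.
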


\begin{figure}[t]
    \centering
    \includegraphics[page=6]{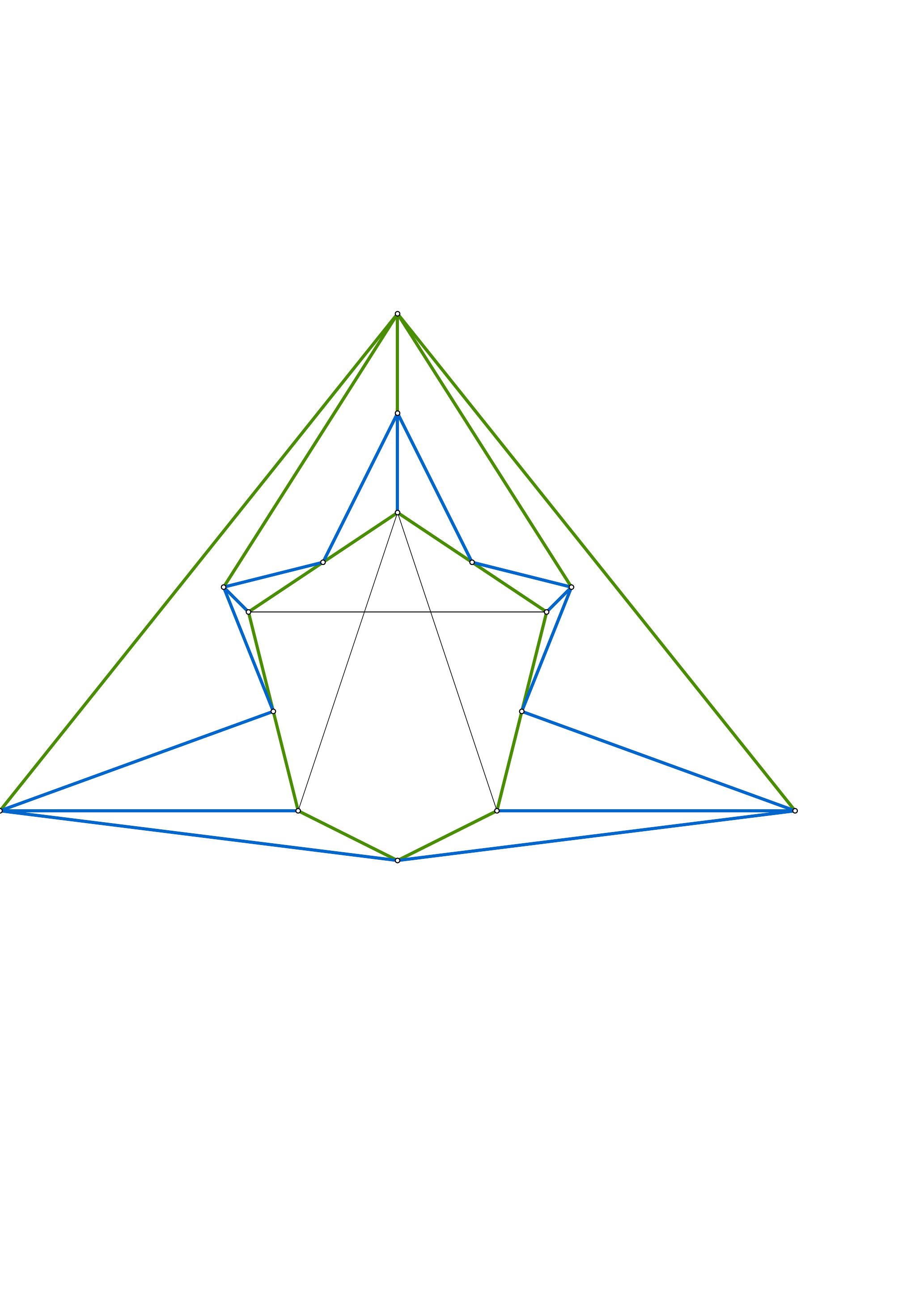}
    \caption{Straight-line RAC drawing of $\Gell$ with $\crnum_\racsub(\Gell) \geq \ell^2$ for $\ell = 3$. Thick red lines represent
    $\ell_2$-bundles and thick green lines $\ell_3$-bundles.}
    \label{fig:rac}
\end{figure}

\section{Beyond-Planar Graphs with Straight-Lines}
\label{sec:straight-lines}
The straight-line setting imposes an additional constraint on the drawings. Therefore, all upper bounds directly apply in the straight-line setting.
Lemma~\ref{lemma:to_rectilinear} shows that it is sufficient to draw each of our lower bound constructions with straight lines to establish the same lower bounds.
We say that two drawings of a graph are \emph{weakly isomorphic} if there is a one-to-one correspondence between their vertices and edges that is incidence-preserving and
crossing-preserving.

\begin{lemma}\label{lemma:to_rectilinear}
Let $G$ be a graph and $\Gamma$ be a drawing of $G$ with $x$ crossings, certifying that $\crnum(G) \leq x$.
Moreover, let
$\mathcal{F}$ be a beyond-planar family of graphs such that $\crnum_{\mathcal{F}}(G) \geq y$.
If there is a straight-line drawing of $G$ weakly isomorphic to $\Gamma$ and a straight-line drawing of $G$ in $\mathcal{F}$, then $\overline{\crnum}(G) \leq x$ and $\overline{\crnum}_{\mathcal{F}}(G) \geq y$.
\end{lemma}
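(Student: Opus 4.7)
The plan is to split the lemma into its two independent claims and establish each by applying the definitions directly. The two hypotheses (existence of a straight-line drawing weakly isomorphic to $\Gamma$, and existence of some straight-line drawing of $G$ in $\mathcal{F}$) feed into exactly one of the two inequalities each, so no interaction between them is needed.

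For the upper bound $\overline{\crnum}(G) \leq x$, I would take the straight-line drawing $\Gamma'$ of $G$ that is guaranteed to be weakly isomorphic to $\Gamma$. By the definition of weak isomorphism, the natural correspondence between the edges of $\Gamma$ and those of $\Gamma'$ is crossing-preserving, so $\Gamma'$ realizes exactly the same $x$ crossings as $\Gamma$. Because $\Gamma'$ is a straight-line drawing of $G$ with $x$ crossings, it immediately witnesses $\overline{\crnum}(G) \leq x$.

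For the lower bound $\overline{\crnum}_{\mathcal{F}}(G) \geq y$, I would simply observe that every straight-line drawing is in particular a drawing; hence any straight-line drawing of $G$ that belongs to $\mathcal{F}$ is an admissible competitor in the minimum defining $\crnum_{\mathcal{F}}(G)$. The assumption that at least one such drawing exists makes $\overline{\crnum}_{\mathcal{F}}(G)$ well-defined, and restricting to a strictly smaller class of admissible drawings can only raise (or preserve) the minimum, so $\overline{\crnum}_{\mathcal{F}}(G) \geq \crnum_{\mathcal{F}}(G) \geq y$.

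There is essentially no obstacle at the proof level: the lemma packages together the facts that straight-line drawings form a restriction of the class of all drawings, and that crossing counts are an invariant of weak isomorphism. The real work is not in this lemma but in its downstream applications, where one must verify for each lower-bound construction $G_\ell$ (or $G_\ell^k$) that both hypotheses are met, namely that the specific drawing with $\crnum(G_\ell) \leq x$ admits a straight-line realization that is weakly isomorphic to it, and that $G_\ell$ itself admits at least one straight-line drawing inside the family $\mathcal{F}$ in question. Those checks are left to the per-family sections and are not the responsibility of the lemma.
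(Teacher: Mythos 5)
Your proof is correct and follows essentially the same two-step argument as the paper: the weakly isomorphic straight-line drawing witnesses $\overline{\crnum}(G) \leq x$, and since straight-line drawings in $\mathcal{F}$ form a subclass of all drawings in $\mathcal{F}$, the restricted minimum satisfies $\overline{\crnum}_{\mathcal{F}}(G) \geq \crnum_{\mathcal{F}}(G) \geq y$. The paper's proof is a terser version of exactly this reasoning.
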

\begin{proof}
Since there is a straight-line drawing of $G$ which has the same number of crossing as $\Gamma$ it immediately follows that $\overline{\crnum}(G) \leq x$.
Drawing graphs with straight lines imposes additional constraints. Since $G$ has a straight-line drawing in $\mathcal{F}$ it follows that $\overline{\crnum}_\mathcal{F}(G) \geq y$.
\end{proof}
In the following, we either describe or directly draw straight-line versions of all drawings required for the lower bound constructions in this paper. This directly implies that all bounds also hold in the straight-line setting.

\mypar{$k$-Planar Graphs.}
We argue that the 1-planar graph $\Gell$ constructed by Chimani~\etal~\cite{chimani} still admits a straight-line drawing when edges are replaced with $k$-planar compound edges.

The graph $\Gell$ without the special edge is planar.
Thus, by Fary's theorem
it admits a plane straight-line drawing.
Moreover, as in Figure~\ref{fig:1-planar-chimani} on the left,
$\Gell$ can be drawn with straight-line edges and such that $\overline{\crnum}(G_\ell) = 2$.
Since $k$-planar compound edges can be drawn with straight-line edges and arbitrarily close to the original edges they replace, we have that $\Gell^k$
admits a straight-line drawing with $2k^2$ crossings.
%Thus we have $\overline{\crnum}(\Gell^k) \leq 2k^2$.
%
Thomassen showed that for every 1-planar drawing $\Gamma$ there exists a straight-line drawing weakly isomorphic to $\Gamma$ if and only if $\Gamma$ does not contain a $B$- or $W$-configuration~\cite{thomassen1988}. The 1-planar drawing in  Figure~\ref{fig:1-planar-chimani} on the right contains neither a $B$- or a $W$-configuration.
%Hence there is straight-line drawing weakly isomorphic to it.
Hence there is a straight-line drawing of $\Gell^k$ that is also $k$-plane.
Thus, by Lemma~\ref{lemma:to_rectilinear}, the bounds of Theorem 1 in~\cite{chimani} and Theorem~\ref{theorem:k-planar} also hold in the straight-line setting.

% \begin{corollary}
% For every $\ell \geq 7$ there exists a 1-planar graph $G_\ell$
% with $n= 11\ell + 2$ vertices
% such that
% $\overline{\crnum}_\kplanarsub{1}(G_\ell) = n - 2$ and $\overline{\crnum}(G_\ell) = 2$, which gives the largest possible rectilinear 1-planar crossing ratio, $\overline{\varrho}_\kplanarsub{1} = n/2 - 1$.
% % $\overline{\crnum}_{k\mathrm{-pl}}(G_\ell^k) = (n-2)k^4$ and $\overline{\crnum}(G_\ell) \leq 2k^4$. Thus, $\overline{\varrho}_{k\mathrm{-pl}} \geq n/2-1.$
% \end{corollary}

% \begin{corollary}
% For every $\ell \geq 7$ there exists a $k$-planar graph $G_\ell^k$
% with $n= 23\ell (k-1)k^2 + 11\ell + 2$ vertices
% such that
% $\overline{\crnum}_\kplanarsub{k}(G_\ell^k) = 11\ell k^4$ and $\overline{\crnum}(G_\ell^k) \leq 2k^2$.
% Thus, $\overline{\varrho}_\kplanarsub{k} \in \Omega (n/k)$.
% % $\overline{\crnum}_{k\mathrm{-pl}}(G_\ell^k) = (n-2)k^4$ and $\overline{\crnum}(G_\ell) \leq 2k^4$. Thus, $\overline{\varrho}_{k\mathrm{-pl}} \geq n/2-1.$
% \end{corollary}

\mypar{$k$-Quasi-Planar Graphs.}
Chimani~\etal \cite{chimani} showed that $\varrho_{\mathrm{quasi}} \in \Omega(n)$.
Figure~\ref{fig:quasi-straight} is a straight-line version of Figure~2 in~\cite{chimani}, the drawings they use to support this lower bound. Thus, by Lemma~\ref{lemma:to_rectilinear}, the statement of Theorem~4 in~\cite{chimani} also holds in the straight-line setting.
% \begin{corollary}\label{corollary:quasi-straight}
% For every $\ell \geq 2$ there exists a quasi-planar graph $G_\ell$ with $n = 12\ell - 5$ vertices such that $\overline{\crnum}_{\mathrm{quasi}}(G_\ell) \geq \ell$ and $\overline{\crnum}(G_\ell) \leq 3$, thus $\overline{\varrho}_{\mathrm{quasi}} \in \Omega(n)$.
% \end{corollary}
%
Chimani~\etal \cite{chimani} showed that $\varrho_{\mathrm{k-quasi}} \in \Omega(n/k^3)$.
For a different $k$ we can expand the drawings by adding more vertices between $u_2$ and $u_3$ and between $u_4$ and $u_5$ and rearranging the edges such that they all mutually cross, similarly to the argumentation for Corollary 6 in~\cite{chimani}. Again, by Lemma~\ref{lemma:to_rectilinear} the statement of their corollary holds in the straight-line setting.
%
% \begin{corollary}
% For every $\ell \geq 2$ there exists a $k$-quasi-planar graph $G_\ell$ with $n = 12\ell - 5$ vertices such that $\overline{\crnum}_\kquasisub{k}(G_\ell) \geq \ell$ and $\overline{\crnum}(G_\ell) \leq 3$, thus $\overline{\varrho}_\kquasisub{k} \in \Omega(n/k^3)$.
% \end{corollary}

\begin{figure}[h]
    \centering
    \begin{subfigure}{.45\textwidth}
        \centering
        \includegraphics[page=4]{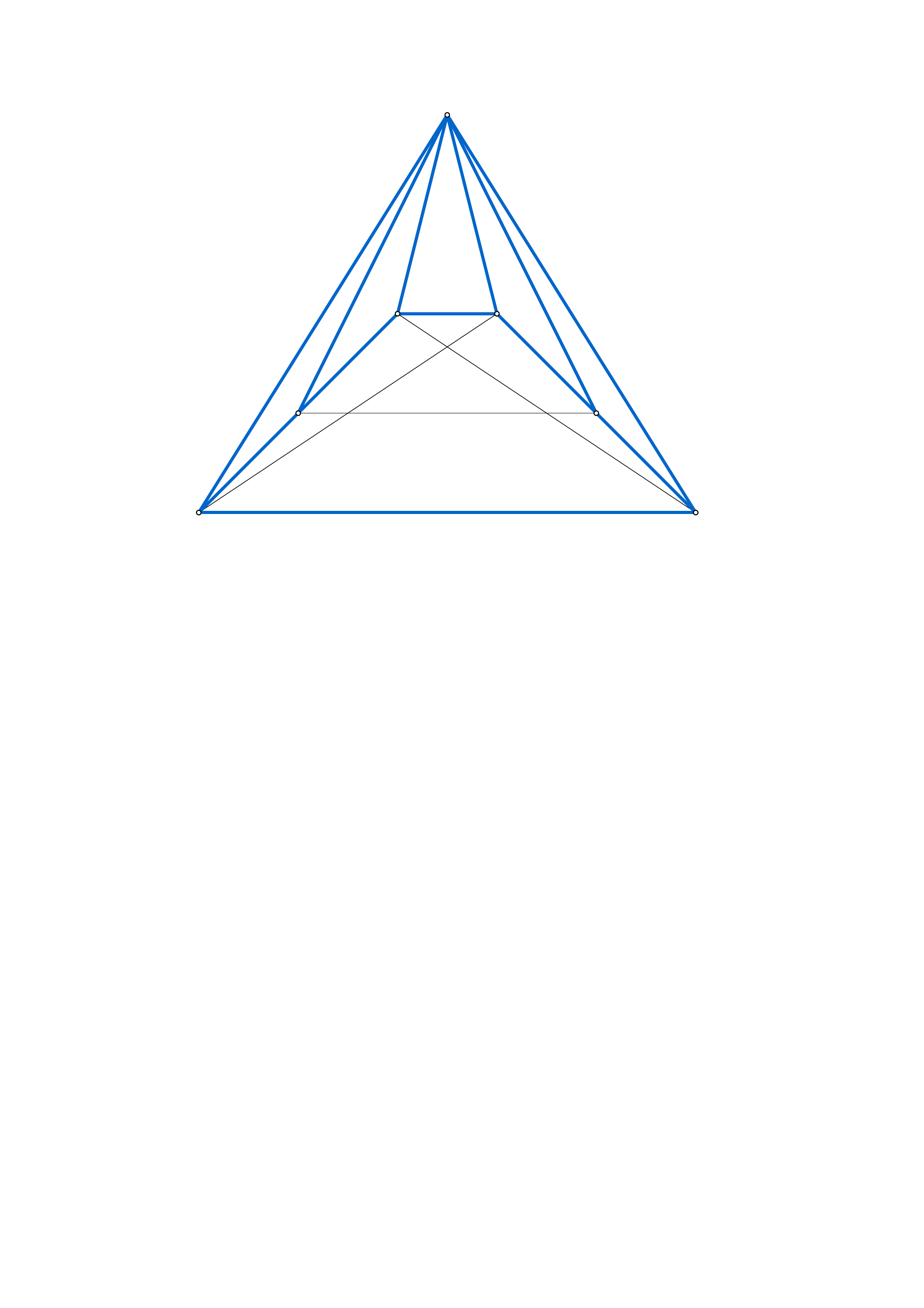}
        % \caption{$\overline{\crnum}(G_\ell) \leq 3$}
        % \label{fig:quasi-straight-left}
    \end{subfigure}\hfill
    \begin{subfigure}{.45\textwidth}
        \centering
        \includegraphics[page=5]{quasi-straight.pdf}
        % \caption{$\overline{\crnum}_{\mathrm{quasi}}(\Gell) \geq \ell$}
        % \label{fig:quasi-straight-right}
    \end{subfigure}
    \caption{Straight-line drawings of graph $\Gell$ that are weakly isomorphic to the drawings in Figure~2 in~\cite{chimani}. Left: straight-line drawing of $\Gell$ with $\overline{\crnum}(\Gell) \leq 3$; Right: straight-line quasi-planar drawing of $\Gell$ with $\overline{\crnum}_\quasisub(\Gell) \geq \ell$, $\ell$-compound edges are drawn with thick blue lines.}
    \label{fig:quasi-straight}
\end{figure}

\begin{figure}[b]
    \centering
    \begin{subfigure}{.45\textwidth}
        \centering
        \includegraphics[page=3]{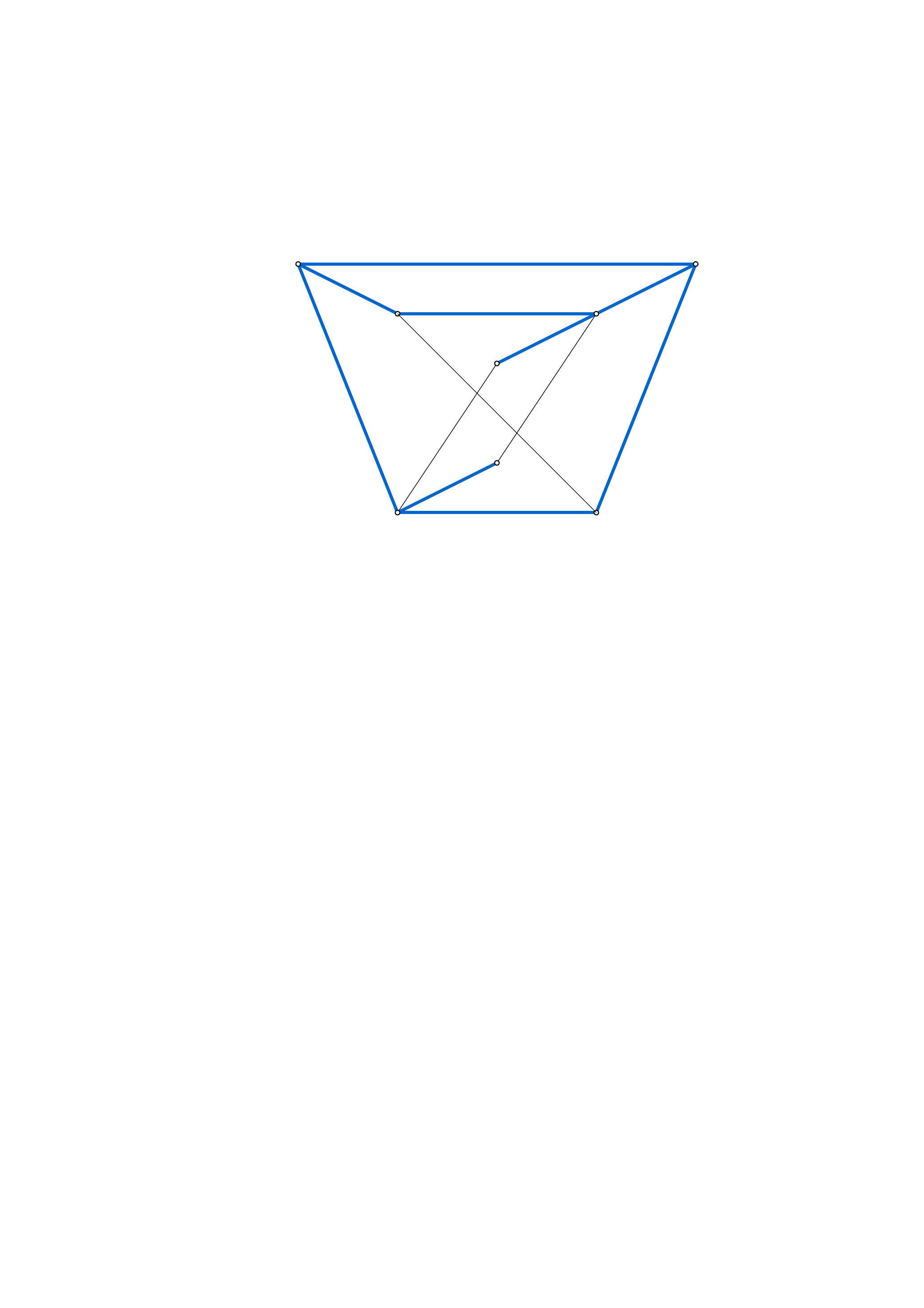}
        % \caption{$\overline{\crnum}(G_\ell) \leq 2$}
        % \label{fig:fan-straight-left}
    \end{subfigure}\hfill
    \begin{subfigure}{.45\textwidth}
        \centering
        \includegraphics[page=4]{fanpl-straight.pdf}
        % \caption{$\overline{\crnum}_{\mathrm{fan}}(\Gell) \geq \ell$}
        % \label{fig:fan-straight-right}
    \end{subfigure}
    \caption{
    Straight-line drawings of graph $\Gell$ that are weakly isomorphic to the drawings in Figure~3 in~\cite{chimani}. Left: straight-line drawing of $\Gell$ with $\overline{\crnum}(\Gell) \leq 2$; Right: straight-line fan-planar drawing of $\Gell$ with $\overline{\crnum}_\fanplsub(\Gell) \geq \ell$, $\ell$-compound edges are drawn with thick blue lines.}
    \label{fig:fan-straight}
\end{figure}

\mypar{Fan-Planar Graphs.}
Chimani~\etal \cite{chimani} showed that $\varrho_{\mathrm{fan}} \in \Omega(n)$.
Figure~\ref{fig:fan-straight} is a straight-line version of Figure~3 in~\cite{chimani}, the drawings they use to support this lower bound. Thus, by Lemma~\ref{lemma:to_rectilinear}, the statement of Theorem~7 in~\cite{chimani} also holds in the straight-line setting.
% \begin{corollary}\label{corollary:fan-straight}
% For every $\ell > 2$, there exists a fan-planar graph $\Gell$ with $n=9\ell + 1$ vertices such that $\overline{\crnum}_\fanplsub(\Gell) = \ell$ and $\overline{\crnum}(\Gell) = 3$, thus $\overline{\varrho}_\fanplsub \in \Omega(n)$.
% \end{corollary}

\mypar{$(k,l)$-Grid-Free Graphs.}
Figure~\ref{fig:kl-grid-free-straight} is a straight-line version of Figure~\ref{fig:kl-grid-free-appendix}. Thus, by Lemma~\ref{lemma:to_rectilinear}, the statement of Theorem~\ref{theorem:kl} also holds in the straight-line setting for $k=2,l=3$. For a different $k$ we can expand the drawings by adding more vertices between $u_1$ and $u_2$ and between $u_6$ and~$u_7$. For a different $l$ we can do the same between $u_3$ and $u_4$ and between $u_9$ and $u_{10}$.

\begin{figure}[h]
    \centering
    \begin{subfigure}{.45\textwidth}
        \centering
        \includegraphics[page=4]{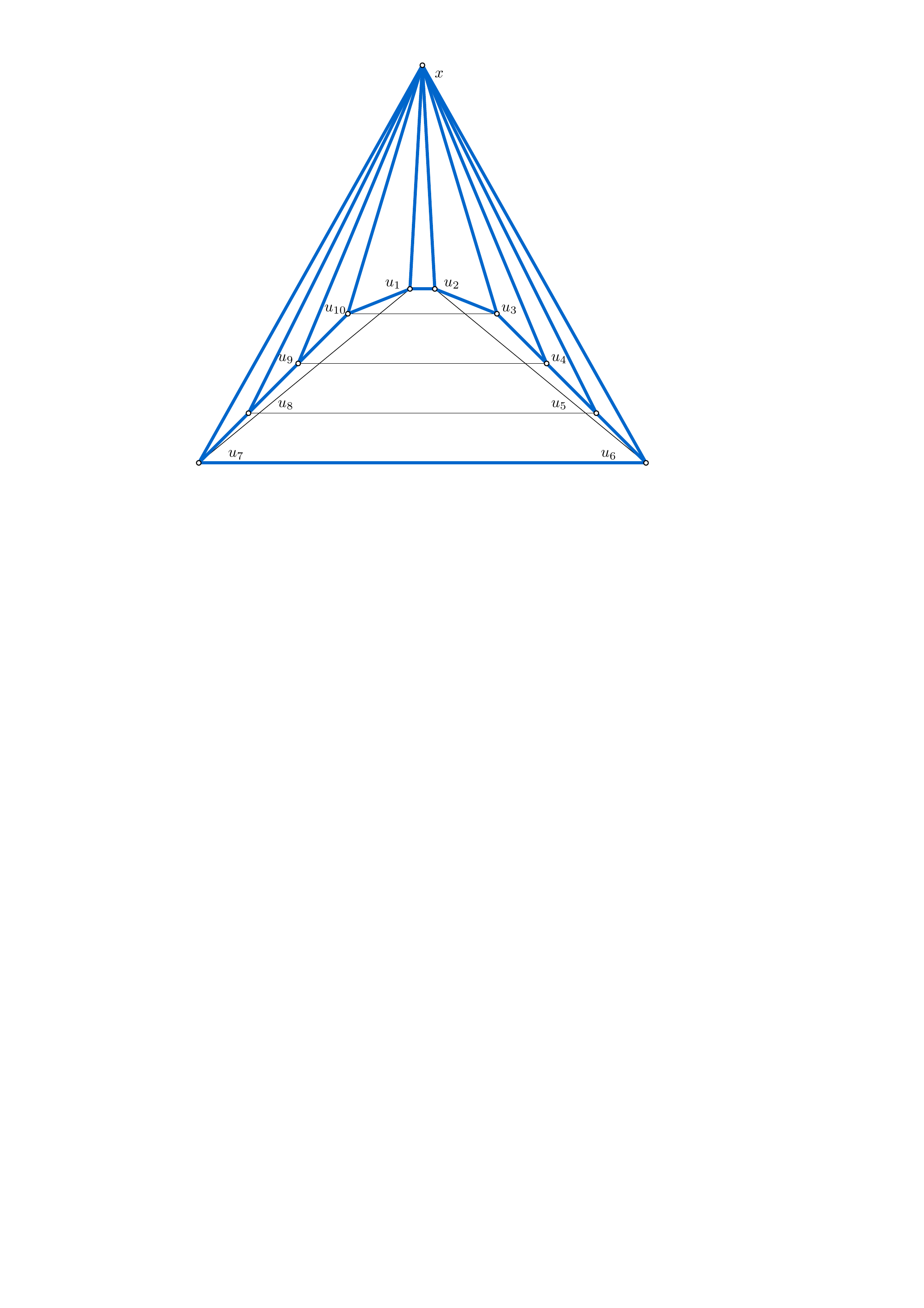}
        % \caption{$\overline{\crnum}(G_\ell) \leq 6$}
        % \label{fig:kl-grid-free-straight-left}
    \end{subfigure}\hfill
    \begin{subfigure}{.45\textwidth}
        \centering
        \includegraphics[page=5]{kl-grid-free-straight.pdf}
        % \caption{$\overline{\crnum}_{\mathrm{2,3-grid-free}}(\Gell) \geq \ell$}
        % \label{fig:kl-grid-free-straight-right}
    \end{subfigure}
    \caption{
    Straight-line drawings of graph $\Gell$ that are weakly isomorphic to the drawings in Figure~\ref{fig:kl-grid-free-appendix}. Left: straight-line drawing of $\Gell$ with $\overline{\crnum}(\Gell) \leq 6$; Right: straight-line 2,3-grid-free drawing of $\Gell$ with $\overline{\crnum}_\klgridsub{2}{3}(\Gell) \geq \ell$, $\ell$-compound edges are drawn with thick blue lines.}
    \label{fig:kl-grid-free-straight}
\end{figure}

% \begin{corollary}\label{corollary:gridfree-straight}
% For every $\ell \geq 2, k > 1, l > 1$ there exists a $k,l$-grid free graph $G_\ell$ with $n = 4(k+l)\ell - 2(k+l) + 1$ vertices such that $\overline{\crnum}_{k,l\mathrm{-grid-free}}(G_\ell) \geq \ell$ and $\overline{\crnum}(G_\ell) \leq k\cdot l$, thus $\overline{\varrho}_{k,l\mathrm{-grid-free}} \in \Omega\Big(\dfrac{n}{kl(k+l)}\Big)$.
% \end{corollary}

\mypar{$k$-Gap-Planar Graphs.}
Figure~\ref{fig:1-gap-straight-lines} is a straight-line version of Figure~\ref{fig:1gap-appendix}. Thus, by Lemma~\ref{lemma:to_rectilinear}, the statement of Theorem~\ref{theorem:1gap} also holds in the straight-line setting.
% \begin{corollary}\label{corollary:gap-straight}
% For every $\ell \geq 2$ there exists a 1-gap-planar graph $G_\ell$ with $n = 16\ell - 7$ vertices such that $\overline{\crnum}_{\mathrm{1-gap}}(\ell) \geq \ell$ and $\overline{\crnum}(G_\ell) \leq 6$. Thus, $\overline{\varrho}_{\mathrm{1-gap}} \in \Theta(n).$
% \end{corollary}
%
%
For a different $k$ we can expand the drawings by adding more vertices between $u_2$ and $u_3$ and between $u_4$ and $u_5$. Thus, by Lemma~\ref{lemma:to_rectilinear}, the statement of Theorem~\ref{theorem:kgap} also holds in the straight-line setting.
% \begin{corollary}
% For every $\ell \geq 2$ there exists a $k$-gap-planar graph $\Gell^k$ with $n = 16k\ell - 8k+1$ vertices such that $\overline{\crnum}_{k\mathrm{-gap}}(G_\ell) \geq \ell$ and $\overline{\crnum}(G_\ell) \leq 8k^2 - 2k$. Thus, $\overline{\varrho}_{k\mathrm{-gap}} \in \Omega(n/k^3).$
% \end{corollary}
\begin{figure}[h]
    \centering
    \begin{subfigure}{.45\textwidth}
        \centering
        \includegraphics[page=4]{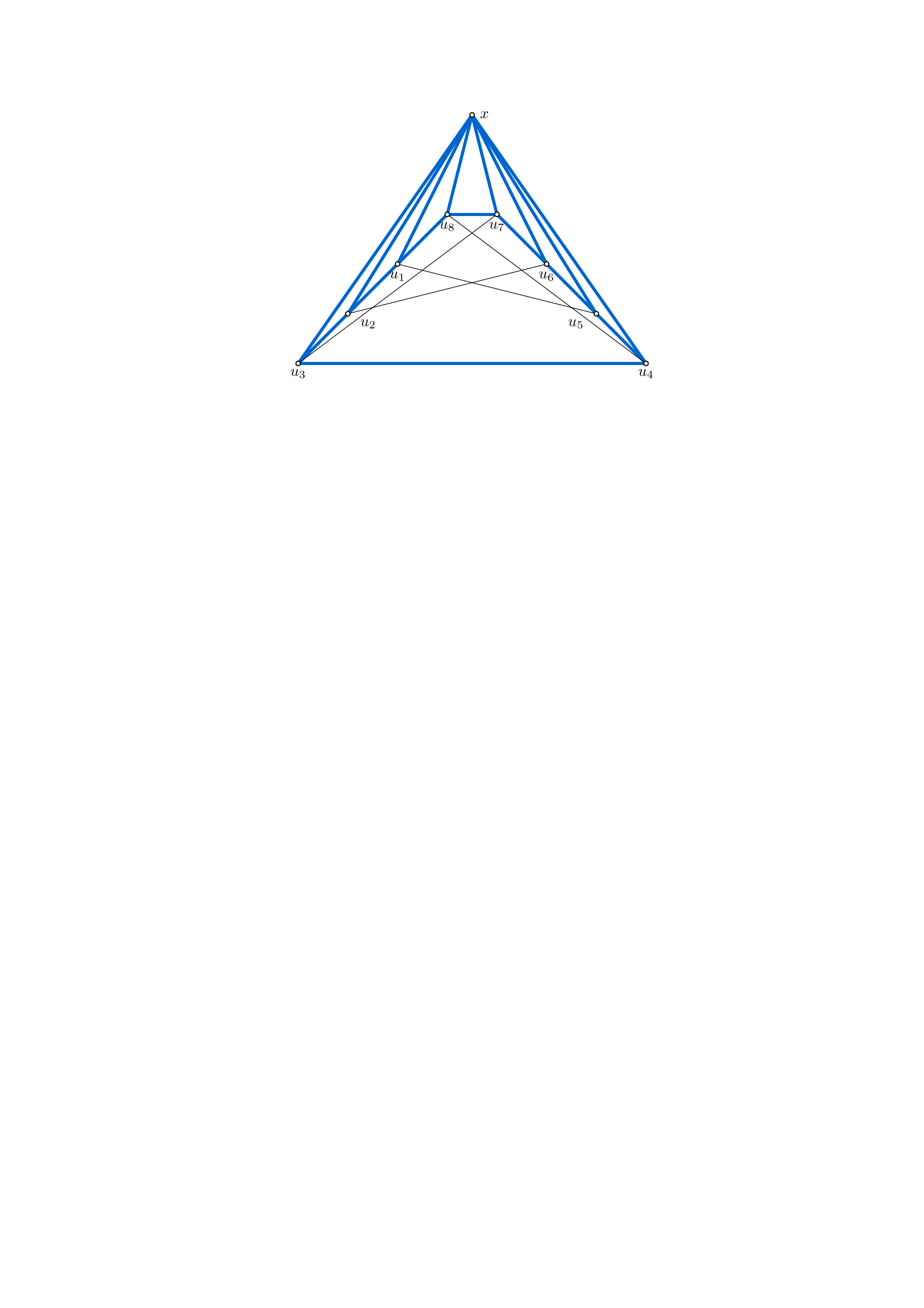}
        % \caption{$\overline{\crnum}(G_\ell) \leq 6$}
        % \label{fig:1-gap-straight-lines-left}
    \end{subfigure}\hfill
    \begin{subfigure}{.45\textwidth}
        \centering
        \includegraphics[page=5]{1-gap-straight}
        % \caption{$\overline{\crnum}_{\mathrm{1-gap}}(\Gell) \geq \ell$}
        % \label{fig:1-gap-straight-lines-right}
    \end{subfigure}
    \caption{Straight-line drawings of graph $\Gell$ that are weakly isomorphic to the drawings in Figure~\ref{fig:1gap-appendix}. Left: straight-line drawing of $\Gell$ with $\overline{\crnum}(\Gell) \leq 6$; Right: straight-line 1-gap-planar drawing of $\Gell$ with $\overline{\crnum}_\kgapsub{1}(\Gell) \geq \ell$, $\ell$-compound edges are drawn with thick blue lines.}
    \label{fig:1-gap-straight-lines}
\end{figure}

\mypar{Skewness-$k$ Graphs.}
Figure~\ref{fig:skewness-straight} is a straight-line version of Figure~\ref{fig:1-skewness-appendix}. Thus, by Lemma~\ref{lemma:to_rectilinear}, the statement of Theorem~\ref{theorem:skew} also holds in the straight-line setting.
%
% \begin{corollary}\label{corollary:skew-straight}
% For every $\ell \geq 3$ there exists a skewness-$1$ graph $\Gell$ with $n = 11\ell - 4 $ vertices such that $\overline{\crnum}_{\mathrm{skew}-1}(\Gell) \geq \ell$ and $\overline{\crnum}(\Gell) \leq 3$. Thus, $\overline{\varrho}_{\mathrm{skew-}1} \in \Theta(n)$.
% \end{corollary}
%
For a different $k$ we can expand the drawings by adding more instances of $\Gell$ to~$x$. Thus, by Lemma~\ref{lemma:to_rectilinear}, that the statement of Theorem~\ref{theorem:kskew} also holds in the straight-line setting.
% \begin{corollary}
% For every $\ell \geq 3$ there exists a skewness-$k$ graph $\Gell^k$ with $n = 11\ell k - 5k + 1$ vertices such that $\overline{\crnum}_{\mathrm{skew}-k}(\Gell^k) \geq \ell k$ and $\overline{\crnum}(\Gell^k) \leq 3k$. Thus, $\overline{\varrho}_{\mathrm{skew-}k} \in \Omega(n/k)$.
% \end{corollary}
\begin{figure}[h]
    \centering
    \begin{subfigure}{.45\textwidth}
        \centering
        \includegraphics[page=4]{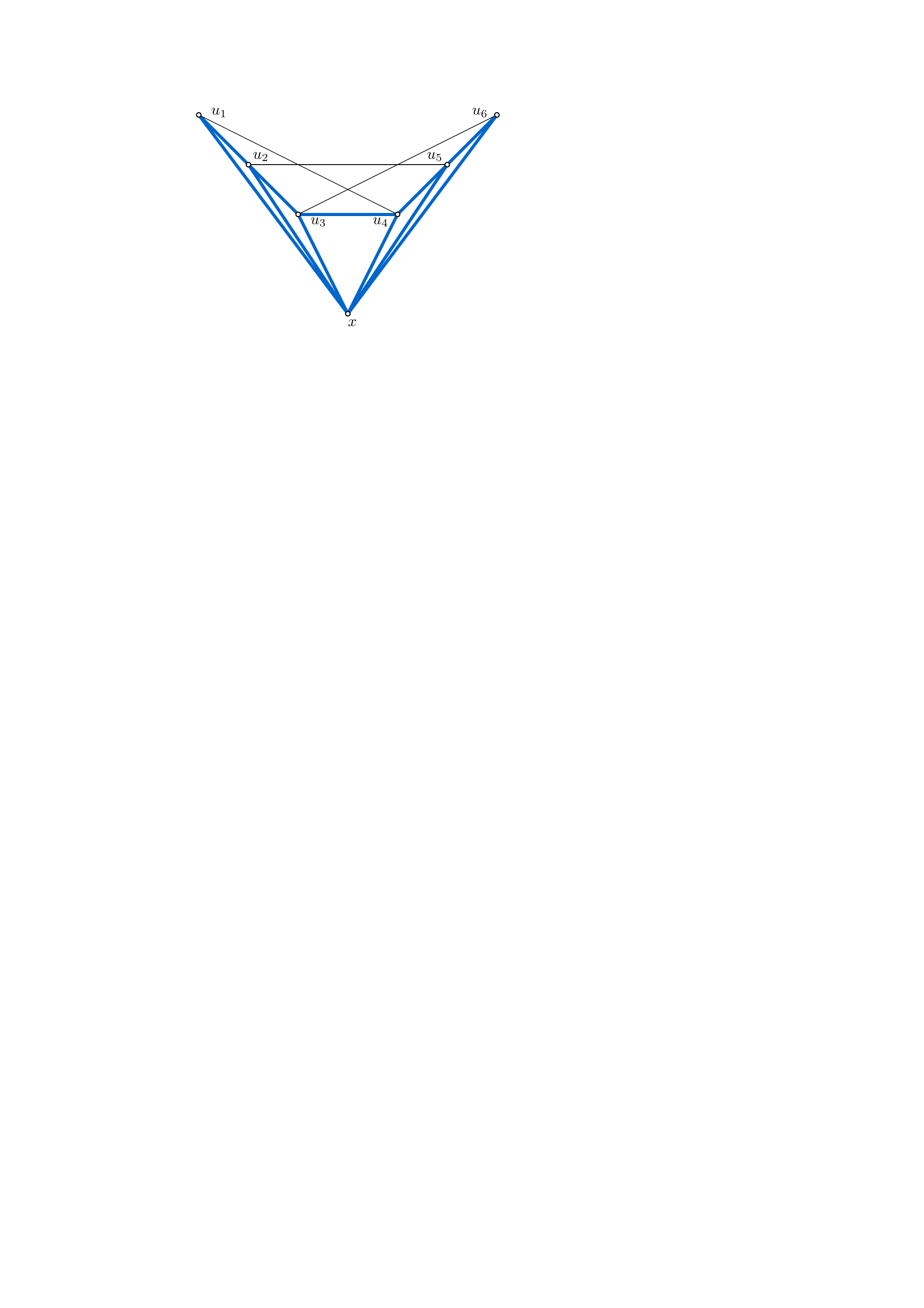}
        % \caption{$\overline{\crnum}(G_\ell) \leq 3$}
        % \label{fig:skewness-straight-left}
    \end{subfigure}\hfill
    \begin{subfigure}{.45\textwidth}
        \centering
        \includegraphics[page=5]{1-skewness-straight.pdf}
        % \caption{$\overline{\crnum}_{\mathrm{skew-1}}(\Gell) \geq \ell$}
        % \label{fig:skewness-straight-right}
    \end{subfigure}
    \caption{Straight-line drawings of graph $\Gell$ that are weakly isomorphic to the drawings in Figure~\ref{fig:1-skewness-appendix}. Left: straight-line drawing of $\Gell$ with $\overline{\crnum}(\Gell) \leq 3$; Right: straight-line skewness-1 drawing of $\Gell$ with $\overline{\crnum}_\kskewsub{1}(\Gell) \geq \ell$, $\ell$-compound edges are drawn with thick blue lines.}
    \label{fig:skewness-straight}
\end{figure}

\mypar{$k$-Apex Graphs.}
Figure~\ref{fig:skewness-straight} is a straight-line version of Figure~\ref{fig:1-skewness-appendix}. Thus, by Lemma~\ref{lemma:to_rectilinear}, the statement of Corollary~\ref{corollary:1apex} also holds in the straight-line setting.
% \begin{corollary}
% For every $\ell \geq 3$ there exists a 1-apex graph $\Gell$ with $n = 11\ell - 4 $ vertices such that $\overline{\crnum}_{\mathrm{1-apex}}(\Gell) \geq \ell$ and $\overline{\crnum}(\Gell) \leq 3$. Thus, $\overline{\varrho}_{\mathrm{1-apex}} \in \Omega(n)$.
% \end{corollary}
For a different $k$ we can expand the drawings by adding more instances of $\Gell$ to $x$. Thus, by Lemma~\ref{lemma:to_rectilinear}, that the statement of Corollary~\ref{corollary:kapex} also holds in the straight-line setting.
% \begin{corollary}
% For every $\ell \geq 3$ there exists a $k$-apex graph $\Gell^k$ with $n = 11\ell k - 5k + 1$ vertices such that $\overline{\crnum}_{k\mathrm{-apex}}(\Gell^k) \geq \ell k$ and $\overline{\crnum}(\Gell^k) \leq 3k$. Thus, $\overline{\varrho}_{k\mathrm{-apex}} \in \Omega(n/k)$.
% \end{corollary}

\mypar{Planarly Connected Graphs.}
Figure~\ref{fig:planarly-connected-straight} is a straight-line version of Figure~\ref{fig:planarly-connected-appendix}. Thus, by Lemma~\ref{lemma:to_rectilinear}, the statement of Theorem~\ref{theorem:planarcon} also holds in the straight-line setting.

% \begin{corollary}\label{corollary:plcon-straight}
% For every $\ell \geq 2$ there exists a planarly connected graph $G_\ell$ with $n = 7\ell + 6$ vertices such that $\overline{\crnum}_\plconsub(G_\ell) \geq \ell^2$ and $\overline{\crnum}(G_\ell) \leq 1$. Thus, $\overline{\varrho}_\plconsub \in \Theta(n^2)$.
% \end{corollary}

\begin{figure}[h]
    \centering
    \begin{subfigure}{.45\textwidth}
        \centering
        \includegraphics[page=5]{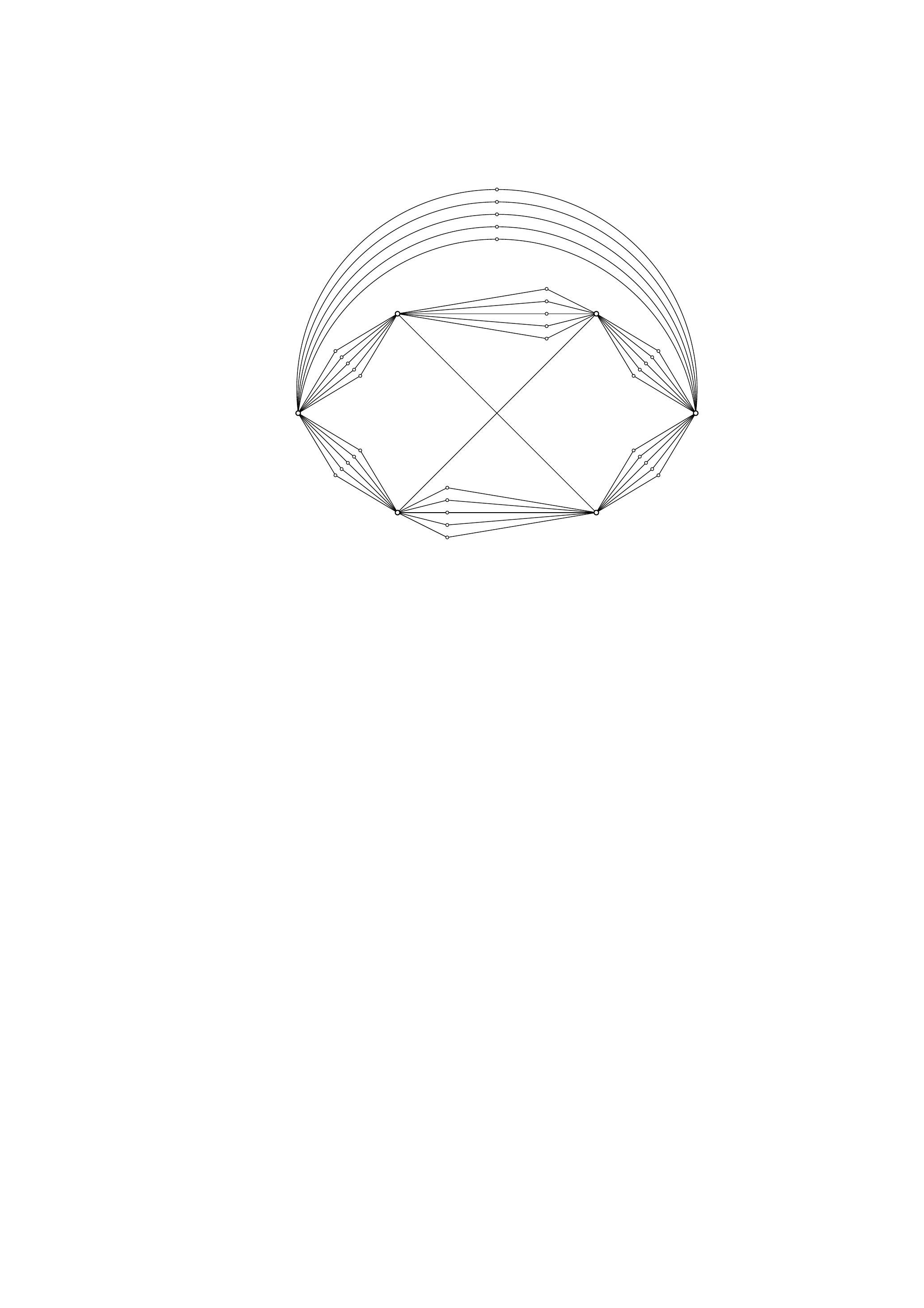}
        % \caption{$\overline{\crnum}(G_\ell) \leq 1$}
        % \label{fig:planarly-connected-straight-left}
    \end{subfigure}\hfill
    \begin{subfigure}{.45\textwidth}
        \centering
        \includegraphics[page=6]{new_plcon-straight.pdf}
        % \caption{$\overline{\crnum}_\plconsub \geq \ell^2$}
        % \label{fig:planarly-connected-straight-right}
    \end{subfigure}
    \caption{Straight-line drawings of graph $\Gell$ that are weakly isomorphic to the drawings in Figure~\ref{fig:planarly-connected-appendix}. Left: straight-line drawing of $\Gell$ with $\overline{\crnum}(\Gell) \leq 1$; Right: straight-line planarly connected drawing of $\Gell$ with $\overline{\crnum}_\plconsub(\Gell) \geq \ell^2$, $\ell_2$-bundles are drawn with thick red lines.}
    \label{fig:planarly-connected-straight}
\end{figure}

\mypar{$k$-Fan-Crossing-Free Graphs.}
Figure~\ref{fig:fan-free-straight-lines} is a straight-line version of Figure~\ref{fig:fan-crossing-free-appendix}. Thus, by Lemma~\ref{lemma:to_rectilinear}, the statement of Theorem~\ref{theorem:2fan} also holds in the straight-line setting.
% \begin{corollary}\label{corollary:fanfree-straight}
% For every $\ell \geq 95$ there exists a fan-crossing-free graph $G_\ell$ with $n = 45\ell + 16$ vertices such that $\overline{\crnum}_\fanfreesub(G_\ell) \geq \ell$ and $\overline{\crnum}(G_\ell) \leq 2$. Thus, $\overline{\varrho}_\fanfreesub \in \Theta(n^2)$.
% \end{corollary}
%
For a different $k$ we can expand the drawings in Figure~\ref{fig:fan-free-straight-lines} by making the cycle of length $3+k$, adding the extra vertices between $u_3$ and $u_4$, and adding more edges like $\edge{u_1}{u_3}$ and $\edge{u_1}{u_4}$.
% \begin{corollary}
% For every $\ell \geq 18k^2 + 40k-57$ there exists a $k$-fan-crossing-free graph $G_\ell$ with $n = 9k\ell + 27\ell  + 3k + 10$ vertices such that $\overline{\crnum}_\kfanfreesub{k}(G_\ell) \geq \ell^2$ and $\overline{\crnum}(G_\ell) \leq k$. Thus, $\overline{\varrho}_\kfanfreesub{k} \in \Omega(n^2/k^3)$.
% \end{corollary}

\begin{figure}[h]
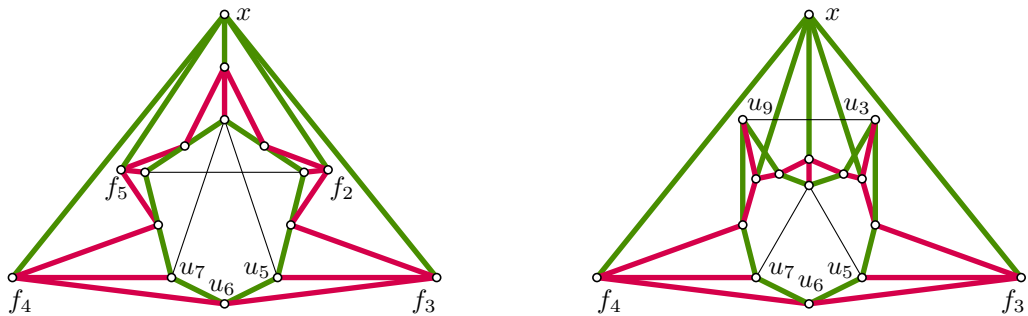

    \centering
    \begin{subfigure}{.45\textwidth}
        \centering
        \includegraphics[page=4]{fan-crossing-free-straight.pdf}
        % \caption{$\overline{\crnum}(G_\ell) \leq 2$}
        % \label{fig:fan-free-straight-lines-left}
    \end{subfigure}\hfill
    \begin{subfigure}{.45\textwidth}
        \centering
        \includegraphics[page=5]{fan-crossing-free-straight.pdf}
        % \caption{$\overline{\crnum}_\fanfreesub(\Gell) \geq \ell^2$}
        % \label{fig:fan-free-straight-lines-right}
    \end{subfigure}
    \caption{Straight-line drawings of graph $\Gell$ that are weakly isomorphic to the drawings in Figure~\ref{fig:fan-crossing-free-appendix}. Left: straight-line drawing of $\Gell$ with $\overline{\crnum}(\Gell) \leq 2$; Right: straight-line fan-crossing-free drawing of $\Gell$ with $\overline{\crnum}_\fanfreesub(\Gell) \geq \ell^2$. Thick red and green lines represent
    $\ell_2$- and $\ell_3$-bundles.}
    \label{fig:fan-free-straight-lines}
\end{figure}

\section{Conclusion}

We studied the relation between the crossing number restricted to beyond-planar drawings and the (unrestricted) crossing number and established a number of new lower bounds on the crossing ratio for several classes of beyond-planar graphs. Our results are summarized in Table~\ref{tab:results}. The bounds printed in bold font are tight in the number of vertices $n$. An obvious open question is to improve all bounds which are not tight. Even if a bound is tight in $n$ it might not be tight in $k$, the parameter which describes the graph class in question. In fact, not a single bound is currently tight in $k$ for non-constant $k$.
Furthermore, it might be of interest to investigate what aspects of a beyond-planar family cause a quadratic instead of a linear crossing ratio. Last but not least, the upper bounds generally assume that drawings are simple. We conjecture that this restriction is in fact not needed and that the bounds hold also in the non-simple setting.

\bibliography{journalbiblio}

\end{document}